\documentclass[a4paper,UKenglish,autoref]{lipics-v2019}

\usepackage[dvipsnames]{xcolor}
\hypersetup{
    colorlinks=true,
    linkcolor=Red,
    linkbordercolor=BrickRed,
    citecolor=PineGreen,
    citebordercolor=OliveGreen,
    urlcolor=MidnightBlue}

\usepackage[noend]{algpseudocode}
\usepackage{algorithm}
\usepackage{algorithmicx}
\usepackage{thm-restate}
\usepackage{mathtools}
\usepackage[colorinlistoftodos]{todonotes}
\usepackage[T1]{fontenc}
\usepackage{relsize}
\usepackage{footmisc}
\usetikzlibrary{
  shapes.multipart,
  matrix,
  positioning,
  shapes.callouts,
  shapes.arrows,
  calc}

\def\bo{\mathcal{O}}
\def\Bo{\mathlarger{\mathcal{O}}}

\DeclarePairedDelimiter\floor{\lfloor}{\rfloor}
\newcommand{\func}[1]{\texttt{\textsc{#1}}}
\def\poly{\mathrm{poly}}

\def\filled{\textup{\textsf{filled}}}
\def\unfilled{\textup{\textsf{unfilled}}}

\def\ZERO{\mathsf{0}}
\def\ONE{\mathsf{1}}
\def\TRUE{\mathsf{true}}
\def\FALSE{\mathsf{false}}
\def\PHI{\phi}
\def\LAST{\mathbf{last}}

\def\ADJ{\mathbf{A}}

\usepackage{pifont}

\newcommand{\BD}{$\textsf{??}$}
\newcommand{\UBD}{{\Large \text{\ding{55}}}}

\algrenewcomment[1]{\hfill \textcolor{Brown}{\(\triangleright\) #1}}

\usepackage{bm}
\usepackage{wrapfig, framed, caption}
\usepackage{soul}
\sethlcolor{Red!20}

\renewcommand{\vec}[1]{\mathbf{#1}}
\graphicspath{{./images/}}

\theoremstyle{plain}
\newtheorem{invariant}[theorem]{Invariant}

\pagenumbering{roman} % Start roman numbering
\bibliographystyle{plainurl}

\title{Local Access to Huge Random Objects through Partial Sampling}

\date{}
\author{Amartya Shankha Biswas}{CSAIL, MIT}{asbiswas@mit.edu}{https://orcid.org/0000-0002-5068-1524}{MIT Presidential Fellowship}
\author{Ronitt Rubinfeld}{CSAIL, MIT}{ronitt@csail.mit.edu}{}{NSF grants CCF-1650733, CCF-1733808, IIS-1741137 and CCF-1740751}
\author{Anak Yodpinyanee}{CSAIL, MIT}{anak@csail.mit.edu}{https://orcid.org/0000-0002-7572-2003}
{NSF grants CCF-1650733, CCF-1733808, IIS-1741137 and DPST scholarship, Royal Thai Government}

\authorrunning{A. Shankha Biswas, R. Rubinfeld, A. Yodpinyanee}

\Copyright{Amartya Shankha Biswas, Ronitt Rubinfeld, Anak Yodpinyanee}%TODO mandatory, please use full first names. LIPIcs license is "CC-BY";  http://creativecommons.org/licenses/by/3.0/

\ccsdesc[100]{Theory of computation~Generating random combinatorial structures}
\ccsdesc[100]{Theory of computation~Streaming, sublinear and near linear time algorithms}

\keywords{sublinear time algorithms, random generation, local computation}%TODO mandatory; please add comma-separated list of keywords

\nolinenumbers

%\hideLIPIcs  %uncomment to remove references to LIPIcs series (logo, DOI, ...), e.g. when preparing a pre-final version to be uploaded to arXiv or another public repository

%Editor-only macros:: begin (do not touch as author)%%%%%%%%%%%%%%%%%%%%%%%%%%%%%%%%%%
%\EventEditors{Thomas Vidick}
%\EventNoEds{1}
%\EventLongTitle{11th Innovations in Theoretical Computer Science Conference (ITCS 2020)}
%\EventShortTitle{ITCS 2020}
%\EventAcronym{ITCS}
%\EventYear{2020}
%\EventDate{January 12--14, 2020}
%\EventLocation{Seattle, Washington, USA}
%\EventLogo{}
%\SeriesVolume{151}
%\ArticleNo{27}
%%%%%%%%%%%%%%%%%%%%%%%%%%%%%%%%%%%%%%%%%%%%%%%%%%%%%%

\begin{document}

\maketitle

\begin{abstract}

Consider an algorithm performing a computation on a \emph{huge random object} (for example a random graph or a ``long'' random walk).
Is it necessary to generate the entire object prior to the computation,
or is it possible to provide query access to the object and sample it incrementally ``on-the-fly'' (as requested by the algorithm)?
Such an \emph{implementation} should emulate the random object by answering queries in a manner consistent with
an instance of the random object sampled from the true distribution (or close to it).
This paradigm is useful when the algorithm is sub-linear and thus, sampling the entire object up front would ruin its efficiency.

Our first set of results focus on undirected graphs with independent edge probabilities,
i.e. each edge is chosen as an independent Bernoulli random variable.
We provide a general implementation for this model under certain assumptions.
Then, we use this to obtain the first efficient local implementations for the Erd\"os-R\'enyi $G(n,p)$ model for \emph{all} values of $p$,
and the Stochastic Block model.
As in previous local-access implementations for random graphs, we support \func{Vertex-Pair} and \func{Next-Neighbor} queries.
In addition, we introduce a new \func{Random-Neighbor} query.
Next, we give the first local-access implementation for \func{All-Neighbors} queries in the (sparse and directed) Kleinberg's Small-World model.
Our implementations require no pre-processing time, and answer each query using $ \mathcal{O}(\poly(\log n)) $ time, random bits, and additional space.

Next, we show how to implement random Catalan objects, specifically focusing on Dyck paths
(balanced random walks on the integer line that are always non-negative).
Here, we support \func{Height} queries to find the location of the walk,
and \func{First-Return} queries to find the time when the walk returns to a specified location.
This in turn can be used to implement \func{Next-Neighbor} queries on random rooted ordered trees,
and \func{Matching-Bracket} queries on random well bracketed expressions (the Dyck language).

Finally, we introduce two features to define a new model that:
(1) allows multiple independent (and even simultaneous) instantiations of the same implementation,
to be consistent with each other without the need for communication,
(2) allows us to generate a richer class of random objects that do not have a succinct description.
Specifically, we study uniformly random \emph{valid} $q$-colorings of an input graph $G$ with maximum degree $\Delta$.
This is in contrast to prior work in the area, where the relevant random objects are defined as a distribution with $\mathcal O(1)$ parameters
(for example, $n$ and $p$ in the $G(n,p)$ model).
The distribution over valid colorings is instead specified via a ``huge'' input (the underlying graph $G$),
that is far too large to be read by a sub-linear time algorithm.
Instead, our implementation accesses $G$ through local neighborhood probes,
and is able to answer queries to the color of any given vertex in sub-linear time for $q\ge 9\Delta$,
in a manner that is consistent with a specific random valid coloring of $G$.
Furthermore, the implementation is memory-less, and can maintain consistency with non-communicating copies of itself.

\end{abstract}

\newpage

\begingroup
  \tableofcontents
\endgroup
\newpage

\pagenumbering{arabic} % Switch to normal numbers
\section{Introduction}
Consider an algorithm performing a computation on a \emph{huge random object} (for example a huge random graph or a ``long'' random walk).
Is it necessary to generate the entire object prior to the computation,
or is it possible to provide local query access to the object and generate it incrementally ``on-the-fly'' (as requested by the algorithm)?
Such an \emph{implementation} would ideally emulate the random object by answering appropriate queries,
in a manner that is consistent with a specific instance of the random object, sampled from the true distribution.
This paradigm is useful when we wish to simulate a sub-linear algorithm on a random object,
since sampling the entire object up front would ruin its efficiency.

In this work, we focus on generating huge random objects in a number of new settings,
including basic random graph models that were not previously considered, Catalan objects, and random colorings of graphs.
One emerging theme that we develop further is to provide access to random objects through more complex yet natural queries.
For example, consider an implementation for Erd\"os-R\'enyi $G(n,p)$ random graphs,
where the simplest query \func{Vertex-Pair}$(u,v)$ would ask about the existence of edge $(u,v)$,
which can be answered trivially by flipping a coin with bias $p$ (thus revealing a single entry in the adjacency matrix).
However, many applications involving non-dense graphs would benefit from \emph{adjacency list} access,
which we provide through \func{Next-Neighbor} and \func{Random-Neighbor} queries\footnote{
\label{foot:graph_queries}
\func{Vertex-Pair}$(u,v)$ returns whether $u$ and $v$ are adjacent, \func{Next-Neighbor}$(v)$ returns a new neighbor of $v$ each time
it is invoked (until none is left), and \func{Random-Neighbor}$(v)$ returns a uniform random neighbor of $v$ (if $v$ is not isolated).}

The problem of sampling partial information about huge random objects was pioneered in \cite{huge,huge_old,huge_journal},
through the implementation of \emph{indistinguishable} pseudo-random objects of exponential size.
Further work in \cite{sparse,reut} considers the implementation of different random graph models.
\cite{reut} introduced the study of $\func{Next-Neighbor}$ queries which provide efficient access to the adjacency list representation.
In addition to supporting \func{Vertex-Pair} and \func{Next-Neighbor},
we also introduce and implement a new \func{Random-Neighbor} query for undirected graphs with independent edge probabilities.

%To support the implementation of a richer class of combinatorial objects, such as random colorings of an input graph $G$,
%we define a new model for implementing random objects which have huge description size (the underlying graph $G$).
%In this setting, our algorithms are unable to read the full input,
%but can still answer queries to a valid sampled object using a sub-linear number of local probes to the input description.

Finally, we define a new model that allows us to generate a richer class of random objects that do not have a succinct description.
Specifically, we study uniformly random \emph{valid} $q$-colorings of an input graph $G$ with max degree $\Delta$.
This is in contrast to prior work in the area, where random objects are defined as a distribution with $\mathcal O(1)$ parameters
(for example, $n$ and $p$ in the $G(n,p)$ model).
The distribution over valid colorings is instead specified via a ``huge'' input (the underlying graph $G$),
that is too large to be read by a sub-linear time algorithm.
Instead, our implementation accesses $G$ using local neighborhood probes.

This new model can be compared to \emph{Local Computation Algorithms}, which also implement query access to a consistent valid solution,
and read their input using local probes.
Inspired by this connection, we extend our model to support \emph{memory-less} implementations.
This allows multiple independent (possibly simultaneous) instantiations to agree on the same random object, without any communication.
We show how to implement local access to color of any given node in a random coloring, using sub-linear resources.
Unlike LCAs which can generate an \emph{arbitrary} valid solution, our model requires a uniformly random one.

\paragraph*{Random Graphs (Section~\ref{sec:undirected} and Section~\ref{sec:small_world})}%
\label{par:random_graphs}
Random graphs are one of the most well studied types of random object.
We consider the problem of implementing local access to a number of fundamental random graph models,
through natural queries, including \func{Vertex-Pair}, \func{Next-Neighbor},
and a newly introduced \func{Random-neighbor} query\footref{foot:graph_queries}, using polylogarithmic resources per query.
Our results on random graphs are summarized in Table~\ref{table:graph_results}.

\subparagraph*{Undirected Random Graphs with Independent Edge Probabilities:}
\label{par:undirected_random_graphs_with_independent_edge_probabilities}
We implement the aforementioned queries for the generic class of \emph{undirected graphs} with \emph{independent edge probabilities}
$\left\{ p_{uv} \right\}_{u,v\in V}$, where $p_{uv}$ denotes the probability that there is an edge between $u$ and $v$.
Under reasonable assumptions on the ability to compute certain values pertaining to consecutive edge probabilities,
our implementations support \func{Vertex-Pair}, \func{Next-Neighbor}, and \func{Random-Neighbor} queries\footref{foot:graph_queries},
using $\mathcal{O}(\poly(\log n))$ resources.
Note that in this setting, $\func{Vertex-Pair}$ queries are trivial by themselves,
since the existence of an edge depends on an independent random variable.
However, maintaining all three types of queries simultaneously is much harder.
As in \cite{reut} (and unlike many of the implementations presented in \cite{huge_old,huge}), our techniques allow unlimited queries,
and the generated random objects are sampled from the true distribution (rather than just being indistinguishable).
In particular, our construction yields local-access implementations for the Erd\"{o}s-R\'{e}nyi $G(n,p)$ model (for \emph{all} values of $p$),
and the Stochastic Block model with random community assignment.

\begin{table}[htpb]
\centering
\renewcommand{\arraystretch}{1}
\begin{tabular}{| >{\centering\arraybackslash}m{120pt} || >{\centering\arraybackslash}m{50pt} | >{\centering\arraybackslash}m{60pt} | >{\centering\arraybackslash}m{70pt} | >{\centering\arraybackslash}m{60pt} |}
    \hline
    \textbf{Model}              & \func{Vertex-Pair} & \func{Next-Neighbor} & \func{Random-Neighbor} & \func{All-Neighbors} \\      \hline \hline
    $G(n,p)$ with $p=\frac{\log^{\mathcal O(1)} n}{n}$
                                & \cite{sparse}      & \cite{sparse}        & \cite{sparse}          & \cite{sparse}        \\[5pt] \cline{1-5}
    $G(n,p)$ for arbitrary $p$  & This paper         & This paper           & This paper             & \UBD                 \\[5pt] \cline{1-5}
    Stochastic Block Model with $\log^{\mathcal O(1)} n$ communities
                                & This paper         & This paper           & This paper             & \UBD                 \\[5pt] \cline{1-5}
    BA Preferential Attachment  & \cite{reut}        & \cite{reut}          & \BD                    & \UBD                 \\[5pt] \cline{1-5}
    Small world model           & This paper         & This paper           & This paper             & This paper           \\[5pt] \cline{1-5}
    Random ordered rooted trees & This paper         & This paper           & This paper             & This paper           \\[5pt] \cline{1-5}
    \end{tabular}
    \vspace{0.7em}
    \caption{\textbf{(Local access implementations for random graphs):}
        All the implementations in this table use polylogarithimc time, additional space and random bits per query.
        A \UBD\ in the \func{All-Neighbors} column indicates that the graphs in this model may have un-bounded degree,
        and it is therefore impossible to sample \func{All-Neighbors} efficiently.
        A \BD\ entry indicates a sampling problem with no known efficient solution.}
    \label{table:graph_results}
\end{table}

\vspace{-1.5em}

While \func{Vertex-Pair} and \func{Next-Neighbor} queries %, as well as \func{All-Neighbors} queries for sparse graphs,
have been considered in prior work \cite{reut, huge_old, sparse},
we provide the first implementations of these in non-sparse random graph models.
Prior results for implementing queries to $G(n,p)$ focused on the sparse case where $p = \log^{\mathcal O(1)} n/n$ \cite{sparse}.
The dense case where $p = \Theta(1)$ is also relatively simple because most of the adjacency matrix is filled,
and neighbor queries can be answered by performing $\Theta(1)$ \func{Vertex-Pair} queries until an edge is found.
The case of general $p$ is more involved, and was not considered previously.
For example, when $p = 1/\sqrt{n}$, each vertex has high degree $\mathcal O(\sqrt{n})$ but most of the adjacency matrix is empty,
thus making it difficult to generate a neighbor efficiently.
\func{Next-Neighbor} queries were introduced in \cite{reut}
in order to access the neighborhoods of vertices in non-sparse graphs in \emph{lexicographic order}.
This query however, does not allow us to efficiently explore graphs in some natural ways, such as via random walks,
since the initially returned neighbors are biased by the lexicographic ordering.
We address this deficiency by introducing a new \func{Random-Neighbor} query,
which would be useful, for instance, in sub-linear algorithms that employ random walk processes.
We provide the first implementation of all three queries in \emph{non-sparse graphs} as follows:

\begin{restatable}{theorem}{UndirectedGrand}
\label{thm:grand}
Given a random graph model defined by the probability matrix $\{ p_{uv}\}_{u,v\in [n]}$,
and assuming that we can compute the the quantities $\prod_{u=a}^b (1-p_{vu})$ and $\sum_{u=a}^b p_{vu}$ in $\mathcal O(\log^{\mathcal O(1)} n)$ time,
there exists an implementation for this model that supports local access through \func{Random-Neighbor}, \func{Vertex-Pair},
and \func{Next-Neighbor} queries using $\mathcal O(\log^{\mathcal O(1)} n)$ running time, additional space, and random bits per query.
\end{restatable}

We show that these assumptions can be realized in Erd\"os-R\'enyi random graphs and the Stochastic Block Model.
SBM presents additional challenges for assigning community labels to vertices (Section~\ref{sec:applications_to_random_graph_models}).

\begin{restatable}{corollary}{ERGrand}
There exists an algorithm that implements local access to a Erd\"{o}s-R\'{e}nyi $G(n,p)$ random graph,
through \func{Vertex-Pair}, \func{Next-Neighbor}, and \func{Random-Neighbor} queries,
while using $\bo(\log^3 n)$ time, $\bo(\log^3 n)$ random bits, and $\bo(\log^2 n)$ additional space per query with high probability.
\end{restatable}

\begin{restatable}{corollary}{SBMGrand}\label{cor:sbm-construct}
There exists an algorithm that implements local access to a random graph from the $n$-vertex Stochastic Block Model
with $r$ randomly-assigned communities,
through \func{Vertex-Pair}, \func{Next-Neighbor}, and \func{Random-Neighbor} queries,
using $O(r\,\poly(\log n))$ time, random bits, and space per query w.h.p.
\end{restatable}

We remark that while we are able to generate Erd\"{o}s-R\'{e}nyi random graphs on-the-fly supporting all three types of queries,
our construction still only requires $\widetilde{\mathcal O}(m+n)$ time and space to generate a complete $G(n,p)$ graph,
which is optimal up to logarithmic factors.

\begin{restatable}{corollary}{EROptimal}
\label{thm:er-optimal}
The final algorithm in Section~\ref{sec:undirected} can generate a complete random graph
from the Erd\"{o}s-R\'{e}nyi $G(n,p)$ model using overall
$\widetilde{\mathcal{O}}(n+m)$ time, random bits and space, which is $\widetilde{\bo}(pn^2)$ in expectation.
This is optimal up to $ \mathcal{O}(\poly(\log n))$ factors.
\end{restatable}

\subparagraph*{Directed Random Graphs - The Small World Model (Section~\ref{sec:small_world}):}
\label{par:directed_random_graphs}
We consider local-access implementations for directed graphs through Kleinberg's Small World model,
where the probabilities are based on distances in a 2-dimensional grid.
This model was introduced in \cite{kleinberg} to capture the geographical structure of real world networks,
in addition to reproducing observed properties of short routing paths and low diameter.
We implement \func{All-Neighbors} queries for the Small World model, using $\mathcal{O}(\poly(\log n))$ time, space and random bits.
Since such graphs are sparse, the other queries follow directly.
\begin{restatable}{theorem}{DirectedGrand}\label{DirectedGrand}
There exists an algorithm that implements \func{All-Neighbors} queries for a random graph from Kleinberg's Small World model,
where probability of including each directed non-grid edge $(u,v)$ in the graph is $c/(\func{dist}(u,v))^2$,
where the range of allowed values of $c$ is $\log^{\pm\mathcal O(1)} n$ and $\func{dist}$ denotes the Manhattan distance,
using $O(\log^2 n)$ time and random words per query with high probability.
\end{restatable}

\paragraph*{Catalan Objects (Section~\ref{sec:catalan_objects})}%
\label{par:intro_catalan_objects}
Catalan objects capture a well studied combinatorial property and admit numerous interpretations that include
well bracketed expressions, rooted trees and binary trees, Dyck languages etc.
In this paper, we focus on the Dyck path interpretation and implement local access to a uniformly random instance of a Dyck path.
Since Dyck paths have natural bijections to other Catalan objects,
we can use our Dyck Path implementation to obtain implementations of these random Catalan objects.

A Dyck path is defined as a sequence of $n$ upward $(+1)$ steps, and $n$ downward $(-1)$ steps,
with the added constraint that the \emph{sum of any prefix of the sequence is non-negative}.
A \emph{random} Dyck path may be viewed as a constrained one dimensional balanced random walk.
A natural query on Dyck paths is $\func{Height}(t)$ which returns the position of the walk after $t$ steps
(equivalently, sum of the first $t$ sequence elements).
$\func{Height}$ queries correspond to $\func{Depth}$ queries on rooted trees and bracketed expressions
(Section~\ref{sec:bijections_to_other_catalan_objects}).

We also introduce and support \func{First-Return} queries, where $\func{First-Return}(t)$ returns
the first time when the random walk returns to the same \emph{height} as it was at time $t$, as long as $\func{Height}(t+1)=\func{Height}(t)+1$
(Section~\ref{sec:bijections_to_other_catalan_objects} presents the rationale for this definition).
This allows us to support more involved queries that are widely used;
for example, \func{First-Return} queries are equivalent to finding the next child of a node in a random rooted tree,
and also to finding a matching closing bracket in random bracketed expressions.

\func{Height} queries for unconstrained random walks follow trivially from the implementation of interval summable functions in \cite{huge, histogram}.
However, the added non-negativity constraint introduces intricate non-local dependencies on the distribution of positions.
We show how to overcome these challenges, and support both queries using $\mathcal O(\poly(\log n))$ resources.

\begin{restatable}{theorem}{CatalanGrand}
\label{thm:catalan_main}
There is an algorithm using $\mathcal O(\log^{\mathcal O(1)} n)$ resources per query that provides access to
a uniformly random Dyck path of length $2n$ by implementing the following queries:
\begin{itemize}
    \item \func{Height}$(t)$ returns the position of the walk after $t$ steps.
    \item \func{First-Return}$(t)$: If $\func{Height}(t+1) > \func{Height}(t)$, then $\func{First-Return}(t)$ returns the smallest index $t'$,
    such that $t'>t$ and $\func{Height}(t') = \func{Height}(t)$ (i.e. the first time the Dyck path returns to the same height).
    Otherwise, if $\func{Height}(t+1) < \func{Height}(t)$, then $\func{First-Return}(t)$ is not defined.
\end{itemize}
\end{restatable}

\paragraph*{Random Coloring of Graphs - A New Model (Section~\ref{sec:random_coloring_of_a_graph}):}%
\label{par:random_coloring_of_graphs}
So far, all the results in this area have focused on random objects with a small description size;
for instance, the $G(n, p)$ model is described with only two parameters $n$ and $p$.
We introduce a new model for implementing random objects with \emph{huge input description};
that is, the distribution is specified as a uniformly random solution to a huge combinatorial problem.
The challenge is that now our algorithms cannot read the entire description in sub-linear time.

In this model, we implement query access to random $q$-colorings of a given huge graph $G$ with maximum degree $\Delta$.
A random coloring is generated by proposing $\mathcal O(n\log n)$ color updates and accepting the ones that do not create a conflict (Glauber dynamics).
This is an inherently sequential process with the acceptance of a particular proposal depending on all preceding neighboring proposals.
Moreover, unlike the previously considered random objects, this one has no succinct representation, and we can only uncover the proper distribution
by probing the underlying graph (in the manner of \emph{local computation algorithms} \cite{LCA, LCA_space_efficient}).
Unlike LCAs which can return an \emph{arbitrary} valid solution, we also have to make sure that we return a solution from the correct distribution.
We are able to construct an efficient oracle that returns the final color of a vertex using only a sub-linear number of probes when $q\ge 12\Delta$.

This implementation also has the feature that multiple independent instances of the algorithm having access to the same random bits,
will respond to queries in a manner consistent with each other; they will generate exactly the same coloring, regardless of the queries asked.
Since these implementations are memory-less, the resulting coloring is \emph{oblivious} to the order of queries,
and only depends on common random bits,

\begin{restatable}{theorem}{ColoringGrand}
\label{thm:coloring_grand}
Given neighborhood query access to a graph $G$ with $n$ nodes, maximum degree $\Delta$, and $q=3\alpha\Delta \ge 9\Delta$ colors,
we can generate the color of any given node from a distribution of color assignments that is $\epsilon$-close (in $L_1$ distance)
to the uniform distribution over all valid colorings of $G$, in a consistent manner,
using only $\mathcal O((n/\epsilon)^{4/\alpha}\Delta\log (n/\epsilon))$ time, probes, and public random bits per query.
\end{restatable}
This run-time is sub-linear when $\alpha > 4 \iff q > 12\Delta$.

\subsection{Related Work}
\label{sec:related_work}
The problem of computing local information of huge random objects was pioneered in \cite{huge_old,huge}.
Further work of \cite{sparse} considers the implementation of sparse $G(n,p)$ graphs from the Erd\"{o}s-R\'{e}nyi model \cite{er},
with $p = O(\poly(\log n)/n)$, by implementing \func{All-Neighbors} queries.
While these implementations use polylogarithmic resources over their entire execution,
they generate graphs that are only guaranteed to \emph{appear random} to algorithms that inspect a \emph{limited portion} of the generated graph.

In \cite{reut}, the authors construct an oracle for the implementation of recursive trees, and BA preferential attachment graphs.
Unlike prior work, their implementation allows for an arbitrary number of queries.
Although the graphs in this model are generated via a sequential process,
the oracle is able to locally generate arbitrary portions of it and answer queries in polylogarithmic time.
Though preferential attachment graphs are sparse, they contain vertices of high degree,
thus \cite{reut} provides access to the adjacency list through \func{Next-Neighbor} queries.
For additional related work, see Section~\ref{sec:additional_related_work}.

\subsection{Applications}
\label{sec:applications}
One motivation for implementing local access to huge random objects is so that we can simulate sub-linear time algorithms on them.
The standard paradigm of generating the entire (input) object a priori is wasteful,
because sub-linear algorithms only inspect a small fraction of the generated input.
For example, the greedy routing algorithm on Kleinberg's small world networks \cite{kleinberg}
only uses $\mathcal O(\log^2 n)$ probes to the underlying network.
Using our implementation, one can execute this algorithm on a random small world instance in $\mathcal O(\poly(\log n))$ time,
without incurring the $\mathcal O(n)$ prior-sampling overhead, by generating only those parts of the graph that are accessed by the routing algorithm.

Local access implementations may also be used to design parallel generators for random objects.
The model in Definition~\ref{def:local_access_LCA} is particularly suited to parallelization;
different processors/machines can generate parts of the random object independently,
using a \emph{read-only} shared memory containing public random bits.

\section{Model and Overview of our Techniques}
\label{sec:overview_of_our_techniques}
We begin by formalizing our model of \emph{local-access implementations}, inspired by \cite{reut},
but we add some aspects that were not addressed in earlier models.
First, we define families of random objects.

\begin{definition}
\label{def:parametrized_random_object}
A \textbf{random object family} maps a description $\Pi$ to a distribution $\mathsf X^{\Pi}$ over the set $\mathbb X^{\Pi}$.
\end{definition}

For example, the \emph{family} of Erd\"{o}s-R\'{e}nyi graphs maps $\Pi = (n, p)$ to a distribution over $\mathbb X^{\Pi}$,
which is the set of all possible $n$-vertex graphs,
where the probability assigned to any graph containing $m$ edges in the distribution $\mathsf X^{\Pi}$ is exactly $p^m\cdot (1-p)^{\binom{n}{2}-m}$.

\begin{definition}
\label{def:local_access}
Given a \textbf{random object family} $\{(\mathsf X^{\Pi}, \mathbb X^{\Pi})\}$ parameterized by $\Pi$, a \emph{local access implementation}
%Given a distribution $\mathsf X^{\Pi}$ over a set of huge random objects $\mathbb X$, a \emph{local access implementation}
of a family of query functions $\langle F_1, F_2,\cdots \rangle$ where $F_i: \mathbb X^{\Pi}\rightarrow \{0,1\}$,
provides an oracle $\mathcal M$ with an internal state for storing a partially generated random object.
Given a description $\Pi$ and a query $F_i$, the oracle returns the value $\mathcal M(\Pi, F_i)$,
and updates its internal state, while satisfying the following:
\begin{itemize}
    \item \textbf{Consistency:}
    There must be a single $X\in \mathbb X^{\Pi}$, such that for all queries $F_i$ presented to the oracle,
    the returned value $\mathcal M(\Pi,F_i)$ equals the true value $F_i(X)$.
    %must be consistent with a single $X\in \mathbb X^{\Pi}$.
    \item \textbf{Distribution equivalence:}
    The random object $X\in \mathbb X^{\Pi}$ consistent with the responses $\{ F_i(X)\}$ must be sampled from some distribution $\hat{\mathsf{X}}^{\Pi}$
    that is $\epsilon$-close to the desired distribution $\mathsf{X}^{\Pi}$ in $L_1$-distance.
    In this work, we focus on supporting $\epsilon = n^{-c}$ for any desired constant $c>0$.
    \item \textbf{Performance:}
    The computation time, and random bits required to answer a single query must be sub-linear in $|X|$ with high probability,
    without any initialization overhead.
\end{itemize}
In particular, we allow queries to be made adversarially and non-deterministically.
The adversary has full knowledge of the algorithm's behavior and its past random bits.
\end{definition}

For example, in the $G(n,p)$ family with description $\Pi = (n, p)$,
we can define \func{Vertex-Pair} query functions $\{ F_{(u,v)}\}_{u,v\in [n]}$.
So, given a graph $G\in \mathbb X^{\Pi}$, the query $F_{(u,v)}(G) = 1$ if and only if $(u,v)\in G$.

In prior work \cite{reut, huge, sparse} as well as some of our results, the input description $\Pi$ is of small size (typically $\mathcal O(\log n)$),
and the oracle $\mathcal M$ can read all of $\Pi$ (for example, $\Pi = (n, p)$ in $G(n,p)$).

\subparagraph*{{\Large Distributions with Huge Description Size:}}
\label{par:distributions_with_huge_description_size}
We initiate the study of \textbf{random object families} where the description $\Pi$ is too large to be read by a sub-linear time algorithm.
%In all the problems considered in prior work \cite{huge,sparse,reut}, the description size
%of the random object is small, typically $\mathcal O(\log n)$ to represent the size of the instance and a constant number of parameters.
%For instance, the $G(n, p)$ model is described using two parameters $n$ and $p$.
%However, if one wishes to implement local access to the uniform distribution over all valid colorings of a given input graph $G$,
In this setting, the oracle from Definition~\ref{def:local_access} cannot read the entire input $\Pi$, and instead accesses it through local probes.
For instance, consider the \textbf{random object family} that maps a graph $G$ to the uniform distribution over valid colorings of $G$.
Here, the description $\Pi$ includes the entire graph $G$, which is too large to be read by a sub-linear time algorithm.
In this case, the oracle can query the underlying graph using neighborhood probes.
The number of such probes used to answer a single query must be sub-linear in the input size.

\paragraph*{Supporting Independent Query Oracles: Memory-less Implementations}
\label{par:supporting_independent_query_oracles_memory_less_implementations}
The model in Definition~\ref{def:local_access} only supports sequential queries,
since the response to a future query may depend on the changes in internal state caused by past queries.
In some applications, we may want to have multiple independent query oracles whose responses are all consistent with each other.
One way to achieve this is to restrict our attention to \emph{memory-less} implementations; ones without any internal state.
An important implication of being memory-less is that the responses to each query is oblivious to the order of queries being asked.
In fact, the lack of internal state implies that independent implementations that use the same random bits and the same input description
must respond to queries in the same way.
Thus, instead of using the internal state to maintain consistency, memory-less implementations are given access to the same public random oracle.

%In addition to the above generalization of Definition~\ref{def:local_access}, we also consider ``memory-less'' oracle implementations,
%where the oracle does not have any internal state, and instead has access to a source of common randomness.
%An important implication of being memory-less is that the responses to each query is oblivious of the order of queries being asked.
%In fact, the lack of internal state implies that independent implementations that use the same common randomness and the same input description
%must respond to queries in the same way.
%So, we can have multiple (and even simultaneous) copies of the implementation,
%that are all consistent with a single random object drawn from the distribution,
%without having to write to shared memory or communicate in any other way.

For the problem of sampling a random graph coloring,
we present an implementation that is memory-less and also accesses the input description through local probes,
as elaborated in the following model:

\begin{definition}
\label{def:local_access_LCA}
Given a \textbf{random object family} $\{(\mathsf X^{\Pi}, \mathbb X^{\Pi})\}$ parameterized by input $\Pi$,
a \emph{local access implementation} of a family of query functions $\langle F_1, F_2,\cdots \rangle$,
provides an oracle $\mathcal M$ with the following properties.
$\mathcal M$ has query access to the input description $\Pi$, and a tape of public random bits $\vec R$.
Upon being queried with $\Pi$ and $F_i$, the oracle uses sub-linear resources to return the value $\mathcal M(\Pi,\vec R,F_i)$,
which must equal $F_i(X)$ for a specific $X\in\mathbb X^{\Pi}$, where the choice of $X$ depends only on $\vec R$,
and the distribution of $X$ (over $\vec R$) is $\frac1{n^c}$-close to the distribution over $\mathbb X^{\Pi}$, for any given constant $c$.
Thus, different instances of $\mathcal M$ with the same description $\Pi$ and the same random bits $\vec R$,
must agree on the choice of $X$ that is consistent with all answered queries regardless of the order and content of queries that were actually asked.
\end{definition}

We can contrast Definition~\ref{def:local_access_LCA} with the one for \emph{Local Computation Algorithms} \cite{LCA, LCA_space_efficient}
which also allow query access to \emph{some} valid solution by reading the input through local probes.
The additional challenge in our setting is that we also have to make sure that we return a uniformly random solution, rather than an arbitrary one.
%Similarly to LCAs, we can have multiple independent instances of our algorithm answering different queries, but remaining consistent with one another.
We also note that the memory-less property may be achieved for small description size \textbf{random object families}.
For instance, our implementation for the directed small world model admits such a memory-less implementation using public random bits.

\subsection{Undirected Graphs}
\label{sec:undirected_graphs}
We consider the generic class of \emph{undirected graphs} with {\em independent edge probabilities} $\left\{ p_{uv} \right\}_{u,v\in V}$,
(where $p_{uv}$ denotes the probability that there is an edge between $u$ and $v$),
from which the results can be applied to Erd\"os-R\'enyi random graphs and the Stochastic Block Model.
Throughout, we identify our vertices via their unique IDs from $1$ to $n$.
In this model, \func{Vertex-Pair} queries by themselves can be implemented trivially,
since the existence of any edge $(u,v)$ is an independent Bernoulli random variable,
but it becomes harder to maintain consistency when implementing them in conjunction with the other queries.
Inspired by \cite{reut}, we provide an implementation of $\func{next-neighbor}$ queries,
which return the neighbors of any given vertex one by one in lexicographic order.
Finally, we introduce a new query: \func{Random-Neighbor} that returns a uniformly random neighbor of any given vertex.
This would be useful for any algorithm that performs random walks.
$\func{Random-Neighbor}$ queries in non-sparse graphs present particularly interesting challenges that are outlined below.

\paragraph*{\func{Next-Neighbor} Queries}
\label{par:next_neighbor_queries}
In Erd\"os-R\'enyi graphs, the (lexicographically) next neighbor of a vertex can be recovered by generating consecutive entries
of the adjacency matrix until a neighbor is found, which takes roughly $\Omega(1/p_{uv})$ time.
For small edge probabilities $p_{uv} = o(1)$, this implementation is inefficient, and we show how to improve the runtime to $\mathcal O(\poly(\log n))$.
Our main technique is to sample the number of ``non-neighbors'' preceding the next neighbor.
To do this, we assume that we can estimate the ``skip'' probabilities $F(v,a,b)=\prod^{b}_{u=a} (1-p_{vu})$,
where $F(v,a,b)$ is the probability that $v$ has no neighbors in the range $[a,b]$.
We later show how to compute this quantity efficiently for $G(n,p)$ and the Stochastic Block Model.

This strategy of \emph{skip-sampling} is also used in \cite{reut}.
However, in our work, the main difficulty arises from the fact that our graph is undirected,
and thus we must ``inform'' all (potentially $\Theta(n)$) non-neighbors once we decide on the query vertex's next neighbor.
Concretely, if $u'$ is sampled as the next neighbor of $v$ after its previous neighbor $u$,
we must maintain consistency in subsequent steps by ensuring that none of the vertices in the range $(u,u')$ return $v$ as a neighbor.
This update will become even more complicated as we handle \func{Random-Neighbor} queries, where we may generate non-neighbors at random locations.

In Section~\ref{sec:ER-rand}, we present a randomized implementation (Algorithm~\ref{alg:oblivious-coin-toss})
that supports \func{Next-Neighbor} queries efficiently, but has a complicated performance analysis.
We remark that this approach may be extended to support \func{Vertex-Pair} queries (Section~\ref{sec:reroll-cont}) with superior performance
(if we do not support \func{Random-Neighbor} queries),
and also to provide deterministic resource usage guarantee (Section~\ref{sec:ER-det}).

\paragraph*{\func{Random-Neighbor} Queries}
\label{par:random_neighbor_queries}
We implement \func{Random-Neighbor} queries (Section~\ref{sec:blocks}) using $\poly(\log n)$ resources.
The ability to do so is surprising since:
(1) Sampling the degree of vertex, may not be viable in \emph{sub-linear} time, because this quantity alone
imposes dependence on the existence of \emph{all} of its potential incident edges and consequently on the rest of the graph (since it is undirected).
Thus, our implementation needs to return a random neighbor, with probability reciprocal to the query vertex's degree,
without resorting to \emph{determining} its degree.
(2) Even without committing to the degrees, answers to \func{Random-Neighbor} queries
affect the conditional probabilities of the remaining adjacencies in a global and non-trivial manner.\footnote{
\label{conditional} Consider a $G(n,p)$ graph with small $p$, say $p = 1/\sqrt n$,
such that vertices will have $\tilde{\mathcal{O}}(\sqrt n)$ neighbors with high probability.
After $\tilde{\mathcal{O}}(\sqrt n)$ \func{Random-Neighbor} queries, we will have uncovered all the neighbors (w.h.p.),
so that the conditional probability of the remaining $\Theta(n)$ edges should now be close to zero.}

We formulate an approach which samples many consecutive edges simultaneously,
in such a way that the conditional probabilities of the unsampled edges remain independent and ``well-behaved'' during subsequent queries.
For each vertex $v$, we divide the potential neighbors of $v$ into contiguous ranges $\{B^{(i)}_v\}$ called blocks,
so that each $B^{(i)}_v$ contains $\Theta(1)$ neighbors in expectation (i.e. $\sum_{u\in B_i} p_{vu} = \Theta(1)$).
The subroutine of \func{Next-Neighbor} is applied to sample the neighbors within a block in expected $\mathcal{\widetilde O}(1)$ time.
We can now find a neighbor of $v$ by picking a random neighbor from a random block,
but this introduces a bias because all blocks may not have the same number of neighbors.
We remove this bias by rejecting samples with probability proportional to the number of neighbors in the block.

\subsubsection{Applications to Random Graph Models}
\label{sec:applications_to_random_graph_models}
We now consider the application of our construction above to actual random graph models,
where we must realize the assumption that $\prod^{b}_{u=a} (1-p_{vu})$ and $\sum^{b}_{u=a} p_{vu}$ can be computed efficiently.
For the Erd\"{o}s-R\'{e}nyi $G(n,p)$ model, these quantities have simple closed-form expressions.
Thus, we obtain implementations of $\func{Vertex-Pair}$, $\func{Next-Neighbor}$, and $\func{Random-Neighbor}$ queries,
using polylogarithmic resources per query, for \emph{arbitrary} values of $p$,
We remark that, while $\Omega(n+m) = \Omega(p n^2)$ time and space is clearly necessary to generate and represent a full random graph,
our implementation supports local-access via all three types of queries, and yet can generate a full graph in $\widetilde{O}(n+m)$ time and space
(Corollary~\ref{thm:er-optimal}).

We also generalize our construction to implement the Stochastic Block Model.
In this model, the vertex set is partitioned into $r$ \emph{communities} $\left\{ C_1, \ldots, C_r \right\}$.
The probability that an edge exists between $u\in C_i$ and $v \in C_j$ is $p_{ij}$.
A naive solution would be to simply assign communities to contiguous (by index) blocks of vertices,
which would easily allow us to calculate the relevant sums/products of probabilities on continuous ranges of indices,
with some additional case analysis to check when we are at a community boundary.
However, this setup is unrealistic, and not particularly useful in the context of the Stochastic Block model.
As communities in the observed data are generally unknown a priori,
and significant research has been devoted to designing efficient algorithms for community detection and recovery, these studies generally consider
the \emph{random community assignment} condition for the purpose of designing and analyzing algorithms \cite{mossel2015reconstruction}.
Thus, we construct implementations where the community assignments are sampled from some given distribution $\mathsf R$,
or from a collection of specified community sizes $\langle |C_1|, \ldots, |C_r|\rangle$.
%\footnote{
%Our algorithm also supports the alternative specification where the community sizes $\langle |C_1|, \ldots, |C_r|\rangle$
%are given instead, where the assignment of vertices $V$ into these communities is chosen uniformly at random.}.
The main difficulty here is to obtain a uniformly sampled assignment of vertices to communities on-the-fly.

Since the probabilities of potential edges now depend on the communities of their endpoints,
we can't obtain closed form expressions for the relevant sums and products of probabilities.
However, we observe that it suffices to efficiently count the number of vertices of each community in any range of contiguous vertex indices.
We then design a data structure extending a construction of \cite{huge}, which maintains these counts for ranges of vertices,
and determines the partition of their counts only on an as-needed basis.
This extension results in an efficient technique to sample counts from the \emph{multivariate hypergeometric distribution}
(Section~\ref{sec:multivariate_hypergeometric_sampling}) which may be of independent interest.
For $r$ communities, this yields an implementation with $ \mathcal{O}(r\cdot \poly(\log n))$ overhead in required resources for each operation.

\subsection{Directed Graphs}
\label{sec:directed_graphs}
Lastly, we consider Kleinberg's Small World model (\cite{kleinberg, klein}) in Section~\ref{sec:small_world}.
While small world models are proposed to capture properties such as small shortest-path
distances and large clustering coefficients \cite{watts1998collective},
this important special case of Kleinberg's model, defined on two-dimensional grids, demonstrates the underlying geographical structures of networks.

In this model, each vertex is identified via its 2D coordinate $v = (v_x, v_y) \in [\sqrt{n}]^2$.
Defining the Manhattan distance as $\func{dist}(u,v)=|u_x-v_x|+|u_y-v_y|$,
the probability that each directed edge $(u,v)$ exists is $c/(\func{dist}(u,v))^{2}$.
A common choice for $c$ is given by normalizing the distribution such that the expected out-degree of each vertex is 1 ($c = \Theta(1/\log n)$).
We can also support a range of values of $c=\log^{\pm\Theta(1)}n$.
Since the degree of each vertex in this model is $\mathcal O(\log n)$ with high probability, we implement \func{All-Neighbor} queries,
which in turn can emulate \func{Vertex-Pair}, \func{Next-Neighbor} and \func{Random-Neighbor} queries.
In contrast to our previous cases, this model imposes an underlying two-dimensional structure of the vertex set,
which governs the distance function as well as complicates the individual edge probabilities.
%Providing local access for directed graphs is simpler because the out-neighbors of vertices may be chosen independently at each vertex.

We implement \func{All-Neighbors} queries in the small world model by listing all neighbors from closest to furthest away from the queried vertex,
using $\poly(\log n)$ resources per query.
The main challenge is to sample for the next closest neighbor, when the probabilities are a function of the Manhattan distance on the lattice.
Rather than sampling for a neighbor directly, we partition the nodes based on their distance from $v$ (there are $\Theta(d)$ vertices at distance $d$).
We first choose the next smallest distance partition with a neighbor using rejection sampling techniques
(Lemma~\ref{lem:rejection_sampling}) on the appropriate distribution.
In the second step, we generate all the neighbors within that partition using \emph{skip-sampling}.

\subsection{Random Catalan Objects}
\label{sec:overview_catalan_objects}
Many important combinatorial objects can be interpreted as Catalan objects.
One such interpretation is a Dyck path; a one dimensional random walk on the line with $n$ up and $n$ down steps, starting from the origin,
with the constraint that the height is always non-negative.
We implement $\func{Height}(t)$, which returns the position of the walk at time $t$,
and \func{First-Return}$(t)$, which returns the first time when the random walk returns to the same position as it was at time $t$.
These queries are natural for several types of Catalan objects.
As noted previously, we can use standard bijections to translate the Dyck path query implementations into
natural queries for \emph{bracketed expressions} and \emph{ordered rooted trees}.
Specifically, \func{Height} values in Dyck paths are equivalent to \emph{depth} in bracket expressions and trees.
The \func{First-Return} queries are more involved, and are equivalent to finding the \emph{matching bracket} in bracket expressions,
and alternately to finding the \emph{next child} of a node in an ordered rooted tree (see Section~\ref{sec:bijections_to_other_catalan_objects}).

Over the course of the execution, our algorithm will determine the height of a random Dyck path at many different positions $\{ x_1, x_2,\cdots, x_m\}$
(with $x_i<x_{i+1}$), both directly as a result of user given $\func{Height}$ queries, and indirectly through recursive calls to $\func{Height}$.
These positions divide the sequence into contiguous \emph{intervals} $[x_i,x_{i+1}]$,
where the height of the endpoints $y_i, y_{i+1}$ have been determined, but none of the intermediate heights are known.
An important observation is that the unknown section of the path within an \emph{interval}
is entirely determined by the positions and heights of the endpoints, and in particular is completely independent of all other \emph{intervals}.
So, each interval $[x_i,x_{i+1}]$ along with corresponding heights $\{y_i,y_{i+1}\}$,
represents a generalized Dyck problem with $U$ up steps, $D$ down steps,
and a constraint that the path never dips more than $y_i$ units below the starting height.

\paragraph*{$\func{Height}$ Queries}
\label{par:height_queries}
General $\func{Height}(x)$ queries can then be answered by recursively halving the \emph{interval} containing $x$,
by repeatedly sampling the height at the midpoint, until the height of position $x$ is sampled.
We start by implementing a subroutine that given an \emph{interval} $[x_i,x_{i+1}]$ of length $2B$, containing $2U$ up and $2D$ down steps,
determines the number of up steps $U'=U+d$ assigned to the first half of the \emph{interval}
(we parameterize $U'$ with $d$ in order to make the analysis cleaner).
Note that this is equivalent to answering the query $\func{Height}(x_i+B)$.
This is done by sampling the parameter $d$ from a distribution $\{ p_d\}$ where $p_d \equiv S_{left}(d)\cdot S_{right}(d)/S_{total}$.
Here, $S_{left}(d)$ (respectively $S_{right}(d)$) is the number of possible paths in the left (resp. right) half of the \emph{interval} when
$U+d$ up steps and $D-d$ down steps are assigned to the first half, and $S_{total}$ is the number of possible paths in the original $2B$-interval.

The problem of determining the number of up steps in the first half of the \emph{interval} was solved for the unconstrained case
(where the sequence is just a random permutation of up and down steps) in \cite{huge}.
Adding the non-negativity constraint introduces further difficulties as the distribution over $d$ has a CDF that is difficult to compute.
We construct a different distribution $\{q_d\}$ that approximates $\{p_d\}$ pointwise to a factor of $\mathcal O(\log n)$
and has an efficiently computable CDF.
This allows us to sample from $\{q_d\}$ and leverage rejection sampling techniques
(see Lemma~\ref{lem:rejection_sampling} in Section~\ref{sec:basic_tools_for_efficient_sampling}) to obtain samples from $\{p_d\}$.

\paragraph*{$\func{First-Return}$ Queries}
\label{par:_first-return_queries}
Note that \func{First-Return}$(x)$ is only of interest when the first step after position $x$ is upwards (i.e. $\func{Height}(x+1)>\func{Height}(x)$),
since this situation is important for complex queries to random bracketed expressions and random rooted trees.
Section~\ref{sec:bijections_to_other_catalan_objects} details the rationale for this definition based on bijections between these objects.
\func{First-Return} queries are challenging because we need to find the \emph{interval} containing the first return to $\func{Height}(x)$.
Since there could be up to $\Theta(n)$ intervals, it is inefficient to iterate through all of them.
To circumvent this problem, we allow each interval $[x_i,x_{i+1}]$ to sample and maintain its own boundary constraint $k_i$
instead of using the global non-negativity constraint.
This implies that the path within the interval $[x_i,x_{i+1}]$ never reaches the height $y_i-k_i$ or lower.
Additionally, we maintain a crucial invariant that states that this boundary is achieved by the endpoint of lower height i.e. $\min(y_i,y_{i+1})$.
If the invariant holds, we can find the interval containing $\func{First-Return}(x)$ by finding the smallest determined position $x_j>x$
whose sampled height $y_j \le \func{Height}(x)$, and considering the interval $[x_{j-1},x_j]$ preceding $x_j$.
We use an interval tree to update and query for the known heights.

Unfortunately, every $\func{Height}$ query creates new intervals by sub-dividing existing ones, potentially breaking the invariant.
We re-establish the invariant for $[x_i,x_{i+1}]$ by generating a ``mandatory boundary'' $h$
(a boundary constraint with the added restriction that some position within the interval \emph{must} touch the boundary),
%(a $y$-coordinate that must be achieved within the interval $[x_i,x_{i+1}]$ but not exceeded),
and then sampling a position $x_{mid}\in [x_i,x_{i+1}]$ such that $\func{Height}(x_{mid}) = h$ (Figure~\ref{fig:dyck_invariant_preserve}).
This creates new intervals $[x_i,x_{mid}]$ and $[x_{mid},x_{i+1}]$, both of which have a boundary constraint at $h$.

The first step of sampling the \emph{mandatory boundary} is performed
by binary searching on the possible boundary locations using an appropriate CDF (Section~\ref{sec:sampling_the_lowest_achievable_height}).
To find an intermediate position touching this boundary, we parameterize the position with $d$,
and find the distribution $\{p_d\}$ associated with the various possibilities.
Since we cannot directly sample from this complicated distribution, we define a \emph{piecewise continuous} probability distribution
$\hat q(\delta)$ that approximates $p_{\floor\delta}$ (Section~\ref{sec:sampling_first_position_touching_mandatory_boundary}).
We then use this to define a discrete distribution $\{q_d\}$ where $q_d = \int_d^{d+1}\hat q(\delta)$,
where we can efficiently compute the CDF of $\{q_d\}$ by integrating the piecewise continuous $\hat q(\delta)$.
The challenge here is to construct an appropriate $\hat q(\delta)$ that only has $\mathcal O(\log^{\mathcal O(1)} n)$ continuous pieces.
This allows us to again use the rejection sampling technique (Lemma~\ref{lem:rejection_sampling}) to indirectly obtain a sample from $\{p_d\}$.

\subsection{Random Coloring of a Graph}
\label{sec:overview_random_coloring_of_a_graph}
Finally, we introduce a new model (Definition~\ref{def:local_access_LCA}) for implementing huge random objects,
where the distribution is specified as a random solution to a huge combinatorial problem.
In this new setting, we will implement local query access to random $q$-colorings of a given huge graph $G$ of size $n$ with maximum degree $\Delta$.
%by implementing a query $\func{Color}(v)$, which returns the color of a \emph{single vertex} $v$.
Since the implementation has to run in sub-linear time, it is not possible to read the entire input $G$ during a single query execution.

\paragraph*{$\func{Color}$ Queries}
\label{par:color_queries}
Given a graph $G$ with maximum degree $\Delta$, and the number of colors $q\ge 9\Delta$,
we are able to construct an efficient implementation for $\func{Color}(v)$ that returns the final color of $v$
in a uniformly random $q$-coloring of $G$ using only a sub-linear number of probes.
Random colorings of a graph are sampled using $\mathcal O(n\log n)$ iterations of a Markov chain \cite{glauber_survey}.
Each step of the chain proposes a random color update for a random vertex, and accepts the update if it does not create a conflict.
This is an inherently sequential process, with the acceptance of a particular proposal depending on all preceding neighboring proposals.

To make the runtime analysis simpler, we use a modified version of Glauber Dynamics that proceeds in $\mathcal O(\log n)$ epochs.
In each epoch, all of the $n$ vertices propose a random color and update themselves if their proposals do not conflict with any of their neighbors.
This Markov chain was presented in \cite{what_local} for distributed graph coloring, and mixes in $\mathcal O(\log n)$ epochs when $q\ge 9\Delta$.
%While we do not have the same restrictions as the distributed computation setting,
%we choose to use this chain so as to avoid long analysis of mixing times.
In order to implement the query $\func{Color}(v)$, it suffices to implement $\func{Accepted}(v,t)$
that indicates whether the proposal for $v$ was accepted in the $t^{th}$ epoch.
This depends on the prior colors of the potentially $\Delta$ neighbors of $v$.
Determining the prior colors of all the neighbors $w$ using recursive calls would result in
$\Delta$ invocations of $\func{Accepted}(w, t-1)$ (at the preceding epoch $t-1$).
Naively, this gives a bound of $\Delta^t$ on the number of invocations.
We can prune the recursions by only considering neighbors who proposed the color $c$ during \emph{some} past epoch.
This reduces the expected number of recursive calls to $t\Delta/q$,
since there are $t\Delta$ potential proposals and each one is $c$ with probability $1/q$.
If $q$ is larger than $t\Delta$, the number of recursive calls is less than $1$,
which gives a sub-linear bound on the total number of resulting invocations.
Since the number of epochs $t$ can be as large as $\Theta(\log n)$, this strategy will only work when $q = \Omega(\Delta\log n)$.

The improvement to $q = \Omega(\Delta)$ follows from the observation that for a neighbor $w$ that proposed color $c$ at epoch $t'$,
the recursive call corresponding to $w$ can directly jump to epoch $t'$.
If the conflicting color $c$ was indeed accepted at that epoch,
we then step \emph{forwards} through epochs $t'+1, t'+2,\cdots$, to check whether $c$ was overwritten by some future accepted proposal.
This strategy dramatically reduces the recursive sub-problem size (given by the epoch number $t'$),
and furthermore we show that we do not have to step through too many future epochs in order to check whether $c$ was overwritten.
This allows us to bound the runtime by $\widetilde{\mathcal O}\left(t\Delta (n/\epsilon)^{6.12\Delta/q}\right)$,
where the overall coloring is sampled from a distribution that is $\epsilon$-close to uniform (see Definition~\ref{def:local_access_LCA}).

One requirement for our strategy is the ability to access a \emph{valid initial coloring} (the initial state of the Markov Chain)
through local probes, in addition to local probes to the underlying graph structure.
This assumption can be removed by using a result of \cite{coloring_initialize},
that presents an LCA for $\Delta+1$ graph coloring using $\Delta^{\mathcal O(1)}\log n$ graph probes.
Alternately, we can assign random initial colors to the vertices, which may result in an \emph{invalid} final coloring.
However, the Markov Chain will transform the initial invalid coloring into a valid one with high probability.

\subsection{Basic Tools for Efficient Sampling}
\label{sec:basic_tools_for_efficient_sampling}
In this section, we describe the main techniques used to sample from a distribution $\{ p_i\}_{i\in [n]}$,
which differs based on the type of access to $\{p_i\}$ provided to the algorithm.
If the algorithm is given cumulative distribution function (CDF) access to $\{p_i\}$,
then it is well known that via $\mathcal O(\log n)$ CDF evaluations, one can sample according
to a distribution that is at most $n^{-c}$ far from $\{p_i\}$ in $L_1$ distance.

Sampling can be more challenging when when we can only access the probability distribution function (PDF) of $\{p_i\}$,
The approach that we use in this work is to construct an auxiliary distribution $\{q_i\}$ such that:
(1) $\{ q_i\}$ has an efficiently computable CDF, and
(2) $q_i$ approximates $p_i$ pointwise to within a polylogarithmic multiplicative factor for ``most'' of the support of $\{ p_i\}$.
The following Lemma inspired by \cite{huge} formalizes this concept.
\begin{lemma}
\label{lem:rejection_sampling}
Let $\{p_i\}_{i\in [n]}$ and $\{q_i\}_{i\in [n]}$ be distributions on $[n]$ satisfying the following conditions:
\begin{enumerate}
    \item There is a $\log^{\mathcal O(1)}n$ time algorithm to approximate $p_i$ and $q_i$
    up to a multiplicative $\left(1\pm \frac{1}{n^c}\right)$ factor.
    \item We can generate an index $i$ according to a distribution $\{\hat q_i\}$,
    where $\hat q_i$ is a $\left(1\pm \frac{1}{n^c}\right)$ multiplicative approximation to $q_i$.
    \item There exists a $poly(\log n)$-time recognizable set $S$ such that
    \begin{itemize}
        \item $1-\sum\limits_{i\in S} p_i < \frac 1{n^c}$
        \item For every $i\in S$, it holds that $p_i\le \log^{\mathcal{O}(1)} n\cdot q_i$
    \end{itemize}
\end{enumerate}
Then, with high probability we can use only $\log^{\mathcal O(1)}n$ samples from $\{\hat q_i\}$ to generate an index $i$
according to a distribution that is $ \mathcal O\left(\frac{1}{n^c}\right)$-close to $\{p_i\}$ in $L_1$ distance.
\end{lemma}
\begin{proof}
We begin by setting an upper bound $U = \log^{\mathcal O(1)}n$ on $p_i/q_i$ for all $i\in S$.
The sampling proceeds in iterations, such that in each iteration we obtain an index $i$ according to the distribution $\{ \hat q_i\}$.
If $i\not\in S$, this index is returned with probability $\tilde p_i/U\tilde q_i$,
where $\tilde p_i$ and $\tilde q_i$ are the $\left( 1\pm \frac{1}{n^c} \right)$ multiplicative approximations to $p_i$ and $q_i$,
Otherwise, we repeat this process until some output is returned.

The probability of returning index $i\in S$ in a particular iteration is $\hat q_i\cdot (\tilde p_i/U \tilde q_i)$,
which is in turn a $\left( 1\pm \frac{1}{n^c} \right)$ multiplicative approximation to $p_i/U$.
Hence, the probability of success in a single iteration is roughly $1/U$,
and therefore we only need $\mathcal O(U\log n) = \log^{\mathcal O(1)}n$ iterations (and the same number of samples from $\{ \hat q_i\}$)
in order to succeed with high probability.
The resulting distribution of indices approximates $\{ p_i\}$ pointwise on the domain $S$, up to a factor of $\left( 1\pm \frac{1}{n^c} \right)$.
Since the remainder of the domain contains at most $ \frac{1}{n^c}$ probability mass,
the output distribution is $\mathcal O\left(\frac{1}{n^c}\right)$ close to $\{ p_i\}$ in $L_1$ distance.
\end{proof}

\section{Local-Access Implementations for Random Undirected Graphs}
\label{sec:undirected}

In this section, we provide an efficient local access implementations for random undirected graphs
when the probabilities $p_{uv}=\mathbb P[(u,v)\in E]$ are given, and we can efficiently approximate the following quantities:
(1) the probability that there is no edge between a vertex $u$ and a range of consecutive vertices $[a,b]$, namely $\prod_{u=a}^b (1-p_{vu})$, and
(2) the sum of the edge probabilities (i.e., the expected number of edges) between $u$ and vertices from $[a,b]$, namely $\sum_{u=a}^b p_{vu}$.
In Section~\ref{sec:applications}, we provide subroutines for computing these values for the Erd\"{o}s-R\'{e}nyi model and the Stochastic Block model.
We also begin by assuming perfect-precision arithmetic, which we relax in Section~\ref{sec:remove-perfect}.

\iffalse
{\color{blue}
Alternatively, we provide an implementation for these two models with a deterministic performance guarantee in Section~\ref{sec:ER-det}.
In this setting, introducing the \func{Vertex-Pair} queries results in an amortized guarantee on the run-time.
The deterministic guarantee comes at the cost of more complicated data-structures
(we use a two-level nested interval tree and binary search tree).
%We then give two modifications to this implementation; a simplified implementation that provides a randomized guarantee in Section~\ref{sec:ER-rand}, and an augmented implementation that also implements \func{Vertex-Pair} queries in Section~\ref{sec:ER-pair}.
}
\fi

First, consider the adjacency matrix $\ADJ$ of $G$, where each entry $\ADJ[u][v]$ can exist in three possible states:
$\ADJ[u][v] = \ONE$ or $\ZERO$ if the algorithm has determined that $\{u,v\}\in E$ or $\{u,v\} \notin E$ respectively,
and $\ADJ[u][v] = \PHI$ if whether $\{u,v\}\in E$ or not will be determined by future random choices
(in fact, the marginal probability of $\mathbb P[(u,v)\in E]$ conditioned on all prior samples is still $p_{uv}$).
Our implementation also maintains the vector $\LAST$ (used in the same sense as \cite{reut}),
where $\LAST[v]$ records the neighbor of $v$ returned in the last call \func{Next-Neighbor}$(v)$, or $\LAST[v]=0$ if no such call has been invoked.
All cells of $\ADJ$ and $\LAST$ are initialized to $\PHI$ and $0$, respectively.

We use the Bernoulli random variable $X_{uv} \sim \mathsf{Bern}(p_{uv})$ when sampling the value of $\ADJ[u][v]=\phi$.
For the sake of analysis, we will frequently view our random process as if the \emph{entire} table of random variables $X_{uv}$ has been sampled
\emph{up-front}, and the algorithm simply ``uncovers'' these variables instead of making coin-flips.
Thus, every cell $\ADJ[u][v]$ is originally $\PHI$, but will eventually take the value $X_{uv}$.

\subparagraph*{Obstacles for maintaining $\ADJ$ explicitly:}
Consider a naive implemention that fills out the cells of $\ADJ$ one-by-one as required by each query;
equivalently, we perform $\func{Vertex-Pair}$ queries on successive vertices until a neighbor is found.
There are two problems with this approach.
Firstly, the algorithm only finds a neighbor, for a \func{Random-Neighbor} or \func{Next-Neighbor} query, with probability $p_{uv}$:
for $G(n,p)$ this requires $1/p$ iterations, which is already infeasible for $p = o(1/\poly(\log n))$.
Secondly, the algorithm may generate a large number of non-neighbors in the process, possibly in random or arbitrary locations.

\subsection{\func{Next-Neighbor} Queries via Run-of-$\ZERO$'s Sampling}\label{sec:ER-rand}

We implement \func{Next-Neighbor}$(v)$ by sampling for the first index $u > \LAST[v]$ such that $X_{vu}=\ONE$,
from a sequence of Bernoulli RVs $\{X_{v, u}\}_{u > \LAST[v]}$.
To do so, we sample a consecutive ``run'' of $\ZERO$'s with probability $\prod_{u=\LAST[v]+1}^{u'} (1-p_{vu})$:
this is the probability that there is no edge between a vertex $v$ and any $u \in (\LAST[v],u']$, which can be computed efficiently by our assumption.
The problem is that, some entries $\ADJ[v][u]$'s in this run may have already been determined (to be $\ONE$ or $\ZERO$)
by queries $\func{Next-Neighbor}(u)$ for $u > \LAST[v]$.
To mitigate this issue, we give a succinct data structure that determines the value of $\ADJ[v][u]$ for $u >\LAST[v]$ and,
more generally, captures the state $\ADJ$, in Section~\ref{sec:nn-ds}.
Using this data structure, we ensure that our sampled run does not skip over any $\ONE$.
Next, for the sampled index $u$ of the first occurrence of $\ONE$,
we check against this data structure to see if $\ADJ[v][u]$ is already assigned to $\ZERO$, in which case we re-sample for a new candidate $u' > u$.
Section~\ref{sec:nn-correctness} discusses the subtlety of this issue.

We note that we do not yet try to handle other types of queries here yet.
We also do not formally bound the number of re-sampling iterations of this approach here, because the argument is not needed by our final algorithm.
Yet, we remark that $O(\log n)$ iterations suffice with high probability, even if the queries are adversarial.
This method can be extended to support \func{Vertex-Pair} queries (but unfortunately not \func{Random-Neighbor} queries).
See Section~\ref{sec:reroll-cont} for full details.

\subsubsection{Data structure}\label{sec:nn-ds}

From the definition of $X_{uv}$, \func{Next-Neighbor}$(v)$ is given by $\min\{u > \LAST[v]: X_{vu} = \ONE\}$.
Let $P_v = \{u: \ADJ[v][u]=1\}$ be the set of known neighbors of $v$, and $w_v = \min \{(P_v \cap (\LAST[v], n])\}$ be its first known neighbor
not yet reported by a $\func{Next-Neighbor}(v)$ query, or equivalently, the next occurrence of $\ONE$ in $v$'s row on $\ADJ$ after $\LAST[v]$.
If there is no known neighbor of $v$ after $\LAST[v]$, we set $w_v = n+1$.
Consequently, $\ADJ[v][u] \in \{\PHI,\ZERO\}$ for all $u \in (\LAST[v], w_v)$,
so \func{Next-Neighbor}$(v)$ is either the index $u$ of the first occurrence of $X_{vu} = \ONE$ in this range, or $w_v$ if no such index exists.

We keep track of $\LAST[v]$ in a dictionary, to avoid any initialization overhead.
Each $P_v$ is maintained as an ordered set, which is also instantiated when it becomes non-empty.
When \func{Next-Neighbor}$(v)$ returns $u$, we add $v$ to $P_u$ and $u$ to $P_v$.
We do not attempt to maintain $\ADJ$ explicitly, as updating it requires replacing up to $\Theta(n)$ $\PHI$'s to $\ZERO$'s
for a single \func{Next-Neighbor} query in the worst case.
Instead, we argue that $\LAST$ and $P_v$'s provide a succinct representation of $\ADJ$ via the following observation.

\begin{restatable}{lemma}{cond}\label{lem:cond-0}
The data structures $\LAST$ and $P_v$'s together provide a succinct representation of $\ADJ$ when only \func{Next-Neighbor} queries are allowed. In particular, $\ADJ[v][u]=\ONE$ if and only if $u \in P_v$. Otherwise, $\ADJ[v][u]=\ZERO$ when $u <\LAST[v]$ or $v < \LAST[u]$. In all remaining cases, $\ADJ[v][u]=\PHI$.
\end{restatable}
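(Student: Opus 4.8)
The plan is to prove the three assertions in turn, noting that the ``$\PHI$'' case is forced once the other two are in hand. Throughout I will use the following elementary facts about an execution consisting only of \func{next-neighbor} queries: (i) $\ADJ$ is symmetric, so $\ADJ[v][u]$ and $\ADJ[u][v]$ are the same entry and $u\in P_v \iff v\in P_u$ by the update rule for $P$; (ii) once a cell leaves the state $\PHI$ it never returns to it, and each $\LAST[v]$ is non-decreasing; (iii) the only events that can take a cell $\ADJ[v][u]$ out of $\PHI$ are the index scans performed inside a call to \func{next-neighbor}$(v)$ or \func{next-neighbor}$(u)$ (revealing a $\ZERO$ inside a sampled run, or committing on a $\ONE$), and the set of indices a single \func{next-neighbor}$(v)$ call decides is exactly $(\LAST[v]_{\mathrm{old}}, c]$, where $c$ is the value it returns and then stores in $\LAST[v]$. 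Fact (iii) is where I invoke the data structure of Section~\ref{sec:nn-ds}: each sampled run of $\ZERO$'s is truncated at $w_v$ so it cannot straddle a known $\ONE$, and a rejected candidate forces the re-sampled run to start strictly above it, so every run of $\ZERO$'s produced during a \func{next-neighbor}$(v)$ call lies below the index eventually returned and written into $\LAST[v]$.

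For the first assertion, $\ADJ[v][u]=\ONE \iff u\in P_v$: the direction $(\Leftarrow)$ holds because $u$ is inserted into $P_v$ only when \func{next-neighbor}$(v)$ returns $u$, which occurs only after $X_{v,u}$ has been decided equal to $\ONE$, and no element is ever removed from $P_v$. For $(\Rightarrow)$, if $\ADJ[v][u]$ is decided to $\ONE$, then by (iii) the deciding scan belongs either to a \func{next-neighbor}$(v)$ call, which then returns $u$ and inserts $u$ into $P_v$, or to a \func{next-neighbor}$(u)$ call, which returns $v$, inserting $v$ into $P_u$ and hence $u$ into $P_v$ by (i).

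For the second assertion, assume $u\notin P_v$, so by the first assertion $\ADJ[v][u]\in\{\PHI,\ZERO\}$, and show $\ADJ[v][u]=\ZERO$ iff $u<\LAST[v]$ or $v<\LAST[u]$. $(\Leftarrow)$ If $u<\LAST[v]$: since $\LAST[v]$ only increases from $0$ and the equality $\LAST[v]=u$ would force $u\in P_v$, at some point a \func{next-neighbor}$(v)$ call had $\LAST[v]$ cross from below $u$ to above $u$, so its scan passed over index $u$ and decided $\ADJ[v][u]$; since $u\notin P_v$ it was decided to $\ZERO$. The case $v<\LAST[u]$ is symmetric. $(\Rightarrow)$ If $\ADJ[v][u]=\ZERO$, let the deciding scan be a \func{next-neighbor}$(v)$ call (the \func{next-neighbor}$(u)$ case is symmetric and yields $v<\LAST[u]$). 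Its returned value $c$ satisfies $u\in(\LAST[v]_{\mathrm{old}},c]$ by (iii); and $c\neq u$, because $c$ is either an index with $X_{v,c}=\ONE$, or $n+1$, or a known neighbor $w_v$, none of which can equal $u$ since $X_{v,u}=\ZERO$ and $u\notin P_v$. Hence $c>u$, so $\LAST[v]>u$. The third assertion now follows: if $u\notin P_v$ and neither $u<\LAST[v]$ nor $v<\LAST[u]$, then $\ADJ[v][u]$ is not $\ONE$ (first assertion) and not $\ZERO$ (contrapositive of the second), hence $\PHI$.

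The step I expect to be delicate, and would write most carefully, is fact (iii) for the improved procedure of Section~\ref{sec:ER-rand}: that a single \func{next-neighbor}$(v)$ call decides precisely the block $(\LAST[v]_{\mathrm{old}},c]$ and ends with $\LAST[v]=c$, even though such a call may sample several runs of $\ZERO$'s and reject candidates that turn out to be already-assigned $\ZERO$'s. This is exactly the point addressed by the truncation-at-$w_v$ and strictly-upward re-sampling of the data structure of Section~\ref{sec:nn-ds}, and it is also the property that Section~\ref{sec:nn-correctness} will need; everything else here is bookkeeping with the symmetry of $\ADJ$ and of $P$, the monotonicity of $\LAST$, and the observation that, with \func{next-neighbor} queries only, no mechanism writes a $\ZERO$ into a cell without advancing one of its two endpoints' $\LAST$ pointer past it.
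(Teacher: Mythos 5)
Your proof is correct and follows essentially the same route as the paper's: the paper's one-sentence argument is exactly your fact (iii), namely that $\ADJ[v][u]$ becomes decided precisely during the first \func{next-neighbor}$(v)$ (or \func{next-neighbor}$(u)$) call returning a value $u'>u$, which simultaneously sets $\LAST[v]>u$ (resp.\ $\LAST[u]>v$). Your write-up just makes explicit the bookkeeping (symmetry of $\ADJ$ and $P$, monotonicity of $\LAST$, truncation at $w_v$) that the paper leaves implicit.
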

\begin{proof}
The condition for $\ADJ[v][u]=\ONE$ clearly holds by constuction.
Otherwise, observe that $\ADJ[v][u]$ becomes \emph{decided} (i.e. its value is changed from $\PHI$ to $\ZERO$)
during the first call to \func{Next-Neighbor}$(v)$ that returns a value $u' > u$ thereby setting $\LAST[v] = u' \implies u<\LAST[v]$, or vice versa.
\end{proof}

\subsubsection{Queries and Updates}
\label{sec:nn-correctness}
\begin{wrapfigure}[15]{R}{0.5\textwidth}
\vspace{-3.0em}
\begin{framed}
    \renewcommand\figurename{Algorithm}
    \caption{Sampling \func{Next-Neighbor}}
    \label{alg:oblivious-coin-toss}
    \begin{algorithmic}[1]
        \Procedure{Next-Neighbor}{$v$}
            \State{$u \gets \LAST[v]$}
            \State{$w_v \gets \min \{(P_v \cap (u, n]) \cup \{n+1\}\}$}
            %\While{$u = w_v$ or $\LAST[u] < v$}
                %\State{\textbf{sample} $F\sim\mathsf{F}(v,u,w_v)$}
                %\State{$u \gets F$}
            %\EndWhile
            \Repeat
                \State{\textbf{sample} $F\sim\mathsf{F}(v,u,w_v)$}
                \State{$u \gets F$}
            \Until{$u = w_v$ or $\LAST[u] < v$}
            \If{$u \neq w_v$}
                \State{$P_v \gets P_v \cap \{u\}$}
                \State{$P_u \gets P_u \cap \{v\}$}
            \EndIf
            \State{$\LAST[v] \gets u$}
            \State \Return $u$
        \EndProcedure
    \end{algorithmic}
\end{framed}
\end{wrapfigure}
We now present Algorithm~\ref{alg:oblivious-coin-toss}, and discuss the correctness of its sampling process.
The argument here is rather subtle and relies on viewing the process as an ``uncovering'' of the table of RVs $X_{uv}$
(introduced in Section~\ref{sec:undirected}).
Consider the following strategy to find \func{Next-Neighbor}$(v)$ in the range $(\LAST[v],w_v)$.
Suppose that we generate a sequence of $w_v-\LAST[v]-1$ independent coin-tosses,
where the $i^{th}$ coin $C_{vu}$ corresponding to $u = \LAST[v]+i$ has bias $p_{vu}$, regardless of whether $X_{vu}$ is decided or not.
Then, we use the sequence $\langle C_{vu} \rangle$ to assign values to \emph{undecided} random variables $X_{vu}$.
The main observation here is that, the \emph{decided} random variables $X_{vu} = \ZERO$ do not need coin-flips,
and the corresponding coin result $C_{vu}$ can be discarded.
Thus, we generate coin-flips until we encounter some $u$ satisfying both $C_{vu} = \ONE$ and $\ADJ[v][u] = \PHI$.

Let $\mathsf{F}(v,a,b)$ denote the probability distribution of the occurrence $u$ of the first coin-flip $C_{vu} = \ONE$ among the neighbors in $(a, b)$.
More specifically, $F\sim\mathsf{F}(v,a,b)$ represents the event that $C_{v,a+1}=\cdots=C_{v,F-1}=\ZERO$ and $C_{v,F}=\ONE$,
which happens with probability $\mathbb P[F=f]=\prod_{u=a+1}^{f-1} (1-p_{vu}) \cdot p_{vf}$.
For convenience, let $F = b$ denote the event where all $C_{vu}=\ZERO$. Our algorithm samples $F_1\sim\mathsf{F}(v,\LAST[v],w_v)$ to find the first occurrence of $C_{v,F_1} = \ONE$, then samples $F_2\sim\mathsf{F}(v,F_1,w_v)$ to find the second occurrence $C_{v,F_2} = \ONE$, and so on.
These values $\{F_i\}$ are iterated as $u$ in Algorithm~\ref{alg:oblivious-coin-toss}.
This process generates $u$ satisfying $C_{vu}=1$ in increasing order,
until we find one that also satisfies $\ADJ[u][v]=\phi$ (this outcome is captured by the condition $\LAST[u] < v$),
or until the next generated $u$ is equal to $w_v$.
Note that once the process terminates at some $u$, we make no implications on the results of any uninspected coin-flips after $C_{vu}$.

\subparagraph*{Obstacles for extending beyond \func{Next-Neighbor} queries:}
There are two main issues that prevent this method from supporting \func{Random-Neighbor} queries.
Firstly, while one might consider applying \func{Next-Neighbor} starting from some random location $u$, to find the minimum $u' \geq u$
where $\ADJ[v][u']=\ONE$, the probability of choosing $u'$ will depend on the probabilities $p_{vu}$'s, and is generally not uniform.
Secondly, in Section~\ref{sec:nn-ds}, we observe that $\LAST[v]$ and $P_v$ together provide a succinct representation of $\ADJ[v][u] = \ZERO$
only for contiguous cells $\ADJ[v][u]$ where $u \leq \LAST[v]$ or $v \leq \LAST[u]$: they cannot handle $\ZERO$ anywhere else.
Unfortunately, in order to support \func{Random-Neighbor} queries, we would need to assign $\ADJ[v][u]$ to $\ZERO$ in random locations
beyond $\LAST[v]$ or $\LAST[u]$.
This cannot be done by the current data structure.
Specifically, to speed-up the sampling process for small $p_{vu}$'s, we must generate many random non-neighbors at once,
but we cannot afford to spend time linear in the number of $\ZERO$'s to update our data structure.
We remedy these issues via the following approach.

\subsection{Final Implementation Using Blocks}
\label{sec:blocks}
We begin this section by focusing first on \func{Random-Neighbor} queries, then extend the construction to the remaining queries.
In order to handle \func{Random-Neighbor}$(v)$, we divide the neighbors of $v$ into \emph{blocks}
$\vec B_v = \{ B^{(1)}_v, B^{(2)}_v, \ldots, B^{(i)}_v, \ldots\}$,
so that each block contains, in expectation, roughly the same number of neighbors of $v$.
We implement \func{Random-Neighbor}$(v)$ by randomly selecting a block $B^{(i)}_v$,
filling in entries $\ADJ[v][u]$ for $u \in B^{(i)}_v$ with $\ONE$'s and $\ZERO$'s, and then reporting a random neighbor from this block.
As the block size may be large when the probabilities are small, instead of using a linear scan,
our \func{Fill} subroutine will be implemented using the ``run-of-$\ZERO$s'' sampling from  Algorithm~\ref{alg:oblivious-coin-toss}
(see Section~\ref{sec:ER-rand}).
Since the number of iterations required by this subroutine is roughly proportional to the number of neighbors,
we choose to allocate a constant number of neighbors in expectation to each block:
with constant probability the block contains some neighbors, and with high probability it has at most $O(\log n)$ neighbors.

As the actual number of neighbors appearing in each block will be different,
we balance out the discrepancies by performing \emph{rejection sampling}.
This equalizes the probability of choosing any neighbor implicitly without knowledge of $\deg(v)$.
Using the fact that the maximum number of neighbors in any block is $\Bo(\log n)$,
we show not only that the probability of success in the rejection sampling process is at least $1/\poly(\log n)$,
but the number of iterations required by \func{Next-Neighbor} is also bounded by $\poly(\log n)$, achieving the overall $\poly(\log n)$ complexities.
Here, we will extensively rely on the assumption that the expected number of neighbors for consecutive vertices,
$\sum_{u=a}^b p_{vu}$, can be approximated efficiently.

\subsubsection{Partitioning and Filling the Blocks}
\label{sec:block_partitioning_and_filling}
We fix a sufficiently large constant $L$, and assign the vertex $u$ to the $\lceil\sum^{u}_{i=1} p_{vi}/L\rceil^\textrm{th}$ block of $v$.
Essentially, each block represents a contiguous range of vertices, where the expected number of neighbors of $v$ in the block is $\approx L$
(for example, in $G(n,p)$, each block contains $\approx L/p$ vertices).
We define $\Gamma^{(i)}(v) = \Gamma(v) \cap B^{(i)}_v$, the neighbors appearing in block $B^{(i)}_v$.
Our construction ensures that $L-1 < \mathbb E \left[|\Gamma^{(i)}(v)|\right] < L+1$ for every $i < |\vec B_v|$
(i.e., the condition holds for all blocks except possibly the last one).

Now, we show that with high probability, all the block sizes $|\Gamma^{(i)}(v)|=\mathcal{O}(\log n)$, and at least a $1/3$-fraction of the blocks are non-empty (i.e., $|\Gamma^{(i)}(v)|>0$), via the following lemmas (proven in Section~\ref{sec:undirected_omitted}).

\begin{restatable}{lemma}{MaxBlockSize}
\label{lem:MaxBlockSize}
With high probability, the number of neighbors in every block, $|\Gamma^{(i)}(v)|$, is at most $ \mathcal{O}(\log n)$.
\end{restatable}

\begin{restatable}{lemma}{EmptyBlock}
\label{lem:EmptyBlock}
With high probability, for every $v$ such that $|\vec B_v| = \Omega(\log n)$ (i.e., $\mathbb E = \Omega(\log n)$), at least a $1/3$-fraction of the blocks $\{B^{(i)}_v\}_{i\in[|\vec B_v|]}$ are non-empty.
\end{restatable}

We consider blocks to be in two possible states -- \filled~or \unfilled. Initially, all blocks are considered \unfilled.
In our algorithm we will maintain, for each block $B^{(i)}_v$, the set $P^{(i)}_v$ of known neighbors of $u$ in block $B^{(i)}_v$;
this is a refinement of the set $P_v$ in Section~\ref{sec:ER-rand}.
We define the behaviors of the procedure $\func{Fill}(v,i)$ as follows.
When invoked on an unfilled block $B^{(i)}_v$, $\func{Fill}(v,i)$ decides whether each vertex $u \in B^{(i)}_v$ is a neighbor of $v$
(implicitly setting $\ADJ[v][u]$ to $\ONE$ or $\ZERO$) unless $X_{vu}$ is already decided; in other words, update $P_v^{(i)}$ to $\Gamma^{(i)}(v)$.
Then $B^{(i)}_v$ is marked as \filled.
We postpone the description of our implementation of $\func{Fill}$ to Section~\ref{sec:fill_implement}, instead using it as a black box.

\subsubsection{Putting it all together: \func{Random-Neighbor} queries}
\label{sec:random_neighbor}
\begin{wrapfigure}[10]{R}{0.47\textwidth}
\vspace{-0.75em}
\begin{framed}
    \renewcommand\figurename{Algorithm}
    \caption{Block sampling.}
    \label{alg:random}
    \begin{algorithmic}
        \Procedure{Random-Neighbor}{$v$}
            \While{True}
                \State{\textbf{sample} $B^{(i)}_v \thicksim_{\mathcal U} \vec B_v$ u.a.r.}
                \If {$B^{(i)}_v$ is not \emph{filled}}
                    \State$\func{Fill}\left( v,i\right)$
                \EndIf
                \State{\textbf{with probability} $\frac{|P_v^{(i)}|}{M}$}
                    \State{\hspace{\algorithmicindent}\Return $u\thicksim_{\mathcal U} P_v^{(i)}$ u.a.r }
            \EndWhile
        \EndProcedure
    \end{algorithmic}
\end{framed}
\end{wrapfigure}

Consider Algorithm~\ref{alg:random} for sampling a random neighbor via rejection sampling.
For simplicity, throughout the analysis, we assume $|\vec B_v| = \Omega(\log n)$;
otherwise, invoke $\func{Fill}(v,i)$ for all $i \in [|\vec B_v|]$ to obtain the entire neighbor list $\Gamma(v)$.

To obtain a random neighbor, we first choose a block $B^{(i)}_v$ uniformly at random, and invoke $\func{Fill}(v,i)$ if the block is \emph{unfilled}.
Then, we \emph{accept} the sampled block for generating our random neighbor with probability proportional to $|P_v^{(i)}|$.
More specifically, if $M = \Theta(\log n)$ is an upper bound on the maximum number of neighbors in any block (see Lemma~\ref{lem:MaxBlockSize}),
we accept block $B^{(i)}_v$ with probability $|P_v^{(i)}|/M$, which is well-defined (i.e., does not exceed $1$) with high probability.
Note that if $P_v^{(i)} = \emptyset$, we sample another block.
If we choose to accept $B^{(i)}_v$, we return a random neighbor from $P_v^{(i)}$.
Otherwise, \emph{reject} this block and repeat the process again.

Since the returned vertex is always a member of $P_v^{(i)}$, a valid neighbor is always returned.
We now show that the algorithm correctly samples a uniformly random neighbor
and bound the number of iterations required for the rejection sampling process.
\begin{restatable}{lemma}{rand-gen-correct}
\label{lem:rand_gen_correct}
Algorithm~\ref{alg:random} returns a uniformly random neighbor of vertex $v$.
\end{restatable}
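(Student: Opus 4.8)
The plan is to show that for each fixed realization of the table $\{X_{v,u}\}$ (equivalently, conditioning on the actual neighbor set $\Gamma(v)$ with degree $d = \deg(v)$), every neighbor $u \in \Gamma(v)$ is returned by Algorithm~\ref{alg:random} with the same probability, namely $1/d$. Since this holds conditionally on every fixed $\Gamma(v)$, it holds unconditionally, giving uniformity. As noted in Section~\ref{sec:ER-naive}, the key is that we never need to ``know'' $d$: it suffices that the returned-probability is \emph{the same} for each neighbor.

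\textbf{Main argument.} Fix $\Gamma(v)$, and recall that the \func{fill} subroutine is faithful, so once $B_v^{(i)}$ is filled, $P_v^{(i)} = \Gamma^{(i)}(v) = \Gamma(v) \cap B_v^{(i)}$, and $\sum_i |P_v^{(i)}| = d$. I would argue that the \textbf{repeat} loop is a standard rejection-sampling loop whose stationary behavior produces each neighbor uniformly. Consider a single pass through the loop body (one sampled bucket index $i$ from the current pool $R$, drawn uniformly). Conditioned on $i$ being sampled and on $B_v^{(i)}$ being nonempty, the algorithm \emph{accepts} with probability $|P_v^{(i)}|/M$ and, upon acceptance, returns a uniformly random element of $P_v^{(i)}$. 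Hence, for any particular neighbor $u \in B_v^{(i)}$, the probability that \emph{this pass} returns $u$ is
\[
\frac{1}{|R|}\cdot \frac{|P_v^{(i)}|}{M}\cdot\frac{1}{|P_v^{(i)}|} = \frac{1}{|R|\cdot M},
\]
which is independent of $u$ and of which (nonempty) bucket $u$ lives in. The only role of an empty bucket is to shrink $R$ (which happens deterministically once such a bucket is sampled, since $\func{fill}$ reveals it is empty), so it affects \emph{when} we terminate but not the relative probabilities among neighbors. Therefore, conditioned on the loop terminating by a \textbf{return} (not by $R = \emptyset$, which cannot happen when $d > 0$ since empty buckets are the only ones removed), each neighbor $u$ is equally likely to be the returned value.

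\textbf{Making this rigorous.} I would formalize the last step by induction on the number of passes, or more cleanly: let $p_u$ be the overall probability that $u$ is returned. By symmetry of the per-pass analysis above, for any two neighbors $u, u'$ (possibly in different buckets), $p_u = p_{u'}$; this is because we can couple the two executions — the sequence of sampled bucket indices and the acceptance coins are the same, and the only asymmetry, the final uniform choice within the accepted bucket, treats all its neighbors symmetrically. Since exactly one neighbor is returned with probability $1$ (the process returns $\bot$ only when $R = \emptyset$, which requires all buckets empty, i.e., $d = 0$), we get $\sum_{u \in \Gamma(v)} p_u = 1$, hence $p_u = 1/d$ for each $u$.

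\textbf{Expected main obstacle.} The subtle point — and the one I would be most careful about — is the interaction between the \emph{shrinking pool} $R$ and the claim of uniformity: one must verify that removing an empty bucket from $R$ does not bias the distribution over the \emph{remaining} neighbors, and that acceptance probabilities $|P_v^{(i)}|/M$ being well below $1$ (so many passes may be rejected) does not distort anything. This is handled by the per-pass symmetry computation above, which shows every accepted return is equally likely to land on any neighbor regardless of history; the proof should emphasize that the per-pass return probability $\tfrac{1}{|R|M}$ for a fixed neighbor depends only on $|R|$, which is a common factor across all neighbors alive at that moment — and every neighbor is alive in every pass (a neighbor's bucket is never removed). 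A secondary caveat is the high-probability caveat that $M$ is a valid upper bound (so $|P_v^{(i)}|/M \le 1$); this follows from Lemma~\ref{lem:max_bucket_size}, and on the low-probability failure event the $L_1$/total-variation error is absorbed into the $\epsilon = n^{-c}$ slack allowed by the distribution-equivalence requirement.
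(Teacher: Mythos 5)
Your proposal is correct and follows essentially the same route as the paper's proof: the core computation is the identical per-iteration symmetry argument, showing a fixed neighbor $u$ is returned in a given pass with probability $\frac{1}{|R|}\cdot\frac{|P_v^{(i)}|}{M}\cdot\frac{1}{|P_v^{(i)}|}=\frac{1}{|R|\cdot M}$, independent of $u$. Your write-up merely adds detail the paper leaves implicit (passing from per-iteration to overall uniformity, the role of empty-bucket removals, and the high-probability validity of $M$), which is fine but not a different argument.
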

\begin{proof}
It suffices to show that the probability that any neighbor in $\Gamma(v)$ is return with uniform positive probability, within the same iteration.
Fixing a single iteration and consider a vertex $u\in P_v^{(i)}$, we compute the probability that $u$ is accepted.
The probability that $B^{(i)}_vi$ is chosen is $1/|\vec B_v|$, the probability that $B^{(i)}_v$ is accepted is $|P_v^{(i)}|/M$,
and the probability that $u$ is chosen among $P_v^{(i)}$ is $1/|P_v^{(i)}|$.
Hence, the overall probability of returning $u$ in a single iteration
of the loop is $1/(|\vec B_v|\cdot M)$, which is positive and independent of $u$.
Therefore, each vertex is returned with the same probability.
\end{proof}

\begin{restatable}{lemma}{rand-gen-fast}
\label{lem:rand_gen_fast}
Algorithm~\ref{alg:random} terminates in $\mathcal{O}(\log n)$ iterations in expectation, or $\mathcal{O}(\log^2 n)$ iterations w.h.p.
\end{restatable}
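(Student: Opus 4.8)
The plan is to bound the number of loop iterations of Algorithm~\ref{alg:random} by analyzing two things separately: the probability that any fixed iteration succeeds (returns a neighbor), and the total number of \func{fill} operations that may ever be triggered. First I would observe that, conditioned on the bucket structure satisfying the high-probability guarantees of Lemma~\ref{lem:max_bucket_size} (all $|\Gamma^{(i)}(v)| \le M = O(\log n)$) and Lemma~\ref{lem:empty_bucket} (at least a $1/3$-fraction of buckets are non-empty), a single iteration returns a neighbor with probability $\sum_{u \in \Gamma(v)} 1/(|R|\cdot M) = |\Gamma(v) \cap (\bigcup_{i \in R} B_v^{(i)})|/(|R|\cdot M)$. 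Since $i$ is removed from $R$ only when $B_v^{(i)}$ is empty, every non-empty bucket remains in $R$, so the numerator is at least the number of non-empty buckets among those in $R$; using that at least $|B_v|/3$ buckets are non-empty and $|R| \le |B_v|$, the success probability of each iteration is at least $\tfrac{1}{3M} = \Omega(1/\log n)$. Hence the number of iterations until success is stochastically dominated by a geometric random variable with this parameter, giving $O(\log n)$ iterations in expectation and $O(\log^2 n)$ with high probability by the standard tail bound on geometric variables.

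Next I would handle the subtlety that iterations are not quite i.i.d. Bernoulli trials: removing empty buckets from $R$ changes the per-iteration success probability, and \func{fill} is only invoked the first time a bucket is sampled. The key point is that the lower bound $\Omega(1/\log n)$ on the success probability holds in \emph{every} iteration regardless of the history, because removing empty buckets only increases the fraction of non-empty buckets remaining in $R$; so the geometric domination argument goes through unconditionally (on the good event). Separately, I would note that each distinct bucket of $v$ is filled at most once across the whole execution, and by Lemma~\ref{lem:empty_bucket} combined with a global counting argument, the number of buckets is $|B_v| = O(\ee[|\Gamma(v)|]/L) + 1$; but for the per-query bound what matters is just that the loop makes $O(\log n)$ expected iterations, each of which calls \func{fill} at most once, so at most $O(\log n)$ fills per query (in expectation), consistent with the claim stated just before the lemma.

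Finally I would assemble the pieces: the bad events from Lemmas~\ref{lem:max_bucket_size} and~\ref{lem:empty_bucket} each fail with probability $n^{-\Omega(1)}$, so by a union bound over the (at most $\poly(n)$ many) relevant buckets and vertices touched, with high probability the bucket structure is ``good'' throughout, and conditioned on that, the iteration count is $O(\log n)$ in expectation and $O(\log^2 n)$ with high probability. The main obstacle I anticipate is making the conditioning rigorous: the bucket contents $\{X_{v,u}\}$ are revealed incrementally by \func{fill}, and the adversary chooses queries adaptively, so one must argue that the events ``$|\Gamma^{(i)}(v)| \le M$'' and ``$\ge 1/3$ of buckets non-empty'' can be treated as holding up front (they are determined by the $X_{v,u}$ table, which is fixed independently of the query order in the ``uncover'' view of Section~\ref{sec:ER-naive}), so that the per-iteration success probability bound is genuinely valid against an adaptive adversary. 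Once that is in place, the geometric tail bound finishes the proof.
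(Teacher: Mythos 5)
Your proof follows essentially the same route as the paper's: lower-bound the per-iteration success probability by $1/(3M) = \Omega(1/\log n)$ using Lemma~\ref{lem:max_bucket_size} and Lemma~\ref{lem:empty_bucket}, then apply a geometric tail bound to get $O(\log n)$ iterations in expectation and $O(\log^2 n)$ with high probability. Your additional care about the shrinking pool $R$ (only empty buckets are removed, so the bound holds in every iteration) and about adaptivity via the uncovering view is a sound elaboration of details the paper leaves implicit, not a different approach.
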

\begin{proof}
Using Lemma~\ref{lem:EmptyBlock}, we conclude that
%a $(1/3)$-fraction of the blocks are non-empty with high probability, and hence
the probability of choosing a non-empty block is at least $1/3$.
Since $M = \Theta(\log n)$ by Lemma~\ref{lem:MaxBlockSize}, the success probability of each iteration is at least $1/(3M)=\Omega(1/\log n)$,
Thus, the number of iterations required is $O(\log^2 n)$ with high probability.
\end{proof}

\subsection{Implementation of \func{Fill}}
\label{sec:fill_implement}

\begin{wrapfigure}[13]{R}{0.45\textwidth}
\vspace{-3em}
\begin{framed}
    \renewcommand\figurename{Algorithm}
    \caption{Filling a block}
    \label{alg:fill}
    \begin{algorithmic}
    \Procedure{Fill}{$v,i$}
    \State{$(a,b) \gets B^{(i)}_v$}
    %	\State{$w_v \gets \min \{(P_v \cap (u, n]) \cup \{n+1\}\}$}
    %    \For{$b$ in $P_v^{(i)}\cup\{\END(v,i)\}$}
            \While{$a < b$}
                \State{\textbf{sample} $u\sim\mathsf{F}(v,a,b)$}
                \State{$B_u^{(j)} \gets$ block containing $v$}
    %            \State{$j\gets \BLOCK(u,v)$}
                \If{$B_u^{(j)}$ is not \filled}
                    \State{$P_v^{(i)}\gets P_v^{(i)}\cup\{u\}$}
                    \State{$P_u^{(j)}\gets P_u^{(j)}\cup\{v\}$}
                \EndIf
                \State{$a \gets u$}
            \EndWhile
            \State{\textbf{mark} $B_u^{(j)}$ as \filled}
    %        \State \Return $P_v^{(i)}$
    %    \EndFor
    \EndProcedure
    \end{algorithmic}
\end{framed}
\end{wrapfigure}

Lastly, we describe the implementation of the $\func{Fill}$ procedure, employing the approach of skipping non-neighbors, as developed for Algorithm~\ref{alg:oblivious-coin-toss}. We aim to simulate the following process: perform coin-tosses $C_{vu}$ with probability $p_{vu}$ for every $u \in B^{(i)}_v$ and update $\ADJ[v][u]$'s according to these coin-flips unless they are decided (i.e., $\ADJ[v][u] \neq \PHI$). We directly generate a sequence of $u$'s where the coins $C_{vu} = \ONE$, then add $u$ to $P_v$ and vice versa if $X_{vu}$ has not previously been decided. Thus, once $B^{(i)}_v$ is \filled, we will obtain $P_v^{(i)} = \Gamma^{(i)}(v)$ as desired.

As discussed in Section~\ref{sec:ER-rand}, while we have recorded all occurrences of $\ADJ[v][u]=\ONE$ in $P_v^{(i)}$,
we need an efficient way of checking whether $\ADJ[v][u] = \ZERO$ or $\PHI$. In Algorithm~\ref{alg:oblivious-coin-toss},
$\LAST$ serves this purpose by showing that $\ADJ[v][u]$ for all $u \leq \LAST[v]$ are decided as shown in Lemma~\ref{lem:cond-0}.
Here instead, we maintain a single bit marking whether each block is \filled~or \unfilled:
a \filled~block implies that $\ADJ[v][u]$ for all $u \in B^{(i)}_v$ are decided.
The block structure along with the mark bits, unlike $\LAST$, is capable of handling intermittent ranges of intervals,
which is sufficient for our purpose, as shown in the following lemma.
This yields the implementation of Algorithm~\ref{alg:fill} for the $\func{Fill}$ procedure
fulfilling the requirement previously given in Section~\ref{sec:block_partitioning_and_filling}.

\begin{restatable}{lemma}{condFill}\label{lem:cond-0-fill}
The data structures $P_v^{(i)}$'s and the block marking bits together provide a succinct representation of $\ADJ$ as long as modifications to $\ADJ$ are performed solely by the \func{Fill} operation in Algorithm~\ref{alg:fill}. In particular, let $u \in B^{(i)}_v$ and $v \in B_u^{(j)}$. Then, $\ADJ[v][u]=\ONE$ if and only if $u \in P_v^{(i)}$. Otherwise, $\ADJ[v][u]=\ZERO$ when at least one of $B^{(i)}_v$ or $B_u^{(j)}$ is marked as \filled. In all remaining cases, $\ADJ[v][u]=\PHI$.
\end{restatable}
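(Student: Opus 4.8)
The plan is to mirror the structure of the proof of Lemma~\ref{lem:cond-0}, but tracking buckets instead of the scalar $\LAST[v]$. The statement has three cases, so I would argue each in turn. First, the characterization $\ADJ[v][u]=\ONE \iff u\in P_v^{(i)}$ holds directly by construction: inspecting Algorithm~\ref{alg:fill}, an element is added to $P_v^{(i)}$ exactly when a coin-toss $C_{v,u}=\ONE$ is sampled for a previously undecided $X_{v,u}$ (i.e.\ before either $B_v^{(i)}$ or $B_u^{(j)}$ is filled), and this is precisely the moment $\ADJ[v][u]$ is set to $\ONE$; symmetrically $v$ is added to $P_u^{(j)}$. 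Since $\func{fill}$ never removes elements from these sets and never changes a $\ONE$ back, the equivalence is maintained as an invariant. I would also note that because $\func{fill}$ adds $u$ to $P_v^{(i)}$ and $v$ to $P_u^{(j)}$ \emph{together}, the representation stays symmetric, consistent with $\ADJ[v][u]=\ADJ[u][v]$.

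Next, for the $\ZERO$ case, I would argue: $\ADJ[v][u]=\ZERO$ iff $X_{v,u}$ has become decided with value $0$, and — under the hypothesis that $\ADJ$ is modified only by $\func{fill}$ — the only events that decide $X_{v,u}$ are a call $\func{fill}(v,i)$ (which, upon completion, marks $B_v^{(i)}$ as \filled~and has decided every $X_{v,u'}$ for $u'\in B_v^{(i)}$) or a call $\func{fill}(u,j)$ (marking $B_u^{(j)}$ as \filled). Hence $X_{v,u}$ is decided iff at least one of $B_v^{(i)}$, $B_u^{(j)}$ is \filled; combining with the $\ONE$ characterization, $\ADJ[v][u]=\ZERO$ iff at least one of the two buckets is \filled~and $u\notin P_v^{(i)}$ (equivalently $v\notin P_u^{(j)}$). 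The remaining case — neither bucket \filled, and hence $X_{v,u}$ undecided — gives $\ADJ[v][u]=\PHI$ by definition, completing the trichotomy. I should double-check the one subtlety in Algorithm~\ref{alg:fill}: even when $X_{v,u}$ is \emph{already} decided, the coin $C_{v,u}$ may still be (wastefully) sampled, but in that branch the algorithm does \emph{not} add $u$ to $P_v^{(i)}$ (the ``if $B_u^{(j)}$ is not \filled'' guard fails, using the fact that a decided $X_{v,u}$ forces one of the buckets filled), so no spurious $\ONE$ is recorded — this is exactly the $Y_{v,u}$ vs.\ $X_{v,u}$ distinction from Section~\ref{sec:ER-rand}.

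The structural point that needs care — the main obstacle — is justifying the implicit claim that a \filled~bucket $B_v^{(i)}$ has \emph{all} of its entries decided, i.e.\ that $\func{fill}(v,i)$ indeed runs to $a\ge b$ and resolves every $u\in B_v^{(i)}$. One must observe that $\func{fill}$ samples first occurrences of $C_{v,u}=\ONE$ in increasing order via $\distr{F}(v,a,b)$, advancing $a\gets u$ each time, so after the loop every $u\in[a_{\text{start}},b)$ has had its coin examined: those with $C_{v,u}=\ONE$ and $X_{v,u}$ undecided were added to $P_v^{(i)}$ (value $\ONE$), and all others are implicitly $\ZERO$ (or were already decided), so upon marking, the bucket is fully resolved. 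A second delicate point is the asymmetry of the hypothesis ``modifications performed solely by \func{fill}'': this is what rules out the arbitrary $\ZERO$-assignments that broke the $\LAST$-based representation in Section~\ref{sec:ER-pair}, and it must be invoked explicitly to conclude that no $X_{v,u}$ is decided outside a $\func{fill}$ call. With these observations in place the three cases close routinely.
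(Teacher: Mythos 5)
Your proposal is correct and follows essentially the same route as the paper's (much terser) proof: the $\ONE$ case holds by construction, and $\ADJ[v][u]$ becomes decided precisely during a $\func{fill}(v,i)$ or $\func{fill}(u,j)$ call, which marks the corresponding bucket as \filled. Your additional observations --- that the guard ``$B_u^{(j)}$ not \filled'' is equivalent to $X_{v,u}$ being undecided, and that a completed \func{fill} resolves every entry of its bucket --- are exactly the details the paper leaves implicit, so there is no gap and no genuinely different approach.
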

\begin{proof}
The condition for $\ADJ[v][u]=\ONE$ still holds by construction. Otherwise, observe that $\ADJ[v][u]$ becomes decided precisely during a \func{Fill}$(v,i)$ or a \func{Fill}$(u,j)$ operation, which thereby marks one of the corresponding blocks as \filled.
\end{proof}

%More formally, let $u \in B^{(i)}_v$ and $v \in B_u^{(j)}$. Similarly to For $\ADJ[v][u] \neq \ONE$ (i.e., $u \notin B^{(i)}_v$ and $v \notin B_u^{(i)}$), we maintain the following invariant: $\ADJ[v][u]\neq \PHI$ if and only if at least one of $B^{(i)}_v$ or $B_u^{(j)}$ is \filled. Ensuring this invariant is very simple for Algorithm~\ref{alg:random}: fill an entire block at a time. Thus, we obtain the following criteria for computing $\ADJ[v][u]$ while filling $B^{(i)}_v$: if $u \in P_v$ then $\ADJ[v][u]=\ONE$; else if $B_u^{(j)}$ is \filled~then $\ADJ[v][u] = \ZERO$; else $\ADJ[v][u] = \PHI$.
Note that $P_v^{(i)}$'s, maintained by our implementation, are initially empty but may not still be empty at the beginning of the \func{Fill} function call. These $P_v^{(i)}$'s are again instantiated and stored in a dictionary once they become non-empty.
Further, observe that the coin-flips are simulated independently of the state of $P_v^{(i)}$, so the number of iterations of Algorithm~\ref{alg:fill} is the same as the number of coins $C_{vu} = \ONE$ which is, in expectation, a constant (namely $\sum_{u\in B^{(i)}_v} \mathbb P[C_{vu}=\ONE] = \sum_{u\in B^{(i)}_v} p_{vu} \leq L+1$). % at most $M = O(\log n)$ with high probability (over the entire process of filling all $O(n^2)$ blocks throughout the iteration of the implementation).
%Identifying $B_u^{(j)}$ containing $v$ requires a binary search, adding $O(\log n)$ time per iteration.

By tracking the resource required by Algorithm~\ref{alg:fill} we obtain the following lemma; note that ``additional space'' refers to the enduring memory that the implementation must allocate and keep even after the execution, not its computation memory. The $\log n$ factors in our complexities are required to perform binary-search for the range of $B^{(i)}_v$, or for the value $u$ from the CDF of $\mathsf{F}(u,a,b)$, and to maintain the ordered sets $P_v^{(i)}$ and $P_u^{(j)}$.

\begin{restatable}{lemma}{fill_time}
\label{lem:fill_time}
Each execution of Algorithm~\ref{alg:fill} (the \func{Fill} operation) on an \unfilled~block $B^{(i)}_v$, in expectation:
\begin{itemize}
\item terminates within $\bo(1)$ iterations (of its \textup{\textbf{repeat}} loop);
\item computes $\bo(\log n)$ quantities of $\prod_{u \in [a,b]} (1-p_{vu})$ and $\sum_{u\in[a,b]} p_{vu}$ each;
\item uses additional $\mathcal O(\log n)$ time, $\mathcal O(1)$ random $\log n$-bit words, and $\mathcal O(1)$ additional space.
\end{itemize}
\end{restatable}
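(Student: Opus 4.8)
The plan is to analyze Algorithm~\ref{alg:fill} directly, bounding each resource separately. First I would establish the bound on the number of iterations of the \textbf{repeat} loop. As already observed in the text preceding the lemma, the loop advances by sampling $u \sim \distr{F}(v,a,b)$, i.e., by jumping to the next index with $C_{v,u} = \ONE$; the coin-tosses $C_{v,u}$ are simulated independently of the current contents of $P_v^{(i)}$ and of which cells are already decided. Hence the number of iterations equals the number of indices $u \in B_v^{(i)}$ with $C_{v,u} = \ONE$, which is a sum of independent Bernoulli variables with $\ee\big[\sum_{u \in B_v^{(i)}} C_{v,u}\big] = \sum_{u \in B_v^{(i)}} p_{v,u} \le L+1 = \bo(1)$ by the bucket construction in Section~\ref{sec:bucket_partition}. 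This gives the $\bo(1)$ expected iteration bound; the Chernoff bound of Lemma~\ref{lem:max_bucket_size} additionally gives $\bo(\log n)$ iterations with high probability, which I would mention for completeness but is not strictly needed for the in-expectation statement.

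Next I would count the calls to the assumed oracles. Sampling a single $F \sim \distr{F}(v,a,b)$ via its CDF (as in the ``Sampling via a CDF'' paragraph of Section~\ref{sec:model}) uses $\bo(\log n)$ binary-search steps, each evaluating the CDF $\pp[F \le f] = 1 - \prod_{u=a+1}^{f}(1-p_{v,u})$, i.e., one evaluation of a $\prod(1-p_{v,u})$ quantity. Determining the range $(a,b) = B_v^{(i)}$ and, inside the loop, the bucket $B_u^{(j)}$ of $u$ on $u$'s side, each requires a binary search over prefix sums $\sum_{i=1}^{u} p_{v,i}$ (to invert the map $u \mapsto \lceil \sum_{i=1}^u p_{v,i}/L\rceil$), contributing $\bo(\log n)$ evaluations of $\sum p_{v,u}$ quantities per iteration. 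Multiplying the $\bo(1)$ expected iterations by the $\bo(\log n)$ oracle calls per iteration yields the claimed $\bo(\log n)$ expected count of each type of quantity; here I would invoke linearity of expectation together with the fact that the per-iteration oracle cost is a fixed $\bo(\log n)$ independent of the random outcomes, so no subtlety with Wald-type arguments arises beyond noting the iteration count has finite expectation.

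For the remaining resources: apart from the oracle evaluations, each iteration performs $\bo(1)$ insertions into the ordered sets $P_v^{(i)}$ and $P_u^{(j)}$ (each $\bo(\log n)$ time via a balanced BST keyed by vertex ID), a constant number of dictionary lookups/instantiations for $P_v^{(i)}, P_u^{(j)}$ and for the \filled/\unfilled~marking bits (hashing, $\bo(1)$ expected), one draw of a random $N$-bit word for the CDF sample, and $\bo(1)$ scratch space; the final marking step is $\bo(1)$. Summing over $\bo(1)$ expected iterations gives $\bo(\log n)$ time, $\bo(1)$ random $N$-bit words, and $\bo(1)$ additional enduring space (the only persistent new storage is the $\bo(1)$ new entries appended to $P_v^{(i)}$, $P_u^{(j)}$ and a constant number of mark bits, per \func{fill} call). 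I would close by noting correctness of the invariant $P_v^{(i)} = \Gamma^{(i)}(v)$ upon completion follows from Lemma~\ref{lem:cond-0-fill} — a decided cell $\ADJ[v][u] = \ONE$ sits in $P_v^{(i)}$, and a decided $\ZERO$ is correctly skipped because its bucket on one side is already \filled, so re-sampling would not wrongly re-add it — but since the lemma as stated is purely about resources, this is a remark rather than part of the proof.

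The main obstacle I anticipate is being careful that the expected-iteration bound legitimately composes with the per-iteration cost: because the coin process is defined independently of the data-structure state, the number of iterations is genuinely a sum of independent Bernoullis with bounded mean regardless of the adversary's prior queries, so linearity of expectation applies cleanly; the only thing to watch is that ``with probability $\tfrac{|P_v^{(i)}|}{M}$''-style branching lives in Algorithm~\ref{alg:random}, not in \func{fill} itself, so \func{fill}'s cost is not inflated by rejection. A secondary subtlety is the distinction, emphasized in the lemma statement, between computation memory (scratch space used during binary searches, $\bo(\log n)$ words, which is transient) and the $\bo(1)$ \emph{additional space} that persists in the generator's data structures after the call returns; I would make sure the bound I claim refers to the latter.
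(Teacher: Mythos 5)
Your proposal is correct and follows essentially the same argument the paper gives: the iteration count is the number of coins $C_{v,u}=\ONE$ in the bucket (plus the final terminating sample), independent of the data-structure state and hence at most $L+1=\bo(1)$ in expectation, while the $\bo(\log n)$ factors come exactly from the binary searches for the bucket range and the CDF sample and from maintaining the ordered sets $P_v^{(i)}, P_u^{(j)}$, with ``additional space'' read as enduring memory. Your extra remarks on composing the expected iteration count with the deterministic per-iteration cost, and on rejection sampling living in Algorithm~\ref{alg:random} rather than \func{fill}, are consistent with (and slightly more explicit than) the paper's accounting.
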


Observe that the number of iterations required by Algorithm~\ref{alg:fill} only depends on its random coin-flips and independent of the state of the algorithm.
Combining with Lemma~\ref{lem:rand_gen_fast}, we finally obtain polylogarithimc resource bound for our implementation of $\func{Random-Neighbor}$.

\begin{restatable}{corollary}{random_neighbor_time}
\label{cor:random_neighbor_time}
Each execution of Algorithm~\ref{alg:random} (the \func{Random-Neighbor} query), with high probability,
\begin{itemize}
\item terminates within $\bo(\log^2 n)$ iterations (of its \textup{\textbf{repeat}} loop);
\item computes $\bo(\log^3 n)$ quantities of $\prod_{u \in [a,b]} (1-p_{vu})$ and $\sum_{u\in[a,b]} p_{vu}$ each;
\item uses an additional $\mathcal O(\log^3 n)$ time, $\mathcal O(\log^2 n)$ random words, and $\mathcal O(1)$ additional space.
\end{itemize}
\end{restatable}

\paragraph*{Supporting Other Query Types along with \func{Random-Neighbor}}
\begin{itemize}
\item \func{Vertex-Pair}(u,v): We simply need to make sure that Lemma~\ref{lem:cond-0-fill} holds, so we first apply \func{Fill}$(u,j)$ on block $B_u^{(j)}$ containing $v$ (if needed), then answer accordingly.
\item \func{Next-Neighbor}(v): We maintain $\LAST$, and \func{Fill} repeatedly until we find a neighbor. Recall that by Lemma~\ref{lem:EmptyBlock}, the probability that a particular block is empty is $\le 2/3$ Then with high probability, there exists no $\omega(\log n)$ \emph{consecutive} empty blocks $B^{(i)}_v$'s for any vertex $v$, and thus \func{Next-Neighbor} only invokes up to $\bo(\log n)$ calls to \func{Fill}.
\end{itemize}

We summarize the results so far with through the following theorem.

\UndirectedGrand*

We have also been implicitly assuming perfect-precision arithmetic and we relax this assumption in Section~\ref{sec:remove-perfect}.
In the following Section~\ref{sec:undirected_applications}, we show applications of Theorem~\ref{thm:grand} to the $G(n,p)$ model,
and the Stochastic Block model under random community assignment,
by providing formulas and by constructing data structures for computing the quantities specified in Theorem~\ref{thm:grand}.

\subsection{Applications to Erd\"{o}s-R\'{e}nyi Model and Stochastic Block Model}
\label{sec:undirected_applications}
In this section we demonstrate the application of our techniques to
two well known, and widely studied models of random graphs. That is, as required by Theorem~\ref{thm:grand}, we must provide a method for computing the quantities $\prod_{u=a}^b (1-p_{vu})$ and $\sum_{u=a}^b p_{vu}$ of the desired random graph families in logarithmic time, space and random bits.
Our first implementation focuses on the well known Erd\"{o}s-R\'{e}nyi model -- $G(n,p)$: in this case, $p_{vu} = p$ is uniform and our quantities admit closed-form formulas.
%It is quite simple to calculate the CDF for uniform probabilities.
%{\color{red} Should we still be explicitly mentioning the CDFs? now that we have two ``assumptions'' (product and sum) maybe we should tone this section as actually computing them (rather than just providing CDFs). also did we actually mention sums?}

Next, we focus on the Stochastic Block model with randomly assigned communities.
Our implementation assigns each vertex to a community in $\{C_1, \ldots, C_r\}$ identically and independently at random,
according to some given distribution $\mathsf{R}$ over the communities.
We formulate a method of sampling community assignments locally.
This essentially allows us to sample from the \emph{multivariate hypergeometric distribution},
using $\tilde{\mathcal O}(1)$ random bits, which may be of independent interest.
We remark that, as our first step, we sample the number of vertices of each community.
That is, our construction can alternatively support a community assignment where the number of vertices of each community is given,
under the assumption that the \emph{partition} of the vertex set into communities is chosen uniformly at random.

\subsubsection{Erd\"{o}s-R\'{e}nyi Model}
\label{sec:app_er}
As $p_{vu} = p$ for all edges $\{u,v\}$ in the Erd\"{o}s-R\'{e}nyi $G(n,p)$ model, we have the closed-form formulas $\prod_{u=a}^b (1-p_{vu}) = (1-p)^{b-a+1}$ and $\sum_{u=a}^b p_{vu} = (b-a+1)p$, which can be computed in constant time according to our assumption, yielding the following corollary.

\ERGrand*

We remark that there exists an alternative approach that picks $F\sim\mathsf{F}(v,a,b)$ directly via a closed-form formula $a+\lceil\frac{\log U}{\log (1-p)}\rceil$ where $U$ is drawn uniformly from $[0,1)$, rather than binary-searching for $U$ in its CDF. Such an approach may save some $\poly(\log n)$ factors in the resources, given the prefect-precision arithmetic assumption. This usage of the $\log$ function requires $\Omega(n)$-bit precision, which is not applicable to our computation model.

While we are able to generate our random graph on-the-fly supporting all three types of queries, our construction still only requires $\bo(m+n)$ space ($\log n$-bit words) in total at any state; that is, we keep $\bo(n)$ words for $\LAST$, $\bo(1)$ words per neighbor in $P_v$'s, and one marking bit for each block (where there can be up to $m+n$ blocks in total). Hence, our memory usage is nearly optimal for the $G(n,p)$ model:

\EROptimal*

\iffalse
The deterministic version (Section~\ref{sec:ER-det}) does not require the extra overhead resulting from failed iterations.
However, the two level data-structure introduces an extra $\Bo(\log n)$ factor, resulting in the same overall running time.
However, this only requires one $N$-bit random word.
\fi

\subsubsection{Stochastic Block model}
\label{sec:application_sbm}
In the Stochastic Block model, each vertex is assigned to some community $C_i$, $i \in [r]$.
By partitioning the product by communities, we may rewrite the desired formulas, for $v \in C_i$,
as $\prod_{u=a}^b (1-p_{vu}) = \prod_{j=1}^r (1-p_{ij})^{|[a,b]\cap C_j|}$ and $\sum_{u=a}^b p_{vu}=\sum_{j=1}^r |[a,b]\cap C_j|\cdot p_{ij}$.
Thus, it suffices to design a data structure that is able to efficiently count the number of occurrences of vertices of each community
in any contiguous range (namely the value $|[a,b]\cap C_j|$ for each $j \in [r]$),
where the vertices are assigned communities according to a given distribution $\mathsf{R}$.
To this end, we use the following lemma,
yielding an implementation for the Stochastic Block model using $O(r\, \poly(\log n))$ resources per query.

\begin{restatable}{theorem}{res:sbm-data}\label{thm:sbm-data}
There exists a data structure that samples a community for each vertex independently at random from $\mathsf{R}$
with $\frac{1}{\poly(n)}$ error in the $L_1$-distance, and supports queries that ask for the number of occurrences of vertices of each community
in any contiguous range, using $O(r\,\poly(\log n))$ time, random bits, and additional space per query.
Further, this data structure may be implemented in such a way that requires no overhead for initialization.
\end{restatable}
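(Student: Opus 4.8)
The plan is to adapt the ``lazy'' sampling tree of \cite{huge} so that, instead of sampling a single bit per leaf, each internal node maintains a vector of $r$ community counts for the range of vertex indices it governs, and these vectors are sampled on demand by splitting a parent's count vector among its two children. I would build an implicit balanced binary tree over the index range $[1,n]$ (so $\bo(\log n)$ depth), where the root is associated with the count vector $\langle N_1,\dots,N_r\rangle$ of how many of the $n$ vertices fall in each community. The root vector itself is drawn the first time it is needed: since each vertex is i.i.d.\ from $\distr{R}$, $\langle N_1,\dots,N_r\rangle$ is multinomially distributed, which I sample by drawing $N_1$ from the appropriate binomial, then $N_2$ from a binomial conditioned on the remainder, and so on --- each of the $r$ draws done via the CDF-sampling primitive of Section~\ref{para:CDF} (the binomial CDF is computable to $N$-bit precision), incurring $\bo(r\,\poly(\log n))$ cost and $\tfrac{1}{\poly(n)}$ $L_1$-error per node.

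First I would specify the split rule: given a node with count vector $\langle c_1,\dots,c_r\rangle$ over a range of $m$ indices that is partitioned by its children into a left block of $m_L$ indices and a right block of $m_R = m - m_L$ indices, the number of community-$j$ vertices landing in the left child, \emph{jointly over all $j$}, follows the multivariate hypergeometric distribution with parameters $(c_1,\dots,c_r; m_L)$ --- this is exactly the statement that conditioned on the totals, a uniformly random assignment of which indices get which community makes each block's composition hypergeometric. I sample this vector coordinate by coordinate: the left child's share of community $1$ is univariate hypergeometric given $(c_1, \sum_j c_j, m_L)$, then community $2$'s share is hypergeometric conditioned on the updated population and remaining slots, etc. Each coordinate is again one CDF-sample (the hypergeometric CDF is computable to the required precision), so a node split costs $\bo(r\,\poly(\log n))$ time, random words and space, with $\tfrac{1}{\poly(n)}$ error. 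To answer a count query for $[a,b]$, I decompose $[a,b]$ into $\bo(\log n)$ canonical tree nodes, force those nodes (and their $\bo(\log n)$ ancestors) to be sampled if they are not already, and sum the relevant coordinates; reaching a single leaf determines one vertex's community. Nodes are instantiated only when touched and stored in a dictionary keyed by their index range, so there is no initialization overhead --- this is the same device used for $\LAST$ and the $P_v^{(i)}$'s earlier.

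For correctness, the key point is a \emph{consistency of the hierarchical decomposition}: sampling the multinomial at the root and then recursively splitting via multivariate hypergeometrics yields, at the leaves, exactly $n$ independent $\distr{R}$-draws --- this is the standard fact that a multinomial vector conditioned on a sub-population split is hypergeometric, applied inductively down the tree, together with the observation that the \emph{order} in which we reveal nodes does not matter because each node's conditional distribution depends only on its parent's already-revealed vector. The finite-precision error is controlled exactly as in Lemma~\ref{lemma:transition}: only $\bo(\poly(n))$ nodes are ever sampled over the whole execution, each introducing $\tfrac{1}{\poly(n)}$ $L_1$-error, so the total deviation from the true community-assignment distribution is $\tfrac{1}{\poly(n)}$. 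Plugging the resulting $|[a,b]\cap C_j|$ counts into the closed forms $\prod_{u=a}^b(1-p_{v,u}) = \prod_{j=1}^r (1-p_{i,j})^{|[a,b]\cap C_j|}$ and $\sum_{u=a}^b p_{v,u} = \sum_{j=1}^r |[a,b]\cap C_j|\,p_{i,j}$ then supplies the two quantities required by Theorem~\ref{thm:grand}, giving the $\bo(r\,\poly(\log n))$ overhead. The main obstacle I anticipate is verifying the hypergeometric split rule and the inductive consistency argument rigorously --- in particular making sure that lazily revealing an arbitrary subset of nodes (driven by adversarial queries) never creates an inconsistency, which amounts to checking that the joint law of any set of revealed nodes is a marginal of the fully-sampled tree; the numerical/precision bookkeeping for the hypergeometric CDF is routine by comparison.
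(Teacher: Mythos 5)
Your overall architecture is exactly the paper's: a lazily instantiated balanced binary tree over $[n]$ whose root holds a multinomially sampled count vector, whose children are filled in on demand by splitting a parent's counts according to the multivariate hypergeometric distribution, with range queries answered by decomposing $[a,b]$ into $\bo(\log n)$ canonical nodes and the finite-precision error charged per sample as in Lemma~\ref{lemma:transition}. The chain rule you invoke (coordinate $j$ of the split is univariate hypergeometric conditioned on the already-drawn coordinates) is also correct, and matches the paper's color-peeling step.

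The genuine gap is the sampling primitive you rely on for each coordinate. You assert that ``the hypergeometric CDF is computable to the required precision'' so that the CDF-inversion device of the preliminaries applies, and you call the precision bookkeeping routine; but nothing in the paper's model supports this, and it is precisely the hard step. The hypergeometric CDF at a point is a sum of up to $\Theta(n)$ PMF terms with no closed form usable in the $N$-bit-word model, so evaluating it to $n^{-c}$ additive error in $\poly(\log n)$ time is not available off the shelf (the same concern applies, more mildly, to your root-level binomial CDF inversions; the paper instead cites Knuth, Section 3.4.1, for the multinomial root). The paper sidesteps CDF evaluation entirely: it imports from \cite{huge} a two-color sampler (Claim~\ref{claim:ggn}) that only draws exactly $B/2$ marbles, then reduces the general case to it by recursive halving of the urn to handle an arbitrary number $\ell$ of draws (Lemma~\ref{thm:colors2}), and peels off one color at a time to handle $r$ colors (Theorem~\ref{thm:colors}), giving the $O(r\,\poly(\log n))$ cost. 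To close your version you would either need to follow that route, or actually exhibit a $\poly(\log n)$-time hypergeometric sampler (e.g., rejection sampling against a log-concave envelope using Stirling-series evaluation of individual PMF terms); as written, the claim that one CDF-sample suffices per coordinate is unsupported, and it is the only place where your proof and the paper's genuinely diverge.
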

\SBMGrand*

We provide the full details of the construction in Section~\ref{sec:multivariate_hypergeometric_sampling}.
Our construction extends a similar implementation in the work of \cite{huge} which only supports $r = 2$.
The overall data structure is a balanced binary tree, where the root corresponds to the entire range of indices $[n]$,
and the children of each vertex correspond to the first and second half of the parent's range.
Each node\footnote{For clarity, ``vertex'' is only used in the sampled graph, and ``node'' is only used in the internal data structures.}
holds the number of vertices of each community in its range.
The tree initially contains only the root, with the number of vertices of each community $\langle |C_1|, |C_2|,\cdots, |C_r| \rangle$
sampled according to the multinomial distribution (for $n$ samples from the distribution $\mathsf{R}$).
The children are generated top-down on an as-needed basis according to the given queries.
The technical difficulties arise when generating the children,
where one needs to sample the counts assigned to either child from the correct marginal distribution.
We show how to sample such a count from the \emph{multivariate hypergeometric distribution},
below in Theorem~\ref{thm:SamplingManyColors} (proven in Section~\ref{sec:multivariate_hypergeometric_sampling}).

\begin{restatable}{theorem}{SamplingManyColors}
\label{thm:SamplingManyColors}
Given $B$ marbles of $r$ different colors, such that there are $C_i$ marbles of color $i$,
there exists an algorithm that samples $\langle s_1, s_2,\cdots, s_r \rangle$,
the number of marbles of each color appearing when drawing $l$ marbles from the urn without replacement,
in $O(r\cdot\poly(\log B))$ time and random words.
\end{restatable}
\begin{proof}[Proof of Theorem~\ref{thm:sbm-data}]
%We now show that Theorem~\ref{thm:SamplingManyColors} may be used in order to create the following data structure.
Recall that $\mathsf{R}$ denotes the given distribution over integers $[r]$ (namely, the random distribution of communities for each vertex).
Our algorithm generates and maintains random variables $X_1, \ldots, X_n$ (denoting the community assignment),
each of which is drawn independently from $\mathsf{R}$.
Given a pair $(a, b)$, it uses Theorem~\ref{thm:sbm-data} to sample the vector $\vec{C}(a, b) = \langle c_1, \ldots, c_r \rangle$,
where $c_k$ counts the number of variables in $\{X_a, \ldots, X_b\}$ that take on the value $k$.

We maintain a complete binary tree whose leaves corresponds to indices from $[n]$.
Each node represents a range and stores the vector $\vec{C}$ for the corresponding range.
The root represents the entire range $[n]$, which is then halved in each level.
Initially the root samples $\vec{C}(1, n)$ from the multinomial distribution according to $\mathsf{R}$
(see e.g., Section 3.4.1 of \cite{knuth}).
Then, the children are generated on-the-fly as described above.
Thus, each query can be processed within $O(r\,\poly(\log n))$ time, yielding Theorem~\ref{thm:sbm-data}.
\end{proof}

Then, by embedding the information stored by the data structure into the state (as in the proof of Lemma~\ref{lemma:transition}),
we obtain the desired Corollary~\ref{cor:sbm-construct}.

\section{Implementing Random Catalan Objects}%
\label{sec:catalan_objects}
%Earlier, we were interested in querying the following random object.
%In a random permutation of $n$ white marbles and $n$ black marbles, how many white marbles are present in the first $k$ positions.
%As we have seen before, \cite{huge} gives us a method of sampling from this (hypergeometric) distribution.
%In constructing a implementation for the Stochastic Block model, we generalized this by adding more colors (multivariate hypergeometric distribution).
%We also took this to the extreme where all marbles are distinguishable (i.e. a random permutation),
%and saw that this could also be implemented efficiently.
%Now we focus on a more challenging variant of this question with more complicated conditional dependences among the placement of the marbles.

In the previous Section~\ref{sec:application_sbm} on the Stochastic Block Model, we considered random sequences of colored marbles.
Next, we focus on an important variant of these sequences as Catalan objects, which impose a global constraint on the types of allowable sequences.
Specifically, consider a sequence of $n$ white and $n$ black marbles,
such that every \emph{prefix} of the sequence has at least as many white marbles as black ones.
Our goal will be to support queries to a uniformly random instance of such an object.

\begin{figure}[htbp]
    \centering
    \includegraphics[width=\textwidth]{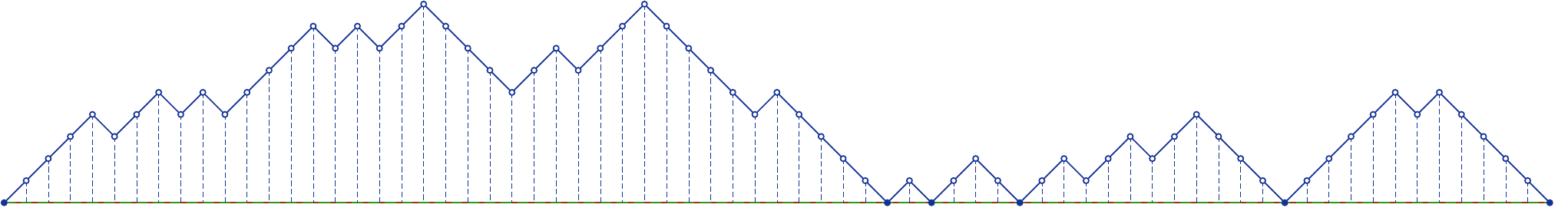}
    \caption{Simple Dyck path with $n = 35$ up and down steps.}
    \label{fig:basic_dyck}
\end{figure}
One interpretation of Catalan objects is given by Dyck paths (Figure~\ref{fig:basic_dyck}).
A Dyck path is essentially a $2n$ step \emph{balanced} one-dimensional walk with exactly $n$ up and down steps.
In Figure~\ref{fig:basic_dyck}, each step moves one unit along the positive $x$-axis (time) and one unit up or down the positive $y$-axis (position).
The prefix constraint implies that the $y$-coordinate of any point on the walk is $\ge 0$ i.e. the walk never crosses the $x$-axis.
The number of possible Dyck paths (see Theorem~\ref{thm:number_of_dyck_paths}) is the $n^{th}$ Catalan number $C_n=\frac{1}{n+1}\cdot{2n\choose n}$.
Many important combinatorial objects occur in Catalan families of which these are an example.

We will approach the problem of partially sampling Catalan objects through Dyck paths.
This, in turn, will allow us to implement access other random Catalan objects such as rooted trees, and bracketed expressions.
Specifically, we will want to answer the following queries:
\begin{itemize}
    \item \func{Direction}$(t)$: Returns the value of the $t^{th}$ step in the Dyck path (whether the step is up or down).
    \item \func{Height}$(t)$: Returns the $y$-position of the path after $t$ steps
    (the number of up steps minus the number of down steps among the first $t$ steps).
    Since a $\func{Direction}(t)$ query can be simulated using the queries $\func{Height}(t)$ and $\func{Height}(t-1)$,
    we will not explicitly discuss the \func{Direction} queries in what follows.
    \item \func{First-Return}$(t)$: If the $(t+1)^{th}$ step is upwards i.e. $\func{Height}(t+1) = \func{Height}(t)+1$,
    it returns the smallest index $t'>t$ such that $\func{Height}(t')=\func{Height}(t)$.
    While it may not be clear why this query is important, it will be useful for querying bracketed expressions and random trees.
    (see Section~\ref{sec:bijections_to_other_catalan_objects}).
\end{itemize}

\subsection{Bijections to other Catalan objects}%
The $\func{Height}$ query is natural for Dyck paths, but the $\func{First-Return}$ query is important in exploring other Catalan objects.
For instance, consider a random well bracketed expression; equivalently an uniform distribution over the Dyck language.
One can construct a trivial bijection between Dyck paths and words in this language
by replacing up and down steps with opening and closing brackets respectively (Figure~\ref{fig:dyck_bijection}).
The $\func{Height}$ query corresponds to asking for the nesting depth at a certain position in the word,
and $\func{First-Return}(x)$ returns the position after the matched closing bracket for the step $(x\rightarrow x+1)$.

\label{sec:bijections_to_other_catalan_objects}
\begin{figure}[htbp]
    \centering
    \includegraphics[width=\textwidth]{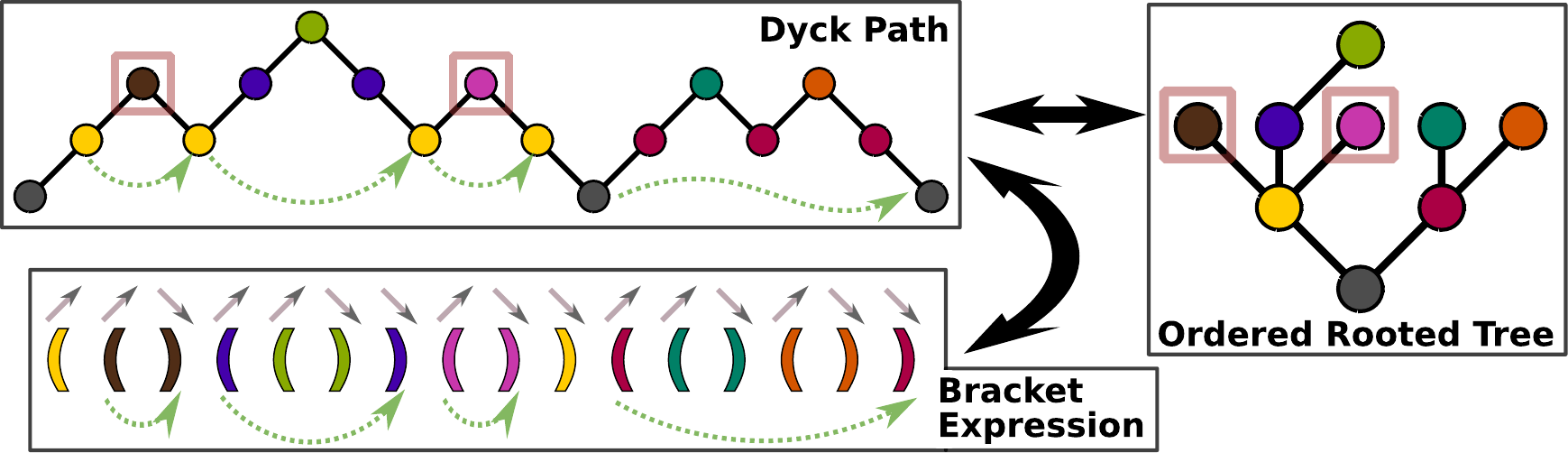}
    \caption{Bijections from Dyck paths to bracketed expressions and ordered rooted trees.
    Note that the color coded \emph{positions} on the path correspond to tree nodes, and the path itself is a DFS traversal of the tree.
    The bijection to bracketed expressions proceeds by directly replacing up and down steps with opening and closing brackets respectively
    (we color the brackets with the \emph{higher} endpoint of the corresponding step).
    Green dashed arrows show successive $\func{First-Return}$ queries on the path.
    Using the bijection, this is equivalent to revealing three child sub-trees in order from left to right for the corresponding tree node,
    and also to finding matching brackets in bracketed expressions.
    \func{First-Return} is not defined for the red boxed nodes in either forward or reverse direction,
    since this position corresponds to a leaf node in the tree, or to a terminal nesting level in the bracket expression.}
    \label{fig:dyck_bijection}
\end{figure}
There is also a natural bijection between Dyck paths and ordered rooted trees (Figure~\ref{fig:dyck_bijection}),
by viewing the Dyck path as a transcript of the tree's DFS traversal.
Starting with the root, for each ``up-step'' we move to a new child of the current node, and for each ``down-step'', we backtrack towards the root.
Thus, the $\func{Height}$ query returns the depth of a node.
Also, since the Dyck path is a DFS transcript of the tree, a $\func{First-Return}$ query on the path
can be used to find successive children of a tree node (each return produces the \emph{next child}).
For instance, in Figure~\ref{fig:dyck_bijection}, we can invoke \func{First-Return} thrice
starting at the first \emph{yellow path position} to reveal the corresponding three children of the \emph{yellow tree node}.
%A similar argument shows a bijection to the set of all ordered binary trees.

By definition, \func{First-Return}$(x)$ is meaningful only when the step from $x$ to $x+1$ is upwards,
i.e. when $\func{Height}(x+1) = \func{Height}(x) + 1$.
We can also implement a \func{Reverse-First-Return} query, which is just a standard \func{First-Return} query on the reversed Dyck path
(consider a reversal of the green dashed arrows in Figure~\ref{fig:dyck_bijection}).
The reversal implies that \func{Reverse-First-Return}$(x)$ is only meaningful when $\func{Height}(x-1) = \func{Height}(x) + 1$.
In terms of bijections, \func{Reverse-First-Return} is equivalent to finding a matching opening bracket in bracketed expressions,
and a \func{Previous-Child} query in rooted trees.
We show how to implement this query in Section~\ref{sec:reverse_first_return_queries}.
In the case where the height at $x$ is larger than both the heights at $x-1$ and $x+1$ (boxed nodes in Figure~\ref{fig:dyck_bijection}),
there is no meaningful ``first return'' from the context of the bijections.
Specifically, these nodes correspond to leaf nodes in rooted trees, or to a terminal nesting level in the bracket expression.

Moving forwards, we will focus on Dyck paths for the sake of simplicity.

\subsection{Catalan Trapezoids and Generalized Dyck Paths}
In order to implement local access to random Dyck paths, we will need to analyze more general Catalan objects.
Specifically, we consider a sequence of $U$ up-steps and $D$ down-steps,
such that any prefix of the sequence containing $U'$ up and $D'$ down steps satisfies $U'-D' \ge 1-k$.
This means that we start our Dyck path at a height of $k-1$, and we are never allowed to cross below zero (Figure~\ref{fig:complex_dyck}).
Note that the case $k=1$ corresponds to the standard description of Dyck paths, as mentioned previously (Figure~\ref{fig:basic_dyck}).
\begin{figure}[htbp]
    \centering
    \includegraphics[width=\textwidth]{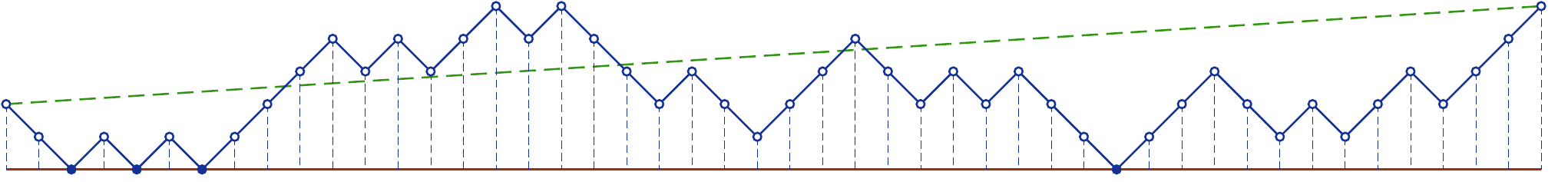}
    \caption{Generalized Dyck path with $U = 25$, $D = 22$ and $k = 3$.
             Note that the boundary is $k-1 = 2$ units below the starting height.} \label{fig:complex_dyck}
\end{figure}

We will denote the set of such \emph{generalized Dyck paths} as $\mathbb C_k(U,D)$ and the number of paths as $C_k(U,D) = |\mathbb C_k(U,D)|$,
which is an entry in the \textit{Catalan Trapezoid} of order $k$ \cite{trap}.
We also use $\mathsf C_k(U,D)$ to denote the uniform distribution over $\mathbb C_k(n,m)$.
Now, we state a result from \cite{trap} without proof:
\begin{align}
    \label{eq:catalan_trapezoid}
    C_k(U,D)=
    \begin{cases}
    \binom{U+D}{D} &0\le D<k\\
    \binom{U+D}{D} - \binom{U+D}{D-k} &k\le D\le U+k-1\\
    0 &D>U+k-1
    \end{cases}
\end{align}
For $k = 1$ and $n=m$, these represent the vanilla Catalan numbers i.e. $C_n = C_1(n,n)$ (number of simple Dyck paths).
Our goal is to sample from the distribution $\mathsf C_1(n,n)$.

Consider the situation after a sequence of \func{Height} queries to the Dyck path at various locations $\langle x_1, x_2,\cdots, x_m \rangle$,
such that the corresponding heights were sampled to be $ \langle y_1, y_2,\cdots, y_m \rangle$.
These revealed locations partition the path into disjoint \emph{intervals} $[x_i,x_{i+1}]$,
where the heights of the endpoints of each interval have been determined (as $y_i = \func{Height}(x_i)$).
We notice that these intervals can be generated independently of each other.
Specifically, the path within the interval $[x_i, x_{i+1}]$ will be sampled from $\mathsf C_k(U,D)$,
where $k - 1 = y_i$, $U + D = x_{i+1} - x_i$, and $U-D = y_{i+1} - y_i$.
Moreover, since the heights of the endpoints $y_i$ and $y_{i+1}$ are known, this choice is independent of any samples outside the interval.
Next, in Section~\ref{sec:implementing_height_queries}, we will show how one can determine heights within such an interval,
and in Section~\ref{sec:supporting_first_return_queries} we will move on to the more complicated $\func{First-Return}$ queries.

%We also maintain a threshold $\mathcal T = \Theta(\log^4 n)$.
%If a query lands in an interval that has length less than $\mathcal T$, then we brute force  the entire interval one step at a time.
%Assuming that the probabilities of these events can be approximated efficiently (Lemma~\ref{lem:probability_approximation_oracle},
%this take $\poly(\log n)$ time.

\subsection{Implementing \func{Height} queries}
\label{sec:implementing_height_queries}
We implement $\func{Height}(t)$ by showing how to efficiently determine the height of the path
at the midpoint of an existing interval $[x_i, x_{i+1}]$ (with corresponding endpoint heights $y_i, y_{i+1}$),
which results in two sub-intervals that are half the size.
Next, we extend this strategy to determine the heights of arbitrary positions by recursively sub-dividing the relevant interval (binary search).
If the interval in question has odd length, we sample a single step from an endpoint, and proceed with a shortened even length interval.
Sampling a single step is easy since there are only two outcomes (see proof of Theorem~\ref{thm:dyck_midpoint_sampling}).
\begin{figure}[htpb]
    \centering
    \includegraphics[width=\textwidth]{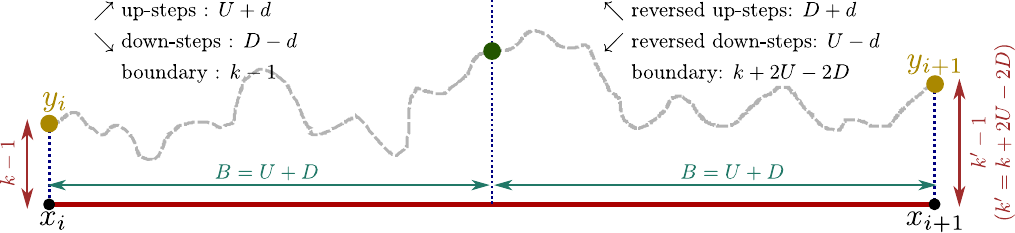}
    \caption{The $2B$-interval is split into two equal parts resulting in two separate Dyck problems.
             The green node (center) is the sampled height of the midpoint parameterized by the value of $d$.
             The path considered in both sub-intervals starts at a yellow node (left and right edges) and ends at the green node.
             From this perspective, the path on the right is reversed with up and down steps being swapped.
             A possible path is shown in gray.}
    \label{fig:dyck_height_sampling}
\end{figure}

Our general recursive step is as follows.
We consider an interval of length $2B$ comprising of $2U$ up-steps and $2D$ down-steps where the sum of any prefix cannot be less than $k-1$
i.e. the path within this interval should be sampled from $\mathsf C_k(2U,2D)$.
In order to make the analysis simpler, we have assumed that the number of up and down steps are both even.
The case of sampling according to $\mathsf C_k(2U+1, 2D+1)$ works similarly with slightly different formulae.
Without loss of generality, we assume that $D\le U$; if this were not the case, we could simply flip the interval,
swap the up and down steps, and modify the prefix constraint to $k'=k+2U-2D$ (Figure~\ref{fig:dyck_height_sampling}).
This ensures that the overall path in the interval is non-decreasing in height, which will simpify our analysis.

We determine the height of the path $B = U+D$ steps into the interval at the midpoint (Figure~\ref{fig:dyck_height_sampling}).
This is equivalent to finding the number of up/down steps that get assigned to the first half of the interval.
We parameterize the possibilities by $d$ and define $p_d$ to be the probability that exactly $U+d$ up-steps and $D-d$ down steps
get assigned to the first half (with the remaining $U-d$ up steps and $D+d$ down steps being assigned to the second half).
\begin{align}
\label{eq:height_sampling_probability}
p_d = \frac{S_{left}(d)\cdot S_{right}(d)}{S_{total}(d)}
\end{align}
Here, $S_{left}(d)$ denotes the number of possible paths in the first half (using $U+d$ up steps)
and $S_{right}(d)$ denotes the number of possible paths in the second half (using $U-d$ up steps).
Note that all of these paths have to respect the $k$-boundary constraint (cannot dip more than $k-1$ units below the starting height), where $k=y_i+1$.
Moving forwards, we will drop the $d$ when referring to the path counts.
We (conceptually) flip the second half of the interval,
such that the corresponding path begins from the end of the $2B$-interval and terminates at the midpoint (Figure~\ref{fig:dyck_height_sampling}).
This results in a different starting point, and the prefix/boundary constraint will also be different.
Hence, we define $k' = k + 2U - 2D$  to represent the new boundary constraint (since the final height of the $2B$-interval is $k'-1$).
Finally, $S_{total}$ is the total number of possible paths in the $2B$ interval.

We cannot directly sample from this complicated distribution $\{ p_d\}$.
Instead, we use the rejection sampling strategy from Lemma~\ref{lem:rejection_sampling}.
An important point to note is that in order to apply this lemma, we must be able to approximate the $p_d$ values.
However, we cannot naively use the formula from Equation~\ref{eq:height_sampling_probability},
since the values of $\{ S_{left},S_{right},S_{total}\}$ are too large to compute explicitly.
Lemma~\ref{lem:probability_approximation_oracle} in Section~\ref{sec:computing_probabilities}
shows how to indirectly compute the probabilty approximations.
We also use the following lemma to bound the deviation of the path with high probability.
A proof is presented in Section~\ref{sec:dyck_path_boundaries_and_deviations}.
\begin{restatable}{lemma}{DyckPathDeviationBound}
\label{lem:DyckPathDeviationBound}
Consider a contiguous \emph{sub-path} of a simple Dyck path of length $2n$
where the sub-path is of length $2B$ comprising of $U$ up-steps and $D$ down-steps (with $U + D = 2B$).
Then there exists a constant $c$ such that the quantities $|B-U|$, $|B-D|$, and $|U-D|$
are all $<c\sqrt{B\log n}$ with probability at least $1-1/n^2$ for every possible sub-path.
\end{restatable}

This lemma allows us to ignore potential midpoint heights that cause a deviation greater than $c \sqrt{B\log n}$.
A direct implication is that with high probability, the correctly sampled value for $d$ will be $\mathcal O(\sqrt{B\log n})$.
In other words, the height of the midpoint takes on one of only  $\mathcal O(\sqrt{B\log n})$ distinct values with high probability.
This immediately suggests a $\widetilde{\mathcal O}(\sqrt{B})$ time algorithm for determining the midpoint height,
by explicitly computing the probabilities of each of these potential heights, and directly sampling from the resulting distribution.
However, we can go further and obtain a $\mathcal O(poly(\log n))$ time algorithm.

\subsubsection{The Simple Case: Far Boundary}%
\label{sec:the_simple_case}
We first consider the case when the boundary constraint is far away from the starting point, i.e. $k$ is large.
The following lemma (proof in Section~\ref{sec:dyck_path_boundaries_and_deviations}) shows that in this case,
we can safely \emph{ignore} the constraint.  Intuitively, this is because the boundary is so far away,
that with high probability, we do not hit it even if we choose a random \emph{unconstrained} path.
\begin{restatable}{lemma}{DyckPathIrrelevantBoundary}
\label{lem:dyck_path_irrelevant_boundary}
Given a Dyck path sampling problem of length $B$ with $U$ up, $D$ down steps, and a boundary at $k$,
there exists a constant $c$ such that if $k > c \sqrt{B\log n}$, then the distribution of paths sampled without a boundary $\mathsf C_{\infty}(U,D)$
is $\mathcal O(1/n^2)$-close in $L_1$ distance to the distribution of Dyck paths $\mathsf C_k(U+D)$.
\end{restatable}
By Lemma~\ref{lem:dyck_path_irrelevant_boundary}, the problem of sampling from $C_k(2U,2D)$
reduces to sampling from the hypergeometric distribution $C_{\infty}(2U,2D)$ when $k>\mathcal{O}(\sqrt{B\log n})$
i.e. the probabilities $p_d$ can be approximated by:
\[
q_d = \frac{{{B}\choose{D-d}}\cdot{{B}\choose{D+d}}}{{{2B}\choose{2D}}}
\]
This problem of sampling from the hypergeometric distribution is implemented using $\mathcal O(poly(\log n))$ resources in \cite{huge}
%in the context of interval summable functions
(see Lemma~\ref{lem:ggn_interval_summable} in Section~\ref{sec:multivariate_hypergeometric_sampling}).
We also used this result earlier in the paper in order to find the community assignments in the Stochastic Block Model
(Section~\ref{sec:application_sbm}).

\subsubsection{The Difficult Case: Intervals Close to Zero}
\label{sec:the_difficult_case}
The difficult case is when $k = \mathcal{O}(\sqrt{B\log n})$,
and the previous approximation due to Lemma~\ref{lem:dyck_path_irrelevant_boundary} no longer works.
In this case, we cannot just ignore the boundary constraint, and instead we have to analyze the true probability distribution given by $p_d$.
We obtain an expression for $p_d$ by substituting the formula for generalized Catalan numbers as follows:
(Equation~\ref{eq:catalan_trapezoid}) into Equation~\ref{eq:height_sampling_probability}.
\begin{align}
    S_{left} = C_k(U+d,D-d)
    &&S_{right} = C_{k'}(U-d,D+d)
    &&S_{total} = C_k(2U,2D)
\end{align}
Since the right interval is flipped in our analysis, this changes the prefix/boundary constraint,
and hence, the expression for $S_{right}$ uses $k' = k+2U-2D$.
This also implies that $k' = \Bo(\sqrt{B\log n})$ (using Lemma~\ref{lem:DyckPathDeviationBound}).
We can now use Equation~\ref{eq:height_sampling_probability} to evaluate the probabilties $p_d = S_{left}\cdot S_{right}/S_{total}$.
Recall that $S_{left}$ and $S_{right}$ are the number of possible paths in the left and right half of the interval,
when exactly $U+d$ up steps are assigned to the first half, and $S_{total}$ is the total number of possible paths in the interval.

We will invoke the rejection sampling technique (Lemma~\ref{lem:rejection_sampling}), by constructing a different distribution $q_d$
that approximates $p_d$ up to logarithmic factors over the vast majority of its support
(we ignore all $|d|>\Theta(\sqrt{B\log n})$ since the associated probability mass is negligible by Lemma~\ref{lem:DyckPathDeviationBound}).
In order to perform rejection sampling, we also need good approximations of $p_d$,
which is acheved by Lemma~\ref{lem:probability_approximation_oracle} in Section~\ref{sec:computing_probabilities}.
Next, we define an appropriate $q_d$ that approximates $p_d$ and also has an \emph{efficiently computable} $CDF$.
Surprisingly, as in Section~\ref{sec:the_simple_case}, we will be able to use the hypergeometric distribution for $q_d$,
\[
q_d \equiv \frac{{B\choose D-d}\cdot{B\choose D+d}}{{2B\choose 2D}} = \frac{{B\choose D-d}\cdot{B\choose U-d}}{{2B\choose 2D}}
\]
However, the argument for why this $q_d$ is a good approximation to $p_d$ is far less straightforward.

First, we consider the case where $k\cdot k'\le 2U+1$.
In this case, we use loose bounds for $S_{left} < \binom{B}{D-d}$ and $S_{right} < \binom{B}{U-d}$.
These are true because $\binom{B}{D-d}$ and $\binom{B}{U-d}$ are the total number of \emph{unconstrained} paths
in the left and right half respectively, and adding the boundary constraint can only reduce the number of paths.
We also prove the following lemma in Section~\ref{sec:appendix_implementing_height_queries} to bound the value of $S_{total}$.
\begin{restatable}{lemma}{DTotalFarBoundary}
\label{lem:DTotalFarBoundary}
When $kk' > 2U + 1$, $S_{total} > \frac 12\cdot \binom{2B}{2D}$.
\end{restatable}

Combining the three bounds, we conclude that $p_d < \frac 12 q_d$.
Intuitively, in this case the Dyck boundary is far away, and therefore the number of possible paths
is only a constant factor away from the number of unconstrained paths (see Section~\ref{sec:the_simple_case}).
The case where the boundaries are closer (i.e. $k\cdot k' \le 2U+1$) is trickier,
since the individual counts need not be close to the corresponding binomial counts.
However, in this case we can still ensure that the sampling probability is within poly-logarithmic factors of the binomial sampling probability.
We use the following lemmas to bound $S_{left}$ and $S_{right}$ (proofs in Section~\ref{sec:appendix_implementing_height_queries}).
\begin{restatable}{lemma}{DLeftBound}
\label{lem:DLeftBound}
$S_{left} \le c_1 \frac{ k\cdot\sqrt{\log n}}{\sqrt{B}}\cdot{{B}\choose{D-d}}$ for some constant $c_1$.
\end{restatable}
\begin{restatable}{lemma}{DRightBound}
\label{lem:DRightBound}
$S_{right} \le c_2 \frac{k'\cdot \sqrt{log n}}{\sqrt{B}}\cdot{{B}\choose{U-d}}$ for some constant $c_2$.
\end{restatable}

Finally, we obtain a diferent bound on $S_{total}$ for the \emph{near boundary} case.
\begin{restatable}{lemma}{DTotalNearBoundary}
\label{lem:DTotalNearBoundary}
When $kk' \le 2U + 1$, $S_{total} \ge c_3 \frac{k\cdot k'}{B}\cdot{{2B}\choose{2D}}$ for some constant $c_3$.
\end{restatable}

Multiplying these inequalities together allows us to bound $p_d = S_{left}\cdot S_{right}/S_{total} \le \Theta(q_d\log n)$,
implying $p_d/q_d \le \Theta(\log n)$.
Consequently, we can leverage Lemma~\ref{lem:rejection_sampling} to obtain a sample from $\{ p_d\}$
using $\mathcal O(\log^{\mathcal O(1)}n)$ samples from $\{ q_d\}$,
thus determining the height of the Dyck path at the midpoint of the interval.

\begin{theorem}
\label{thm:dyck_midpoint_sampling}
Given an interval $[x_i,x_{i+1}]$ (and the associated endpoint heights $y_i, y_{i+1}$) in a Dyck path of length $2n$,
with the guarantee that no position between $x_i$ and $x_{i+1}$ has been sampled yet,
there is an algorithm that returns the height of the path at the midpoint of $x_i$ and $x_{i+1}$ (or next to the midpoint if $x_{i+1}-x_i$ is odd).
Moreover, this algorithm only uses $\mathcal O(poly(\log n))$ resources.
\end{theorem}
\begin{proof}
If $x_{i+1}-x_i$ is even, we can set $B = (x_{i+1}-x_i)/2$.
Otherwise, we first sample a single step from $x_i$ to $x_i+1$, and then set $B = (x_{i+1}-x_i-1)/2$.
Since there are only two possibilities for a single step, we can explicitly approximate the corresponding probabilities
using the strategy outlined in the proof of Lemma~\ref{lem:probability_approximation_oracle} (see Section~\ref{sec:computing_probabilities}),
and then sample accordingly.
This allows us to apply the rejection sampling from Lemma~\ref{lem:rejection_sampling}
using $\{ q_d\}$ to obtain samples from $\{ p_d\}$ as defined above.
%Now, if $B > \Theta(\log^2 n)$ we can simply use the rejection sampling procedure described above
%to obtain a $\mathcal O(poly(\log n))$ algorithm.
%Otherwise, we sample each step induividually.
%Since there are only $2B = \Theta(\log^2 n)$ steps, the sampling is still efficient.
\end{proof}

\begin{theorem}
\label{thm:dyck_height_sampling}
There is an algorithm that provides sample access to a Dyck path of length $2n$,
by answering queries of the form \func{Height}$(x)$ with the correctly sampled height of the Dyck path at position $x$
using only $\mathcal O(poly(\log n))$ resources per query.
\end{theorem}
\begin{proof}
The algorithm maintains a successor-predecessor data structure (e.g. Van Emde Boas tree) to store all positions $x$ that have already been determined.
Each newly determined position is added to this structure.
Given a query \func{Height}$(x)$, the algorithm first finds the successor and predecessor (say $a$ and $b$) of $x$ among the already queried positions.
This provides us the guarantee required to apply Theorem~\ref{thm:dyck_midpoint_sampling},
which allows us to query the height at the midpoint of $a$ and $b$.
We then binary search by updating either the successor or predecessor of $x$ and repeat until we determine the height of position $x$.
\end{proof}

\subsection{Supporting ``First Return'' Queries}%
\label{sec:supporting_first_return_queries}
In this section, we discuss more complex queries to a Dyck path.
Specifically, we implement $\func{First-Return}(x)$ to allow the user to query the next time the path returns to \func{Height}$(x)$.
The utility of this kind of query is illustrated through bijections to other random Catalan objects
in Section~\ref{sec:bijections_to_other_catalan_objects}.
An important detail here is that if the first step from $x$ to $x+1$ is a down-step, there is no well defined \func{First-Return}.
%In case of trees, this would correspond to a leaf which has no children.

%We will use the following asymptotic formula for \emph{close-to-central} binomial coefficients
%which will allow us the approximate the probabilities in Section~\ref{sec:estimating_the_cdf}.
%\begin{restatable}{lemma}{CentralBinomialCoefficients}
%\label{lem:CentralBinomialCoefficients}
%(From \cite{asymptopia}) If $k = \frac{n \pm c\sqrt n}{2}$ where $c = o(n^{1/6})$,
%we can approximate $\binom{n}{k}$ up to constant factors by the expression $\frac{2^n}{\sqrt n}\cdot \mathlarger e^{-c^2/2}$.
%\end{restatable}

%Recall that we maintain a threshold $\mathcal T = \Theta(\log^4 n)$ such that
%By setting $\mathcal T = \Theta(\log^4 n)$, we see that for intervals with length $B > \mathcal T$,
%the maximum deviations are bounded by $\mathcal O(\sqrt{B\log n}) = \mathcal O(\log^{2.5}n)$ with high probability.
%This means that if we write the deviation as $c\sqrt B$, we see that $c = \mathcal O(\sqrt{\log n})$ which is $o(B^{1/6})$,
%thus satisfying the condition of Lemma~\ref{lem:CentralBinomialCoefficients}.

\subsubsection{Maintaining a Boundary Invariant}
\label{sec:maintaining_a_boundary_invariant}
Notice that after performing a set of $\func{Height}$ queries $\langle x_1, x_2,\cdots, x_m \rangle$ to the Dyck path,
many different positions are revealed (possibly in adversarial locations).
This partitions the path into at most $m+1$ disjoint and independent \emph{intervals} with known boundary conditions.
The first step towards finding the $\func{First-Return}$ from position $t$ would be to locate the \emph{interval} where the first return occurs.
Even if we had an efficient technique to filter intervals, we would want to avoid considering all $\Theta(m)$ intervals to find the correct one.
In addition, the fact that a specific interval \emph{does not} contain the first return implies dependencies for all subsequent samples.

We resolve these difficulties by maintaining a new invariant.
Consider all positions whose heights have have already been determined, either directly as a result of user given $\func{Height}$ queries,
or indirectly due to recursive $\func{Height}$ calls; $\langle x_1, x_2,\cdots, x_m \rangle$ in increasing order i.e. $x_i<x_{i+1}$.
Let the corresponding heights be $ \langle y_1, y_2,\cdots, y_m \rangle$ i.e. $\func{Height}(x_i) = y_i$.
\begin{invariant}
\label{inv:boundary_invariant}
For any interval $[x_i,x_{i+1}]$ where the heights  of the endpoints have been determined to be $y_i$ and $y_{i+1}$,
and every other height in the interval has yet to determined,
the section of the Dyck path between positions $x_i$ and $x_{i+1}$ is constrained to lie above $min(y_i, y_{i+1})$.
\end{invariant}
\begin{figure}[htpb]
    \centering
    \includegraphics[width=\textwidth]{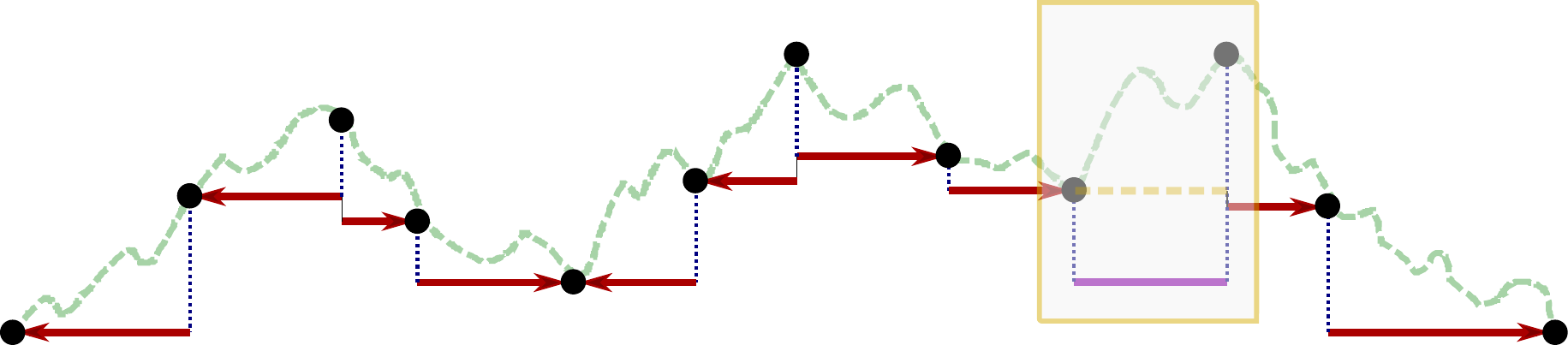}
    \caption{The set of intervals formed by a set of height samples.
        Each interval also has its own boundary constraint (red).
        Invariant~\ref{inv:boundary_invariant} implies that each boundary must coincide with one of the interval endpoints.
        Note that the only interval which violates the invariant is the third last one (shown in yellow box).}
    \label{fig:dyck_boundary_invariant}
\end{figure}

How can one maintain such an invariant?
After determining the height of a particular position $x_i$ as $y_i$ (with $x_{i-1} < x_i < x_{i+1}$),
the invariant is potentially broken on either side of $x_i$.
We re-establish the invariant by determining the height of an additional point on either side
(see Section~\ref{sec:maintaining_height_queries_under_invariant} for details).
Specifically, we use the following \emph{two step strategy} to find the additional point for some violating interval $[x_i, x_{i+1}]$
(example violation in Figure~\ref{fig:dyck_boundary_invariant}):
\begin{enumerate}
    \item Sample the lowest height $h$ achieved by the walk between $x_i$ and $x_{i+1}$ according to
    the uniform distribution over all possible paths that respect the current boundary constraint
    (see Section~\ref{sec:sampling_the_lowest_achievable_height}).
    \item Find an intermediate position $x$ such that $x_i < x < x_{i+1}$ and $\func{Height}(x) = h$
    (see Section~\ref{sec:sampling_first_position_touching_mandatory_boundary}).
\end{enumerate}
The newly determined position $x$ produces two sub-intervals $[x_i, x]$ and $[x, x_{i+1}]$.
Since $h = \func{Height}(x)$ has been determined to be the minimum height in the overall range $[x_i, x_{i+1}]$,
the invariant is preserved in the new intervals (see Figure~\ref{fig:dyck_invariant_preserve}).
Lemma~\ref{lem:first_return_interval} in Section~\ref{sec:finding_the_correct_interval}
shows how this invariant can help us efficiently search for the interval containing the first return.
\begin{figure}[htpb]
    \centering
    \includegraphics[width=\textwidth]{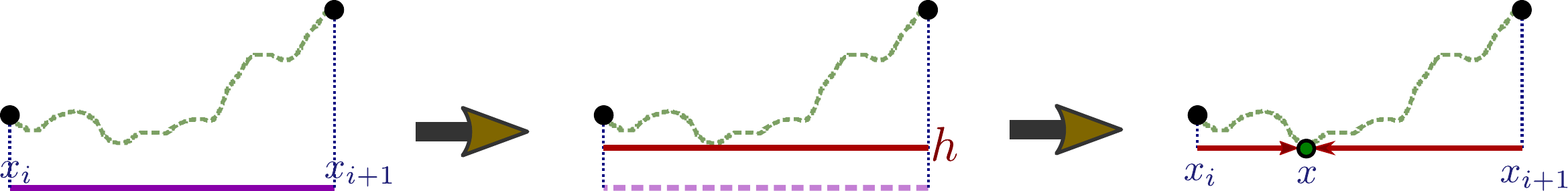}
    \caption{An interval $[x_i, x_{i+1}]$ that violates the invariant is ``fixed'' by first sampling the lowest height $h$ achieved within the interval,
    and then generating a position $x\in [x_i, x_{i+1}]$ such that $Height(x) = h$.}
    \label{fig:dyck_invariant_preserve}
\end{figure}

\subsubsection{Sampling the Lowest Achievable Height: Mandatory Boundary}
\label{sec:sampling_the_lowest_achievable_height}
First, we need to sample the lowest height $h$ of the walk between $x_i$ and $x_{i+1}$ (with corresponding heights $y_i$ and $y_{i+1}$).
We will refer to $h$ as the \emph{``mandatory boundary''} in this interval;
i.e. no height in the interval may be lower than the boundary, but at least one point \emph{must} touch the boundary (have height $h$).
We assume that $y_i\le y_{i+1}$ without loss of generality; if this is not the case, swap $x_i$ and $x_{i+1}$ and consider the reversed path.
Say this interval defines a generalized Dyck problem with $U$ up steps, $D$ down steps, and a boundary that is $k-1$ units below $y_i$.
\begin{figure}[htpb]
    \centering
    \includegraphics[width=0.9\textwidth]{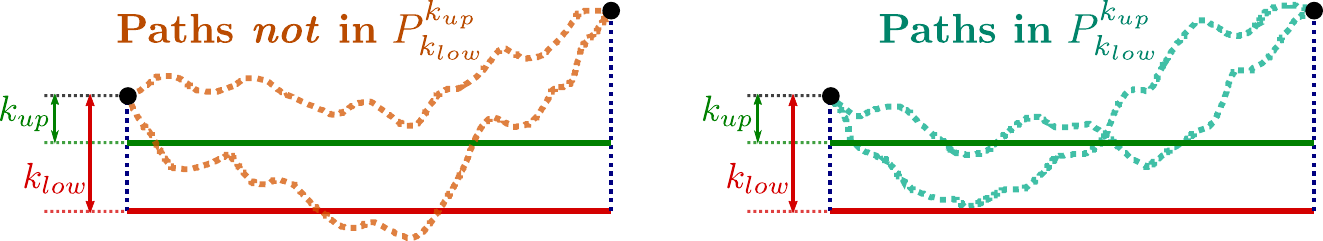}
    \caption{$P_{k_{low}}^{k^{up}}$ counts all the paths that dip at least $k_{up}$ units below the starting point,
    but \textbf{do not} dip $k_{low}$ units below the starting point. $k_{mid}$ would lie halfway between these boundaries.}
    \label{fig:dyck_mandatory_boundary}
\end{figure}

Given any two boundaries $k_{lower}$ and $k_{upper}$ on this interval (with $k_{lower} < k_{upper} \le y_i$),
we can count the number of possible generalized Dyck paths that violate the $k_{upper}$ boundary but \emph{not} the $k_{lower}$ boundary as follows
(see Figure~\ref{fig:dyck_mandatory_boundary}):
\[
P_{k_{lower}}^{k_{upper}} = C_{k_{lower}}(U,D) - C_{k_{upper}}(U,D)
\]
\begin{wrapfigure}[14]{r}{0.6\textwidth}
\vspace{-2.0em}
\begin{framed}
    \renewcommand\figurename{Algorithm}
    \caption{Finding the Mandatory boundary}
    \label{alg:mandatory_boundary}
    \begin{algorithmic}[1]
        \Function{Mandatory-Boundary}{$U, D, k$}
            \State {$k_{low}\gets k$}
            \State {$k_{up}\gets 0$}
            \While {$k_{low} < k_{up} - 1$}
                \vspace{.3em}
                \State {$k_{mid}\gets \floor{\frac{(k_{low} + k_{up})}{2}}$}
                \vspace{.3em}
                \State {$P_{total}\gets C_{k_{low}}(U,D) - C_{k_{up}}(U,D)$}
                \vspace{.3em}
                \State {$P_{k_{low}}^{k_{mid}}\gets C_{k_{low}}(U,D) - C_{k_{mid}}(U,D)$}
                \vspace{.3em}
                \State {\textbf{with probability} $P_{k_{low}}^{k_{mid}}/P_{total}$}
                    \State {\hspace{\algorithmicindent}$k_{up}\gets k_{mid}$}
                \State {\textbf{else}}
                    \State {\hspace{\algorithmicindent}$k_{low}\gets k_{mid}$}
            \EndWhile
            \State \Return $k_{low}$
        \EndFunction
    \end{algorithmic}
\end{framed}
\end{wrapfigure}
We define the current lower and upper boundaries as $k_{low} = k, k_{up} = 0$, and set $k_{mid} = (k_{low} + k_{up})/2$.
Since we can compute the quantities $P_{k_{mid}}^{k_{up}}$, $P_{k_{low}}^{k_{mid}}$, and $P_{total} = P_{k_{low}}^{k_{up}}$,
we can sample a single bit to decide if the \emph{``lower boundary''} should move up or if the \emph{``upper boundary''} should move down.
We then repeat this binary search until we find $k' = k_{low} = k_{up}-1$ and $k'$ becomes the \emph{``mandatory boundary''}
(i.e. the walk reaches the height exactly $k'-1$ units below $y_i$ but no lower.
\vspace{1.5em}

\subsubsection{Determine First Position Touching the \emph{``Mandatory Boundary''}}
\label{sec:sampling_first_position_touching_mandatory_boundary}

Now that we have a ``\emph{mandatory boundary}'' $k$, we just need to generate a position $x\in[x_i,x_{i+1}]$ with $\func{Height}(x) = y_i-k+1$,
according to the uniform distribution over all paths that touch, but do not violate the boundary at $k$.
In fact, we will do something stronger by determining the \emph{first} time the walk touches the boundary after $x_i$.
As before, we assume that this interval contains $U$ up steps and $D$ down steps.
\begin{figure}[htpb]
    \centering
    \includegraphics[width=\textwidth]{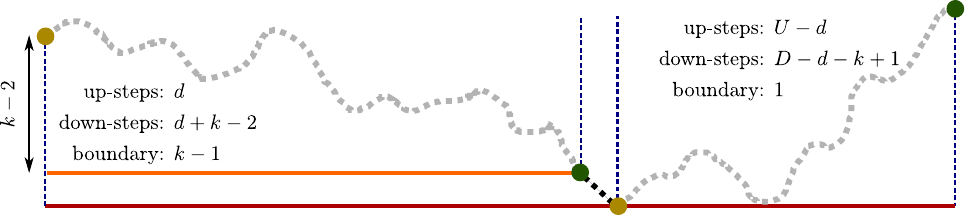}
    \caption{Zooming into the error in Figure~\ref{fig:dyck_boundary_invariant}.
        We generate a position $x$ (yellow) on the boundary (red),
        such that the section of the path to the left of $x$ never approaches the red boundary (it respects the orange boundary).}
    \label{fig:dyck_mandatory_boundary_sampling}
\end{figure}

We will parameterize the position $x$ the number of up-steps $d$ between $x_i$ and $x$ (See Figure~\ref{fig:dyck_mandatory_boundary_sampling}).
implying that $x = x_{i} + 2d + k - 1$.
Given a specific $d$, we want to compute the number of valid paths that result in
$d$ up-steps before the first approach to the boundary.
Note that unlike Section~\ref{sec:implementing_height_queries}, $d$ is used here to parameterize the (horizontal) $x$-position of the desired point.
We will calculate the probability $p_d$ associated with a particular position
by counting the total number of paths to the left and right of the first approach and multiplying them together.

As in Section~\ref{sec:the_difficult_case}, we define
$S_{left}$ to be the number of paths in the sub-interval before the first approach (left side of Figure~\ref{fig:dyck_mandatory_boundary_sampling}),
$S_{right}$ to be the number of paths following the first approach,
and $S_{total}$ to be the total number of paths that touch the ``mandatory boundary'' at $k$
(note that these quantities are functions of $U,D,k$ and $d$, but we drop the parameters for the sake of clarity):
{\small
\begin{align*}
    S_{left} = C_{k}(d, d+k-2)
    &&S_{right} = C_1(U-d, D-d-k+1)
    &&S_{total} = C_k(U,D) - C_{k-1}(U,D)
\end{align*}}
Our goal is to generate $d$ from the distribution $\{ p_d\}$ where $p_d = S_{left}\cdot S_{right}/S_{total}$.
The application of Lemma~\ref{lem:rejection_sampling} requires us to approximate $\{p_d\}$
with a ``well behaved'' $\{q_d\}$ (one whose CDF can be efficiently estimated).
Since $q_d$ only needs to approximate $p_d$ up to $\poly(\log n)$ factors,
we focus on obtaining $\poly(\log n)$ approximations to the path counts $\{ S_{left},S_{right},S_{total}\}$.
Using Equation~\ref{eq:catalan_trapezoid}, we obtain approximations for $S_{left}$ and $S_{right}$
(Lemma~\ref{lem:ReturnDLeftBound} and Lemma~\ref{lem:ReturnDRightBound} with proofs in Section~\ref{sec:omitted_supporting_first_return_queries}).
\begin{restatable}{lemma}{ReturnDLeftBound}
\label{lem:ReturnDLeftBound}
If $d > \log^4 n$, then $S_{left}(d)
= \Theta\left( \frac{2^{2d+k}}{\sqrt{d}}\mathlarger e^{-r_{left}(d)}\cdot \frac{k-1}{d+k-1}\right)$
where $r_{left}(d) = \frac{(k-2)^2}{2(2d+k-2)}$.
Furthermore, $r_{left}(d)=\mathcal O(\log^2 n)$.
\end{restatable}

\begin{restatable}{lemma}{ReturnDRightBound}
\label{lem:ReturnDRightBound}
If $U+D-2d-k > \log^4 n$, then $S_{right}(d)
= \Theta\left( \frac{2^{U+D-2d-k}}{\sqrt{U+d-2d-k}}\mathlarger e^{-r_{right}(d)}\cdot \frac{U-D+k}{U-d+1}\right)$
where $r_{right}(d) = \frac{(U-D-k-1)^2}{4(U+D-2d-k+1)}$.
Furthermore, $r_{right}(d)=\mathcal O(\log^2 n)$.
\end{restatable}

Our general strategy will be to integrate continuous versions of these approximations in order to obtain a CDF of some approximating distribution.
Unfortunately, the continuous approximation functions obtained do not admit closed form integrals.
The main culprit is the exponential term in both expressions.
We tackle this issue by noticing that the values of the exponents are bounded by $\mathcal O(\log^2 n)$ over the majority of the range of $d$.
Within this range of $d$ values, the exponential term may be further simplified by taking a piecewise constant approximation,
where each of the pieces corresponds to a fixed value of the \emph{floor of the corresponding exponent}.
This technique is elaborated in Section~\ref{sec:estimating_the_cdf}.

We start by considering the \emph{``problematic''} values of $d$ that are outside the range of the two preceding lemmas.
These values are the ones where $d < \log^4 n$ or $2d > U+D-k-\log^4 n$.
Since $d$ is the number of up steps in the left sub-interval, $d \ge 0$.
Further, since the length of the right sub-interval nust be non-negative (see Figure~\ref{fig:dyck_mandatory_boundary_sampling}),
we get $U+D-2d-k+1 \ge 0$.
Thus, we define the \emph{``problematic''} set:
\begin{align}
\label{eq:SetDefinition}
    \mathcal R = \left\{ d\ \mathlarger|\ 0\le d < \log^4 n \textrm{\textbf{\ or\ }} -1 < 2d-U-D+k < \log^4 n \right\}
\end{align}
Clearly, we can bound the size of this set as $|\mathcal R| = \mathcal O(\log^4 n)$.
An immediate consequence of Lemma~\ref{lem:ReturnDLeftBound} and Lemma~\ref{lem:ReturnDRightBound} is the following.

\begin{restatable}{corollary}{ReturnProbabilityBoundNotNormalized}
\label{cor:ReturnProbabilityBoundNotNormalized}
When $d\not \in \mathcal R$,
$S_{left}(d)\cdot S_{right}(d)
= \Theta\left( \frac{2^{U+D}}{\sqrt{d(U+D-2d-k)}}\cdot \mathlarger e^{-r(d)} \cdot \frac{k-1}{d+k-1}\cdot\frac{U-D+k}{U-d+1}\right)$
where $r(d)=\mathcal O(\log^2 n)$.
\end{restatable}

\subsubsection{Estimating the CDF}
\label{sec:estimating_the_cdf}
We use these observations to construct a suitable $\{q_d\}$ that can be used to invoke the rejection sampling lemma (Lemma~\ref{lem:rejection_sampling}).
We will achieve this by constructing a piecewise continuous function $\hat q$,
such that $\hat q(\delta)$ approximates $p_{\floor\delta}$, and then use the integral of $\hat q$ to define the discrete distribution $\{q_d\}$.
As stated in the previous section, we can leverage the fact that when $d \not \in \mathcal R$,
the floor of the exponent $\floor{r(d)}$ only takes $\mathcal O(\log^2 n)$ distinct values
(consequence of Corollary~\ref{cor:ReturnProbabilityBoundNotNormalized}).
Since the ``problematic'' set $\mathcal R$ only has $\mathcal O(\log^4 n)$ values,
we can also deal with these remaining values by simply creating $|\mathcal R|$ additional continuous pieces in the function $\hat q$.
We begin by rewriting $p_d = \Theta\left(\mathcal K \cdot f(d)\cdot e^{-r(d)}\right)$ where:
{\small
    \begin{align}
    \label{eq:dyck_integrable_function}
        \mathcal K = \frac{2^{U+D}}{S_{total}} = \frac{2^{U+D}}{C_k(U,D)-C_{k-1}(U,D)}\ \ \ \
        &&f(d) = \frac{(k-1)(U-D+k)}{\sqrt{d(U+D-2d-k)}(d+k-1)(U-d+1)}
    \end{align}}
Notice that $\mathcal K$ is a constant and $f(d)$ is a function whose integral has a closed form.
Using the fact that $r(d) = \mathcal O(\log^2 n)$ (Corollary~\ref{cor:ReturnProbabilityBoundNotNormalized}),
and $|\mathcal R| = \mathcal O(\log^4 n)$, we obtain the following lemma:

\begin{lemma}
\label{lem:ReturnProbabilityPiecewiseContinuous}
Given the piecewise continuous function
\begin{align*}
    \hat q(\delta) &=
    \begin{cases}
        p_{\floor\delta} &\mbox{if } \floor\delta \in \mathcal R \\ 
        \mathcal K\cdot f(\delta)\cdot exp\left(-\lfloor{r({\scriptstyle\floor\delta})}\rfloor\right)
        & \mbox{if } \floor\delta \not\in \mathcal R
    \end{cases}
    && \mathlarger \implies
    && p_d = \Theta\left( \int\limits_d^{d+1} \hat q(\delta)\right)
\end{align*}
Furthermore, $\hat q(\delta)$ has $\mathcal O(\log^4 n)$ continuous pieces.
%Note that this integral has a closed form for a fixed value of $\floor{r(d)}$.
\end{lemma}
\begin{proof}
For $d \in \mathcal R$, the integral trivially evaluates to exactly $p_d$.
For $d\not\in \mathcal R$, it suffices to show that $p_d = \Theta\left( \hat q(\delta)\right)$ for all $\delta\in [d,d+1)$.
We already know that $p_d = \Theta\left( \mathcal K\cdot f(d)\cdot e^{-r(d)}\right)$.
Moreover, for any $\delta\in [d,d+1)$, the exponential term $e^{-\floor{r(\floor\delta)}}$ in $\hat q(\delta)$
is within a factor of $e$ of the original $e^{-r(\floor\delta)}$ term.

For all $\mathcal O(\log^4 n)$ values $d\in \mathcal R$, $\hat q(\delta)$ is constant on the interval $[d,d+1]$.
Since $r(d) = \mathcal O(\log^2 n)$ by Corollary~\ref{cor:ReturnProbabilityBoundNotNormalized},
the exponential term $e^{-\floor{r(\floor\delta)}}$ in $\hat q(\delta)$ taken on at most $\mathcal O(\log^2 n)$ values.
Thus, $\hat q$ is continuous for a range of $\delta$ correspoding to a fixed value of $\floor{r(\floor\delta)}$,
and so, we conclude that $\hat q$ is piecewise continuous with $\mathcal O(\log^2 n)$ pieces.
\end{proof}

Now, we have everything in place to define the distribution $\{ q_d\}$ that we will be sampling from.
Specifically, we will define $q_d$ and it's CDF $Q_d$ as follows ($\mathcal N$ is the normalizing factor):
{\small
    \begin{align}
        q_d = \left(\int\limits_d^{d+1} \hat q(\delta)\right)\cdot \frac{1}{\mathcal N}
        && Q_d = \left(\int\limits_0^{d+1} \hat q(\delta)\right)\cdot \frac{1}{\mathcal N}
        && \textrm{where }\ \mathcal N = \int\limits_0^{d_{max}+1} \hat q(\delta)
    \end{align}}
%Here the normalizing factor $\mathcal N$ is $\int_0^{d_{max}+1} \hat q(\delta)$.
To show that these can be computed efficiently, it suffices to show that any integral of $\hat q(\delta)$ can be efficiently evaluated.
This follows from the fact that $\hat q$ is piecewise continuous with $\mathcal O(\log^4)$ pieces (Lemma~\ref{lem:ReturnProbabilityPiecewiseContinuous}),
each of which has a closed form integral (since $f(d)$ defined in Equation~\ref{eq:dyck_integrable_function} has an integral).

%However, we also need to be able to compute the constant $\mathcal K = 2^{U+D} = 2^{U+D}S_{total}$.
%The issue is that this value could be too large to compute.
%We mitigate this by fixing a threshold $\mathcal T = \Theta(\log^4 n)$, and ensuring that $U+D > \mathcal T$.
%In the case that $U+D \le \mathcal T$, we directly compute the $p_d$ values, and sample from the explicit distribution.
%Otherwise, we obtain the following lemma:}
%\begin{lemma}
%\label{lem:ReturnProbabilityConstant}
%If $U+D > \Theta(\log^4 n)$, the quantity $\mathcal K = 2^{U+D} = 2^{U+D}S_{total}$ is bounded
%and can be computed efficiently.
%\end{lemma}
%\begin{proof}
%First, notice that $S_{total} = C_k(U,D) - C_{k-1}(U,D)$ is the number of paths with $U$ up steps and $D$ down steps with a boundary at $k$
%First, we compute a $\left( 1\pm 1/n^2\right)$ multiplicative approximation to $\log_2 S_{total}$,
%by using the strategy in Lemma~\ref{lem:probability_approximation_oracle} (Section~\ref{sec:computing_probabilities}).
%Now, we claim that 
%\end{proof}

\begin{lemma}
\label{lem:ReturnProbabilityPiecewiseContinuousIntegral}
Given the function $\hat q_d$ defined in Lemma~\ref{lem:ReturnProbabilityPiecewiseContinuous},
it is possible to approximate the integral $\int_{d_1}^{d_2+1} \hat q(\delta)$ to a multiplicative factor of $\left( 1\pm \frac{1}{n^2} \right)$,
in $\poly(\log n)$ time for any valid $d_1, d_2\in \mathbb Z$ (the bounds must be such that $d_i \ge 0$ and $U+D-2d_i-k+1 \ge 0$).
\end{lemma}
\begin{proof}
We will compute the integral by splitting it up into $\mathcal O(\log^4 n)$ continuous pieces and then approximating the integral over each piece.
The pieces corresponding to values of $d\in \mathcal R$ can be explicitly computed as $\int_d^{d+1} \hat q(\delta) = p_d$
(recall that we can compute $p_d$ using Lemma~\ref{lem:probability_approximation_oracle} from Section~\ref{sec:computing_probabilities}).

The more interesting pieces are over the range of $\delta$ where $\floor\delta\not\in \mathcal R$.
Such a piece is given by a continuous range of values $[\delta_{min}, \delta_{max}]\subseteq [d_1, d_2+1]$,
such that for any $\delta\in [\delta_{min}, \delta_{max}]$, the value of $\floor{r(\floor{\delta})}$ is a constant $E$.
We begin the calculation of $\hat q(\delta) = \mathcal K\cdot f(\delta)\cdot exp(-E)$,
by first computing a $\left( 1\pm 1/n^3\right)$ multiplicative approximation to $\ln \mathcal K = (U + D)\ln 2 - \ln S_{total}$,
using the strategy in Lemma~\ref{lem:probability_approximation_oracle} (Section~\ref{sec:computing_probabilities}).
Since $f(\delta)$ has a closed form integral (Equation~\ref{eq:dyck_integrable_function}),
we can also compute $F = \int_{\delta_{min}}^{\delta_{max}} f(\delta)$.
Using the fact that the exponent is constant over the range $[\delta_{min},\delta_{max}]$, we can write the logarithm of the integral as:
\[
\ln\left(\ \int\limits_{d_{min}}^{d_{max}} \hat q(\delta)\right) = \ln \left( \mathcal K\cdot e^{-E}\cdot F\cdot\right)
= (U + D)\cdot \ln 2 - \ln S_{total} + \ln F - E
\]
If this value is smaller than $-3\ln n$, we can safely ignore it since it contrbuted less than $1/n^3$ to the probability mass.
On the flipside, the logarithm is guaranteed to be bounded by $\mathcal O(1)$.
This upper bound is a result of the fact that $\int \hat q(\delta) = \mathcal O(\sum p_d) = \mathcal O(1)$,
where both the sum and the integral are taken over the entire \emph{valid} range of $d$.
This means that we can exponentiate to obtain the true value of the piecewise integral up to a multiplicative approximation of $(1\pm 1/n^3)$.
Adding the integrals of all the $\mathcal O(\log^4 n)$ pieces together produces the final desired value of the integral.
\end{proof}

Now, we are finally ready to use Lemma~\ref{lem:rejection_sampling} to sample $d$ from the distribution $\{p_d\}$,
using the efficient sampling procedure for $\{q_d\}$.
The only other requirement is the ability to approximate the $p_d$ values, which follows from Lemma~\ref{lem:probability_approximation_oracle}.
\begin{theorem}
\label{thm:first_return_in_interval}
Given a sub-interval $[x_{i},x_{i+1}]$ of a random Dyck path of length $2n$,
such that the only determined heights in the interval are $y_i=\func{Height}(x_i)$ and $y_{i+1}=\func{Height}(x_{i+1})$,
and a mandatory boundary constraint at $k$,
there exists an algorithm that generates a point $x$ within the interval such that $\func{Height}(x) = y_i-k+1$,
using $\poly(\log n)$ time, random bits, and additional space.
\end{theorem}

\subsubsection{Finding the Correct Interval: $\func{First-Return}$ Query}
\label{sec:finding_the_correct_interval}
As before, consider all positions that have been queried already $ \langle x_1, x_2,\cdots, x_m \rangle$ (in increasing order)
along with their corresponding heights $ \langle y_1, y_2,\cdots, y_m \rangle$.
We wish to find the first return to height $y_i$ after $x_i$ (where $y_i = \func{Height}(x_i)$).
Our strategy begins by using Invariant~\ref{inv:boundary_invariant} to find the interval $\mathcal I = [x_k, x_{k+1}]$
containing $\func{First-Return}(x_i)$.
\begin{figure}[htpb]
    \centering
    \includegraphics[width=\textwidth]{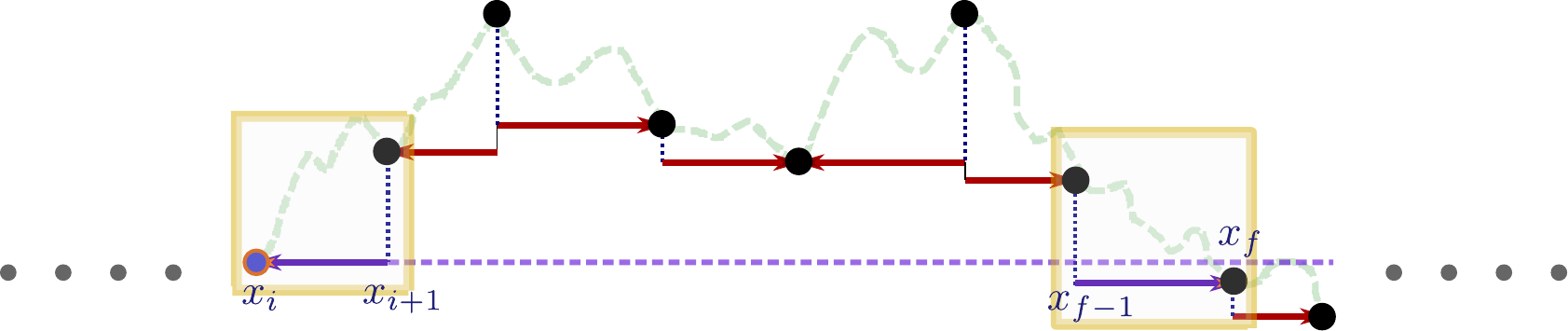}
    \caption{There are only two possible intervals (yellow boxes) that could contain $\func{First-Return}(x_i)$;
    either the interval adjacent to $x_i$, or the interval $[x_{f-1}, x_f]$, where $f$ is the smallest index such that $y_f < x_i$.
    }
    \label{fig:dyck_finding_correct_interval}
\end{figure}
\begin{restatable}{lemma}{FirstReturnInterval}
\label{lem:first_return_interval}
Assuming that Invariant~\ref{inv:boundary_invariant} holds, the interval $(x_{j-1},x_{j}]$ containing $\func{First-Return}(x_i)$
is obtained by setting $j$ to be either the smallest index $f>i$ such that $y_f\le y_i$ or setting $j-1=i$.
\end{restatable}
\begin{proof}
We assume the contrary i.e. there exists some $k\not=f$ and $k\not=i+1$ such that the correct interval is $(x_{k-1},x_k]$.
Since $y_f<y_i$, the position of first return to $y_i$ happens in the range $(x_i,x_f]$.
So, the only possibility is $i+1 < k \le f-1$.
By the definition of $y_f$, we know that both $y_k$ and $y_{k-1}$ are strictly larger than $y_i$.
Invariant~\ref{inv:boundary_invariant} implies that the boundary for this interval $(y_{k-1},y_k]$ is at $\min(y_{k-1},y_k) > y_i$.
So, it is not possible for the first return to be in this interval.
\end{proof}

The good news is that there are only two intervals that we need to worry about, one of which is just the adjacent one $[x_i, x_{i+1}]$.
The problem of finding the other interval that may contain the first return boils down to finding the smallest index $f>i$ such that $y_f\le y_i$.
To this end, we define $M_{[a,b]}$ as the minimum determined height in the range of positions $[a,b]$.
%(lowest $y$-value) of any sampled boundary in the range of positions $[a,b]$ along the Dyck path.

One solution is to maintain a range tree $\mathbf R$ \cite{comp_geo} over the range $[2n]$.
Assuming that $2n = 2^l$, we can view $\mathbf R$ as a complete binary tree with depth $r$.
Every non-leaf node is denoted by $\mathbf R_{[a,b]}$, and corresponds to a range $[a,b]\subseteq[2n]$ that is the union of the ranges of its children.
Each $\mathbf R_{[a,b]}$ stores the value $M_{[a,b]}$ which is is the minimum determined height
in the range of positions $[a,b]$, or $\infty$ if none of the heights have been revealed.
%the height of the last boundary update that affected the entire range $[a,b]$.
The leaf nodes are denoted as $\{\mathbf R_i\}_{i\in[2n]}$, and correspond to the singleton range corresponding to position $i\in [2n]$.
%and by definition, this is the min of the values stored at the children of $\mathbf R_{[a,b]}$.
Note that a node at depth $l'$ will correspond to a range of size $2^{l-l'}$, with the root being associated with the entire range $[2n]$.

We say that the range $[a,b]$ is \emph{canonical} if it corresponds to a range of some $\mathbf{R}_{[a,b]}$ in $\mathbf R$.
By the property of range trees, any arbitrary range can be decomposed into a disjoint union of $O(\log n)$ canonical ranges.
We implement $\mathbf{R}$ to support the following operations:
\begin{itemize}
    \item \func{Update}$(x,y)$: Update the height of the position $x$ to $y$.\\
    This update affects all ranges $[a_i,b_i]$ containing $x$.
    So, for each $[a_i,b_i]$ we set $M_{[a_i,b_i]} = \min\left( M_{[a_i,b_i]}, y\right)$.
    \item \func{Query}$(a,b)$: Return the minimum boundary height in the range $[a,b]$.\\
    We decompose $[a,b]$ into $\mathcal O(\log n)$ \emph{canonical} ranges $ \langle r_1, r_2,\cdots\rangle$,
    %For each $r_i$, we compute $M_{r_i}$ as the minimum over all $R_r$ such that the \emph{canonical} range $r$ contains $r_i$.
    %Equivalently, we set $M_{r_i}$ to be the minimum of all values on the path from $\mathbf R_{r_i}$ to the root of $\mathbf R$,
    %which takes $\mathcal O(\log n)$ time.
    %Note that the values stores in the sub-tree rooted at $\mathbf R_{r_i}$ do not affect $M_{r_i}$,
    %since the coreesponding boundary updates did not affect the \emph{entire} range $r_i$ (these updates only affected canonical sub-ranges of $r_i$).
    and return the minimum of all the $M_{r_i}$ values as $M_{[a,b]}$ (since $[a,b]$ is the union of all $r_i$).
\end{itemize}

Now, we can binary search for $f$ by guessing a value $f'$ and checking if $\func{Query}(x_i,x_{f'}) \le y_i$.
Overall, this requires $\mathcal O(\log n)$ calls to $\func{Query}$, each of which makes $\mathcal O(\log n)$ probes to the range tree.
To avoid an initialization overhead, we only create the node $\mathbf{R}_{[a,b]}$ during the first $\func{Update}$ affecting a position $x\in[a,b]$.
Since a call to $\func{Update}$ can create at most $\mathcal O(\log n)$ new nodes in $\mathbf R$,
the additional space required for each $\func{Height}$ or $\func{First-Return}$ query is still bounded.

\begin{theorem}
\label{thm:dyck_first_return_sampling}
There is an algorithm using $\mathcal O(\log^{\mathcal O(1)} n)$ resources per query that provides access to a random Dyck path of length $2n$,
by answering queries of the form \func{First-Return}$(x_i)$ with the correctly sampled smallest position $x'>x_i$,
where the Dyck path first returns to $\func{Height}(x_i)$.
\end{theorem}
\begin{proof}
In order to make the presentation simpler, we ensure that the next determined position after $x_i$ is $x_i+1$ (in other words $x_{i+1} = x_i + 1$).
This can be done by invoking $\func{Height}(x_i+1)$, if needed.
If $\func{Height}(x_i+1) = y_{i+1} < y_i$, we can terminate because $\func{First-Return}(x_i)$ is not defined.
Otherwise, we notice that in this setting, the first return cannot lie in the adjacent interval $[x_i,x_{i+1}] = [x_i, x_i+1]$.

Hence, we proceed to finding the smallest value $f$ such that $y_f \le y_i$, by using the range tree data structure described above.
Since $\func{Height}(x_{f-1}) > y_i \le \func{Height}(x_f)$ by definition,
the interval $(x_{f-1},x_f]$ must contain at least one position at height $y_i$.
We determine the height at the midpoint of this interval,
and then fix the potential violation of Invariant~\ref{inv:boundary_invariant} by finding another point along with its height.
This essentially breaks up the interval $[x_{f-1},x_f]$ into $\mathcal O(1)$ sub-intervals, each at most half the size of the original.
Based on the newly revealed heights, we again find the (newly created) sub-interval containing the first return in $\mathcal O(1)$ time.
We repeat up to $\mathcal O(\log n)$ times, reducing the size of the intervals in consideration, until the position of the first return is revealed.
\end{proof}

\subsubsection{Maintaining \func{Height} Queries under Invariant~\ref{inv:boundary_invariant}}
\label{sec:maintaining_height_queries_under_invariant}
Finally, we show that the boundary constraints introduced in order to maintain Invariant~\ref{inv:boundary_invariant}
do not interfere with the implementation of \func{Height} queries.
As before, we consider the currently revealed heights $ \langle y_1, y_2,\cdots, y_m \rangle$,
along with the corresponding positions $ \langle x_1, x_2,\cdots, x_m \rangle$ (in increasing order).
Say that we are now presented with a query $\func{Height}(x)$, where $x_i < x < x_{i+1}$.
As in Section~\ref{sec:implementing_height_queries}, we swap $x_i$ and $x_{i+1}$ if necessary in order to ensure that $y_i < y_{i+1}$.
Due to Invariant~\ref{inv:boundary_invariant}, we know that the lowest achievable height in the interval $[x_i, x_{i+1}]$ is $y_i$,
i.e. the boundary constraint for the left half becomes $k = 1$ instead of $k = y_i + 1$, since the constrained boundary is at height $y_i$.
Similarly, the boundary constraint for the right half becomes $k' = 2U - 2D + 1$.

The algorithm for determining the height at the midpoint $x_{mid}$ of $[x_i,x_{i+1}]$
can proceed as described in Section~\ref{sec:implementing_height_queries}.
Of course, in this scenario, the boundary is never far away, and therefore we should always use the strategy in Section~\ref{sec:the_difficult_case}.
The second step is to re-establish Invariant~\ref{inv:boundary_invariant} in the newly created sub-interval $[x_{mid},x_{i+1}]$\footnote{
Note that the invariant is not broken for the $[x_i,x_{mid}]$ sub-interval since $y_i+1$ was the original boundary constraint.
In general, the invariant will be automatically preserved for one of the sub-intervals; the side corresponding to $\min(y_i,y_{i+1})$.},
using the two step strategy from Section~\ref{sec:maintaining_a_boundary_invariant}.
This involves determining the height of one additional point $x'\in [x_{mid},x_{i+1}]$.
Finally, we can continue with the height sampling algorithm from Theorem~\ref{thm:dyck_height_sampling},
by recursively halving one of the newly created intervals (the one containing $x$).
Figure~\ref{fig:dyck_invariant_height_queries} illustrates the aforementioned three steps.
\begin{figure}[htpb]
    \centering
    \includegraphics[width=\textwidth]{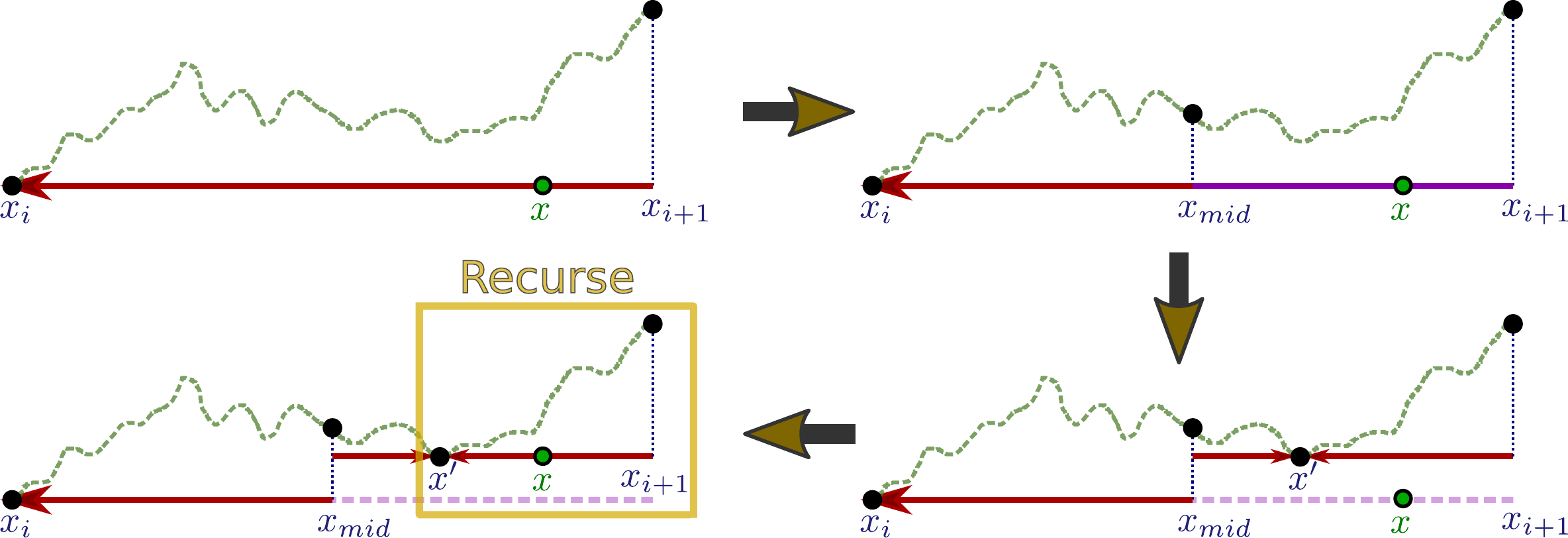}
    \caption{Performing \func{Height} queries while maintaining Invariant~\ref{inv:boundary_invariant}.
    The first step is to determine the height at the midpoint $x_{mid}$ of an existing interval $[x_i,x_{i+1}]$ that contains $x$.
    Next, we re-establish Invariant~\ref{inv:boundary_invariant} by sampling an additional point $x'$ within the violating interval
    Finally, we recurse on the appropriate sub-interval (yellow box).
    }
    \label{fig:dyck_invariant_height_queries}
\end{figure}

Hence, we can combine results from Theorem~\ref{thm:dyck_height_sampling} and Theorem~\ref{thm:dyck_first_return_sampling} to obtain the following:
\CatalanGrand*

\subsubsection{\func{Reverse-First-Return} Queries}
\label{sec:reverse_first_return_queries}
For the sake of completeness, we also sketch how to implement a \func{Reverse-First-Return} query,
which is just a standard \func{First-Return} query on the reversed Dyck path.
This type of query was motivated in Section~\ref{sec:bijections_to_other_catalan_objects},
and can be applied to positions $x$ where $\func{Height}(x-1) = \func{Height}(x)+1$.

Note that Invariant~\ref{inv:boundary_invariant} remains unchanged as it is symmetric to reversal of the Dyck path.
Specifically, the invariant on a particular interval does not change if we swap the endpoints of the interval.
First, we can find the correct interval containing $\func{Reverse-First-Return}(x)$, by using an analog of Lemma~\ref{lem:first_return_interval}.
\begin{restatable}{lemma}{ReverseFirstReturnInterval}
\label{lem:reverse_first_return_interval}
Assuming Invariant~\ref{inv:boundary_invariant}, the interval $[x_{j},x_{j+1})$ containing $\func{Reverse-First-Return}(x_i)$
is obtained by setting $j$ to be either the largest index $f<i$ such that $y_f\le y_i$ or setting $j+1=i$.
\end{restatable}
Since the range tree defined in Section~\ref{sec:finding_the_correct_interval} allows us to query the minimum in arbitrary contiguous ranges,
we can also use it to find the appropriate $f$ for Lemma~\ref{lem:reverse_first_return_interval} through binary search.
Once we have found an interval that is known to contain the \func{Reverse-First-Return},
we can follow the strategy from the proof of Theorem~\ref{thm:dyck_first_return_sampling};
recursively sub-dividing the relevant interval and finding the newly created sub-interval that contains the return,
until we converge on the correct return position.

\section{Random Coloring of a Graph}%
\label{sec:random_coloring_of_a_graph}
A \emph{valid} $q$-coloring of a graph $G=(V, E)$ is a vector of colors $\vec X \in [q]^V$, such that for all $(u,v)\in E$, $\vec X_u \not= \vec X_v$.
We present a sub-linear time algorithm to provide local access to a uniformly random valid $q$-coloring of an input graph.
Specifically, we implement $\func{Color}(v)$, which returns the color $\vec X_v$ of node $v$, where $\vec X$ is a uniformly random valid coloring.
The implementation can access the input graph $G$ through a sub-linear number of \emph{neighborhood queries}.
A neighborhood query of the form $\func{All-Neighbors}(v)$ returns a list of neighbors of $v$.
The implementation can also access a tape of public random bits $\vec R$.

Moreover, multiple independent instances of $\func{Color}$ that are given access to the same public tape of random bits $\vec R$,
should output color values consistent with a single $\vec X$, regardless of the order and content of the queries received.
Unlike our previous results, the choice of $\vec X$ only depends on $\vec R$,
and the \func{Color} implementations do not need any additional memory to maintain consistency.
For a formal description of this model, see Definition~\ref{def:local_access_LCA}.
This model is essentially a generalization of \emph{Local Computation Algorithms} \cite{LCA}.
Given a computation problem with input $x$ and a set of valid outputs $F(x)$,
LCAs provide query access to some $y\in F(x)$ using only a sub-linear number of probes to $x$.
This makes it feasible to consider sub-linear algorithms for problems where the output size is super-linear.
Our model has the additional restriction that the output must be drawn uniformly at random from $F(x)$, rather than just being an arbitrary member.
Since the number of possible outputs ($|F(x)|$) may be exponential, it is not possible to encode all the requisite random bits in sub-linear space.
Therefore, we use a source of public randomness $\vec R$.

\paragraph*{Sequential Algorithm for Random Coloring}
\label{par:sequential_algorithm_for_random_coloring}
We consider graphs with max degree $\Delta$, and $q = \Theta(\Delta)$, since this is the regime where this problem is feasible \cite{glauber_survey}.
In the sequential setting, \cite{glauber_survey} used the technique of path coupling to show that for $q > 2\Delta$,
one can sample an uniformly random coloring by using a simple Markov chain.
The Markov chain proceeds in $T$ steps. The state of the chain at time $t$ is given by $\vec X^t\in [q]^{|V|}$.
Specifically, the color of vertex $v$ at step $t$ is $\vec X^t_v$.
In each step of the Markov process, a vertex and a color are sampled uniformly at random i.e. a pair $(v, c)\thicksim_{\mathcal U} V\times [q]$.
Subsequently, if the recoloring of vertex $v$ with color $c$ does not result in a conflict with $v$'s neighbors,
i.e. $c\not\in \left\{ \vec X^t_u : u\in \Gamma(v)\right\}$, then the vertex is recolored i.e. $\vec X_v^{t+1}\leftarrow c$.
After running this chain for $T = \mathcal{O}(n\log (n/\epsilon))$ steps, the Markov chain is mixed,
implying that the distribution of resulting colors is $\epsilon$ close to the uniform distribution in $L_1$ distance.

\subsection{Modified Glauber Dynamics based on a Distributed Algorithm}%
\label{sec:modified_glauber_dynamics}

Now we define a modified Markov Chain that proceeds in epochs.
We denote the initial coloring of the graph by the vector $\vec X^0$ and the state of the coloring after the $k^{th}$ epoch by $\vec X^k$.
In the $k^{th}$ epoch, every node attempts to recolor itself simultaneously in a conservative manner, as described below:
\begin{itemize}
    \item Sample $|V|$ colors $ \langle c_1, c_2,\cdots, c_n \rangle$ from $[q]$, where $c_v$ is the color proposed by vertex $v$.
    \item For each vertex $v$, we set $\vec X^k_v$ to $c_v$, if and only if for all neighbors $w$ of $v$:
    \begin{enumerate}
        \item \fbox{$c_v \not= c_w$}
        The proposed color $c_v$ does not conflict with any of the neighbor proposals.
        \item \fbox{$c_v \not= \vec X^{k-1}_w$}
        Proposal $c_v$ does not conflict with any of the neighbor's current colors\footnote{Color of $w$ at the end of the previous epoch.}.
        \item \fbox{$c_w \not= \vec X^{k-1}_v$}
        None of the neighboring proposals $c_w$ conflict with current color of $v$.
    \end{enumerate}
\end{itemize}
This Markov Chain was introduced as the \emph{LocalMetropolis} algorithm by \cite{what_local}, in the context of distributed algorithms.
It produces a random valid coloring when $q > (2+\sqrt 2)\Delta$.
The bound on $q$ was further improved in \cite{yitong, ghaffari_fischer} by using a marking probability $\gamma$,
which indicates the likelihood of any vertex participating in a given round.
In the distributed setting, our epochs correspond to synchronous rounds, where many vertices re-color themselves simultaneously.

In order to bound the mixing time of this Markov chain, we use the standard technique of \emph{path coupling}, introduced in \cite{path_coupling}.
The argument begins by considering two initial states of the Markov Chains, say two colorings $\vec X^0$ and $\vec Y^0$,
that differ at only one vertex.
Formally, we can define the distance between two colorings $d(\vec X,\vec Y)$ as the number of vertices $v$ such that $\vec X_v\not= \vec Y_v$,
which results in the condition $d(\vec X^0, \vec Y^0) = 1$.
A \emph{coupling} is a joint evolution rule for a pair of states $(\vec X^0,\vec Y^0)\rightarrow(\vec X^1,\vec Y^1)$,
such that both of the individual evolutions $(\vec X^0\rightarrow \vec X^1)$ and $(\vec Y^0\rightarrow \vec Y^1)$
have the same transition probabilities as the original Markov Chain.
%Now, we pick a random permutation of the vertices along with uniformly sampled colors:
%\[
%\left\langle (v_1, c_1), (v_2, c_2), \cdots, (v_n, c_n)\right\rangle
%= \left\langle (\pi_1, c_1), (\pi_2, c_2), \cdots, (\pi_n, c_n)\right\rangle
%\]
%Now, for each $(v_i, c_i)$ in order, we update the coloring of $X$ and $Y$ as follows:
%\begin{itemize}
    %\item If the current color of $v_i$ as well as $c_i$ are both in $\{c_X,c_Y\}$,
    %then the $\vec X$ chain picks the color $c_i$ and the $\vec Y$ chain picks the other color.
    %\item Otherwise, both chains pick the same color $c_i$ for the vertex $v_i$.
%\end{itemize}
We can directly use the following main result from \cite{ghaffari_fischer} (setting $\gamma = 1$).
\begin{lemma}
\label{lem:ghaffari_fischer_single_epoch_distance}
If $q = 3\alpha\Delta$, then there exists a coupling $(\vec X^0,\vec Y^0)\rightarrow(\vec X^1,\vec Y^1)$, such that if $d(\vec X^0, \vec Y^0) = 1$,
then $\mathbb E[d(\vec X^1,\vec Y^1)] \le 1-\left( 1-\frac1{3\alpha}\right)e^{-2/\alpha} + \frac{1/3\alpha}{1-2/3\alpha}$
\end{lemma}
\begin{corollary}
\label{cor:single_epoch_distansce}
If $q \ge 9\Delta$ (or $\alpha > 3$) and $d(\vec X^0, \vec Y^0) = 1$, then $\mathbb E[d(\vec X^1,\vec Y^1)] < \frac1{e^{1/3}}$
\end{corollary}

The \emph{path coupling} lemma from \cite{path_coupling} uses a coupling on adjacent states to bound the mixing time.
\begin{lemma}
\label{lem:path_coupling}
\textbf{(Simplified Path Coupling from \cite{path_coupling})}
If there exists a coupling $(\vec X^0,\vec Y^0)\rightarrow(\vec X^1,\vec Y^1)$ defined for states where $d(\vec X^0, \vec Y^0) = 1$,
such that $\mathbb{E}[d(\vec X^1, \vec Y^1) \mid \vec X^0, \vec Y^0] \le \beta$ (for $\beta < 1$),
then, the mixing time $\tau_{mix}(\epsilon) = \mathcal O(\ln (n\epsilon^{-1})/\ln \beta^{-1})$.
\end{lemma}

\begin{corollary}
\label{cor:modified_mixing_time}
If $q\ge 9\Delta$, then the chain is mixed after $\tau_{mix}(\epsilon) = 3\left( \ln n + \ln(\frac1{\epsilon})\right)$ epochs.
\end{corollary}
%\begin{proof}
%Starting for a maximum distance of $n$, the distance decreases to $1$ after at most $3\ln n$ epochs,
%and it takes a further $3\ln\left( \frac{1}{\epsilon} \right)$ to reduce the distance to $\epsilon$.
%\end{proof}
\subsubsection{Naive Reduction from Distributed Algorithms}
\label{sec:naive_reduction_from_distributed_algorithms}
Using the technique of Parnas and Ron \cite{parnas_ron},
one can modify a distributed algorithm for a graph problem to construct a Local Computation Algorithm for the same problem.
Specifically, given a $k$-round distributed algorithm on a network of max degree $\Delta$,
we can simulate the behaviour and outcome of a single node $v$ by simulating the full algorithm on the $k$-neighborhood of $v$.
The simulation only requires us to probe this $k$-neighborhood, which contains $\mathcal O(\Delta^k)$ nodes.
However, the aforementioned distributed algorithm for Glauber Dynamics used $\mathcal O(\log n)$ rounds,
implying a probe complexity of $\Delta^{\mathcal O(\log n)}$, which is super-linear.
We show how to reduce the probe complexity by appropriately pruning the $k$-neighborhood.

\subsection{Local Coloring Algorithm}%
\label{sec:local_coloring_algortihm}
Given query access to the adjacency matrix of a graph $G$ with maximum degree $\Delta$ and a vertex $v$,
the algorithm has to output the color assigned to $v$ after running $t = \mathcal O(\ln n)$ epochs of \emph{Modified Glauber Dynamics}.
We want to be able to answer such queries in sub-linear time, without simulating the entire Markov Chain.
We will define the number of colors as $q = 3\alpha\Delta$ where $\alpha > 1$.

The proposals at each epoch are a vector of color samples $\vec C^{t} \thicksim_{\mathcal U} [q]^n$,
where $\vec C^t_v$ is the color proposed by $v$ in the $t^{th}$ epoch.
Since each of these $\vec C^t_v$ values are independent uniform samples from $[q]$,
instances of our algorithm will be able to access them in a consistent manner using the public random bits $\vec R$.

We also use $\vec X^t$ to denote the final vector of vertex colors at the end of the $t^{th}$ epoch.
Finally, we define indicator variables $\bm \chi^t_v$ to indicate whether the color $\vec C^t_v$ proposed by vertex $v$
was accepted at the $t^{th}$ epoch: $\bm \chi^t_v = \ONE$ if and only if for all neighbors $w\in \Gamma(v)$,
we satisfy the conditions $\vec C^t_w\not= \vec X^{t-1}_v$, $\vec C^t_v\not= \vec X^{t-1}_w$ and $\vec C^t_v\not= \vec C^t_w$
(see Section~\ref{sec:modified_glauber_dynamics}).

So, the color of a vertex $v$ after the $t^{th}$ epoch $\vec X^t_v$ will be $\vec C^i_v$
where $i\le t$ is the largest index such that $\bm \chi^i_v=1$.
While the proposals $\vec C^t_v$ can simply be read off the public random tape $\vec R$,
it is not clear how we can determine the $\bm \chi^t_v$ values efficiently.
Computing $\vec X^t_v$ is quite simple if we know the values $\bm\chi^i_v$ for all $i\le t$.
So, we focus our attention on the query $\func{Accepted}(v,t)$ that samples and returns the value of $\bm\chi^t_v$.

\subsubsection{Local Access to an Initial Valid Coloring}
\label{sec:local_access_to_an_initial_valid_coloring}
One caveat that we have not addressed is how we should initialize the Markov Chain.
The starting state can be any valid coloring of $G$, but we have to be able to access the initial colors of requisite vertices in sub-linear time.
One option is to simply assume that we have oracle access to an arbitrary valid coloring.
We can also invoke a result from \cite{coloring_initialize} that provides a \emph{Local Computation Algorithm} for $(\Delta+1)$-coloring of a graph.
Specifically, they provide local access to the color of any vertex using $\mathcal O(\Delta^{\mathcal O(1)}\log n)$ queries to the underlying graph,
such that the returned colors are consistent with some valid coloring.
Integrating their routine into our algorithm incurs a multiplicative $\poly(\Delta)$ overhead for each query.

Another option is to (uniformly) randomly intialize the colors of the vertex independently of its neighbors.
Although the initial coloring may be invalid, one can show that with high probability, the final coloring after running the Markov Chain is valid.
The intuitive reasoning is that each vertex attempts to recolor itself $\mathcal O(\log n)$ times,
and each attempt suceeds with constant probabiltity at least $1 - 1/\alpha$ (Lemma~\ref{lem:color_reject_probability}).
Furthermore, we can union bound to claim that \emph{all} the vertices get re-colored at least once with high probability.
After a vertex is re-colored once, it can no longer be in conflict with any of its neighbor's colors,
and therefore we obtain a final coloring that is valid.
For the sake of simplicity, we will assume that our algorithm can access the initial colors of any vertex $v$ through a public $\vec C^0_v$.

\subsubsection{Naive Coloring Implementations}%
\label{sec:naive_coloring_implementations}
Our general strategy to determine $\bm\chi^t_v$ will be to check for all neighbors $w$ of $v$,
whether $w$ causes a conflict with $v$'s proposed color in the $t^{th}$ epoch.
One naive way to achieve this, is to iterate backwards from epoch $t$, querying to find out whether $w$'s proposal was accepted,
until the most recent accepted proposal (latest epoch $t' < t$ such that $\bm\chi^{t'}_w=\ONE$) is found.
At this point, if $\vec C^{t'}_w =\vec C^t_v$, then the current color of $w$ conflicts with $v$'s proposal.
Otherwise there is no conflict, and we can proceed to the next neighbor.
However, this process potentially makes $\Delta$ recursive calls to a sub-problem that is only slightly smaller i.e. $T(t) \le \Delta\cdot T(t-1)$.
This leads to a running time upper bound of $\Delta^{t}$ which is superlinear for the desired number of epochs $t = \Omega(\log n)$ (the mixing time).

We can prune the number of recursive calls by only processing the neighbors $w$ which actually proposed the color $\vec C^t_v$ during \emph{some} epoch.
In this case, the expected number of neighbors that have to be probed recursively is less than $t\Delta/q$
(since the total number of neighbor proposals over $t$ epochs is at most $t\Delta$, and there are $q$ possible colors).
So, the overall runtime is upper bounded by $(t\Delta/q)^{t}$.
For this algorithm, if we allow $q > t\Delta = \Omega(\Delta\log n)$ colors, the runtime becomes sub-linear.
So, we can use this simple algorithm only when $q$ is sufficiently large.
However, we want a sub-linear time algorithm for $q = \mathcal O(\Delta)$.

\subsubsection{A Sub-linear Time Algorithm for $q = \mathcal O(\Delta)$}
\label{sec:jumping_back_to_past_epochs}
The expected number of neighbors that need to be checked recursively can always be $t\Delta/q$ in the worst case.
The crucial observation is that even though these recursive calls seem unavoidable,
we can aim to reduce the size of the recursive sub-problem, and thus bound the number of levels of recursion.

Algorithm~\ref{alg:coloring} shows our final procedure for generating $\bm\chi^t_v$,
where $\vec c = \vec C^t$ is used to represent the vector of proposals at epoch $t$.
We iterate through all neighbors $w$ of $v$, checking for conflicts (Line~\ref{alg:line:iterate_through_neighbors}).
The condition $\vec c_v\not=\vec c_w$ can be checked by reading $\vec c_w$ off the public random tape (Line~\ref{alg:line:check_immediate_conflict}).
If no conflict is seen, we proceed to check the remaining conditions: $\vec c_v\not= \vec X^{t-1}_w$ and $\vec c_w\not= \vec X^{t-1}_v$.

\begin{figure}[!ht]
    \renewcommand\figurename{Algorithm}
    \caption{Checking if proposal is accepted}
    \hrule
    \label{alg:coloring}
    \begin{algorithmic}[1]
        \Procedure{Accepted}{$v, t$}
            \State {$\vec c\gets\vec C^t$} \Comment{\parbox[t]{.6\linewidth}{\small Find current proposals using the public random bits}}
            \For{$w \gets \Gamma(v)$} \label{alg:line:iterate_through_neighbors}
                \Comment{\parbox[t]{.6\linewidth}{\small Iterate through the neighbors, checking for conflicts}}
                \If {$\vec C_w^t = \vec c_v$} \label{alg:line:check_immediate_conflict}
                    \Comment{\parbox[t]{.6\linewidth}{\small Check for conflict with neighbor's current proposal}}
                    \State \Return $\ZERO$
                        \Comment{\parbox[t]{.5\linewidth}{Conflict! This proposal is \textbf{not} accepted} \label{alg:line:conflict}}
                \EndIf
                \For{$t' \gets [t, t-1, t-2, \cdots, 1, 0]$} \label{alg:line:iterate_backwards}
                \Comment{\parbox[t]{.5\linewidth}{Check conflict with neighbor's prior color}}
                    \If {$\vec C^{t'}_w = \vec c_v$ \textbf{and} \func{Accepted}($w, t'$)}\label{alg:line:check_accepted}
                        \Comment{\parbox[t]{.40\linewidth}{\footnotesize Potential conflict: $w$ is colored $\vec c_v$}}
                        \State $overwritten\gets \FALSE$
                            \Comment{\parbox[t]{.40\linewidth}{\footnotesize Check if $\vec c_v$ overwritten in future epoch}}
                        \For{$\widetilde t \gets {\scriptstyle [t'+1, t'+2, t'+3, \cdots, t-1]}$} \label{alg:line:check_overwritten}
                            \If {\func{Accepted}($w, \widetilde t$)} \label{alg:line:check_overwritten_recursive}
                                \State $overwritten\gets \TRUE$
                                \State \textbf{break}
                            \EndIf
                        \EndFor
                        \If {\textbf{not} $overwritten$}
                        \State \Return $\ZERO$
                            \Comment{\parbox[t]{.5\linewidth}{Conflict! This proposal is \textbf{not} accepted} \label{alg:line:conflict1}}
                        \EndIf
                        \State \textbf{break}
                   \EndIf
                \EndFor
                \For{$t' \gets [t, t-1, t-2, \cdots, 1, 0]$} \label{alg:line:iterate_backwards2}
                    \Comment{\parbox[t]{.5\linewidth}{Check if $\vec c_w$ conflicts with $v$'s current color.}}
                    \If {$\vec C^{t'}_v = \vec c_w$ \textbf{and} \func{Accepted}($v, t'$)}\label{alg:line:check_accepted2}
                        \Comment{\parbox[t]{.40\linewidth}{\footnotesize Potential conflict: $v$ is colored $\vec c_w$}}
                        \State $overwritten\gets \FALSE$
                            \Comment{\parbox[t]{.40\linewidth}{\footnotesize Check if $\vec c_w$ overwritten in future epoch}}
                        \For{$\widetilde t \gets {\scriptstyle [t'+1, t'+2, t'+3, \cdots, t-1]}$} \label{alg:line:check_overwritten2}
                            \If {\func{Accepted}($v, \widetilde t$)} \label{alg:line:check_overwritten_recursive2}
                                \State $overwritten\gets \TRUE$
                                \State \textbf{break}
                            \EndIf
                        \EndFor
                        \If {\textbf{not} $overwritten$}
                        \State \Return $\ZERO$
                            \Comment{\parbox[t]{.5\linewidth}{Conflict! This proposal is \textbf{not} accepted} \label{alg:line:conflict2}}
                        \EndIf
                        \State \textbf{break}
                   \EndIf
                \EndFor
            \EndFor
            \State \Return $\ONE$ \Comment{\parbox[t]{.5\linewidth}{No conflicts! This proposal is accepted} \label{alg:line:no_conflict}}
        \EndProcedure
    \end{algorithmic}
    \hrule
\end{figure}

In order to check the first condition $\vec c_v\not= \vec X^{t-1}_w$,
we iterate through all the epochs in reverse order (line~\ref{alg:line:iterate_backwards})
to check whether the color $\vec c_v$ was ever proposed for vertex $w$ (if not, we can ignore $w$).
If this is the case, let's say that the most recent proposal for $\vec c_v$ was at epoch $t'$ i.e. $\vec C^{t'}_w = \vec c_v$.
Now, we directly ``jump'' to the $(t')^{th}$ epoch and recursively check if this proposal was accepted (line~\ref{alg:line:check_accepted}).
If the proposal $\vec C^{t'}_w$ was not accepted, we keep iterating back in time until we find the next most recent epoch
when $\vec c_v$ was proposed by $w$, or until we run out of epochs.
When we find the most recent epoch $t'$ in which $\vec c_v$ was accepted i.e. $\bm\chi^{t'}_w = \ONE$,
we successively consider epochs $t'+1, t'+2, t'+3, \cdots, t-1$, to see whether,
the color $\vec c_v$ was overwritten (line~\ref{alg:line:check_overwritten}) by an accepted proposal in a future epoch.
This is done by recursively invoking $\func{Accepted}(w,t'+i)$ in order to compute $\bm\chi^{t'+i}_w$
(line~\ref{alg:line:check_overwritten_recursive}), for $i\ge 1$.
If at any of these subsequent iterations, we see that a different proposal was accepted (thus overwriting the color $\vec c_v$),
then neighbor $w$ does not cause a conflict, and we can move on to the next neighbor.
Otherwise, we have seen that $\bm\chi^{t'}_w = \ONE$ (color $\vec c_v$ was accepted),
and every subsequent proposal until the current epoch $t$ was rejected;
this implies that color $\vec c_v$ \emph{survived} as the color of neighbor $w$, i.e. $\vec X^{t-1}_w = \vec c_v$.
This leads to a conflict with $v$'s current proposal for color $\vec c_v$ (line~\ref{alg:line:conflict}) and hence $\bm\chi^t_v = \ZERO$.

We can also check the second condition $\vec c_w\not= \vec X^{t-1}_v$ in a similar manner.
In this case, we would iterate back through epochs to find a $t'$ such that $\vec C^{t'}_v = \vec c_w$
i.e. vertex $v$ proposed color $\vec c_w$ (neighbor's current proposal) at epoch $t'$ (line~\ref{alg:line:check_accepted2}).
As before, if such a $t'$ is found, we step fowards through the epochs to check
whether the color of $v$ remains $\vec c_w$, or whether it was overwritten (line~\ref{alg:line:check_overwritten2}).
If we exhaust all the neighbors and don't find any conflicts of any kind (line~\ref{alg:line:no_conflict}), then $\bm\chi^t_v = \ONE$.

Now we analyze the runtime of $\func{Accepted}$ by constructing and solving a recurrence relation.
We will use the following lemma to evaluate the expectation of products of relevant random variables.

\begin{lemma}
\label{lem:color_reject_probability}
The probability that any given proposal is rejected $\mathbb P[\bm\chi^t_v=\ZERO]$ is at most $1/\alpha$.
Moreover, this upper bound holds even if we condition on all the values in $\vec C$ except $\vec C^t_v$.
\end{lemma}
\begin{proof}
Let us consider a process for generating the random vector $\vec C^t$ where we first

For $w\in \Gamma(v)$, the probability that proposal $\vec C^t_w$ does not conflict with the current color of $v$ is at least $(1-1/q)$.
Since there are at most $\Delta$ neighbors, the probability
that none of the neighbor proposals conflict with $\vec X^{t-1}_v$ is at least $(1-1/q)^{\Delta}$.
The only other way for a rejection to occur is if the proposal $\vec C^t_v$
conflicts with any of the colors in $\{\vec C^t_w, \vec X^{t-1}_w | w\in\Gamma(v)\}$.
Since this set contains at most $2\Delta$ possible values and there are $q=3\alpha\Delta$ colors, the acceptance probability is at least $1-2/3\alpha$.
%(conditioned on the fact that there was no rejection due to $\vec X^{t-1}_w = \vec C^t_v$).
Thus, overall acceptance probability is at least:
\[
\left(1-\frac{1}{q}\right)^{\Delta}\cdot\left( 1-\frac{2}{3\alpha}\right) \ge \left(1 - \frac{\Delta}{q}\right)\cdot\left( 1-\frac{2}{3\alpha}\right)
= \left(1 - \frac{1}{3\alpha}\right)\cdot\left( 1-\frac{2}{3\alpha}\right) \ge 1 - \frac{1}{\alpha}
\]
Thus, the rejection probability is at most $\frac{1}{\alpha}$.
\end{proof}

\begin{definition}
\label{def:coloring_recursions}
We define $T_t$ to be a random variable indicating the number of recursive calls performed during the execution of $\func{Accepted}(v,t)$
while computing a single $\bm \chi_v^t$.
\end{definition}

\begin{definition}
\label{def:blah}
We define $W^t_{t'}$ to be a random variable indicating the number of calls to \func{Accepted} that are required,
to check whether a color $\vec c_v$ assigned at epoch $t'$ was overwritten at some epoch before $t$.
\end{definition}
Using $\mathcal B(p)$ to denote the Bernoulli random variable with bias $p$, we obtain an expression for $W^t_{t'}$
(using $\le$ to denote stochastic dominance).
\begin{align}
\label{eq:color_overwrite}
W^t_{t'} \le
\Biggl[T_{t'+1} + \mathcal B\left(\frac{1}{\alpha}\right)\cdot T_{t'+2}
+ \mathcal B\left(\frac{1}{\alpha^2}\right)\cdot T_{t'+3} + \cdots
+ \mathcal B\left(\frac{1}{\alpha^{t-t'-2}}\right)\cdot T_{t-1} \Biggr]
\end{align}
The above Equation~\ref{eq:color_overwrite} conservatively assumes that the call to $\func{Accepted}(v, t'+1)$
in line~\ref{alg:line:check_overwritten_recursive} is always invoked (resulting in $T_{t'+1}$ invocations of \func{Accepted}).
However, the next call to $\func{Accepted}(v, t'+2)$ occurs only if the previous one was not accepted,
which happens with probability $\le 1/\alpha$ (Lemma~\ref{lem:color_reject_probability}).
This produces the $\mathcal B(1/\alpha)\cdot T_{t'+2}$ term in the expression.
In general, $\func{Accepted}(v, t'+i)$ is only invoked if the preceding $i-1$ calls to \func{Accepted} all returned $\ZERO$.
This event happens with probability at most $1/\alpha^{i-1}$.

Next, we prove our main lemma that bounds the number of recursive calls to \func{Accepted}.

\begin{lemma}
\label{lem:coloring_recurrence}
Given graph $G$ and $q=3\alpha\Delta$ colors, for $\alpha > 3$, the expected number of recursive calls to the procedure $\func{Accepted}$
while computing a single $\bm\chi^t_v = \func{Accepted}(v,t)$ is $\mathbb E[T_t] = \mathcal{O}\left(e^{4t/3\alpha}\right)$.
\end{lemma}
\begin{proof}
We start by constructing a recurrence for the expected number of calls to $\func{Accepted}$ used by the algorithm.
When checking for conflicts of the form $\vec X^{t-1}_w = \vec c_v$ (for a specific neighbor $w$),
the algorithm iterates through all the epochs $t'$ such that $\vec C^{t'}_w = \vec c_v$
(in reality, only the last occurence matters, but we are looking for an upper bound).
If such a $t'$ is found (which happens with probability $1/q$ independently for each trial), there is a recursive call to $\func{Accepted}(w,t')$,
which in turn results in $\textcolor{RawSienna}{T_{t'}}$ recursive calls to \func{Accepted}.
If we find that $\bm\chi^{t'}_w = \ONE$ (i.e. $w$ was colored to $\vec c_v$ at epoch $t'$),
we will need to proceed to check whether the color was subsequently overwritten,
which requires an additional $\textcolor{OliveGreen}{W^t_{t'}}$ calls to \func{Accepted}.
Thus, the recursive calls from within the first for loop (in Algorithm~\ref{alg:coloring})
result in the first term in Equation~\ref{eq:coloring_recurrence}.

Similarly, during the second for loop in Algorithm~\ref{alg:coloring}, which checks for conflicts of the form $\vec X^{t-1}_v = \vec c_w$,
the algorithm iterates through all the epochs $t'$ such that $\vec C^{t'}_v = \vec c_w$.
As before, such a $t'$ is found with probability $1/q$ independently for each pair $(w, t')$,
and once found it spawns $(\textcolor{RawSienna}{T_{t'}}+\textcolor{OliveGreen}{W^t_{t'}})$ recursive calls to \func{Accepted}, in expectation.

Summing up over all neighbors and epochs, we obtain the following bound:
\begin{align}
\label{eq:coloring_recurrence}
T_{t}
&\le \textcolor{Fuchsia}{\mathlarger\sum\limits_{w\in \Gamma(v)}}\ \
\mathlarger\sum\limits_{t'=1}^{t}\ \  \textcolor{Blue}{\mathbb P\left[\vec C^{t'}_w = \vec c_v\right]}\cdot
\left[ \textcolor{RawSienna}{T_{t'}} + \textcolor{OliveGreen!80!black}{W^t_{t'}} \right]
+ \textcolor{Fuchsia}{\mathlarger\sum\limits_{w\in \Gamma(v)}}\ \
\mathlarger\sum\limits_{t'=1}^{t}\ \  \textcolor{Blue}{\mathbb P\left[\vec C^{t'}_v = \vec c_w\right]}\cdot
\left[ \textcolor{RawSienna}{T_{t'}} + \textcolor{OliveGreen!80!black}{W^t_{t'}} \right] \\
&\le \textcolor{Fuchsia}{2\Delta} \cdot \mathlarger\sum\limits_{t'=1}^{t} \textcolor{Blue}{\mathcal B\left( \frac 1q\right)}\cdot
\Biggl[ \textcolor{RawSienna}{T_{t'}} +
\textcolor{OliveGreen!80!black}{T_{t'+1} + \mathcal B\left(\frac{1}{\alpha}\right)\cdot T_{t'+2} +
\mathcal B\left(\frac{1}{\alpha^2}\right)\cdot T_{t'+3} + \cdots} \notag\\
&\phantom{{}=3} \phantom{{}=3} \phantom{{}=3} \phantom{{}=3} \phantom{{}=3} \phantom{{}=3}
\phantom{{}=3} \phantom{{}=3} \phantom{{}=3} \phantom{{}=3} \phantom{{}=3} \phantom{{}=3}
\textcolor{OliveGreen!80!black}{\cdots + \mathcal B\left(\frac{1}{\alpha^{t-t'-2}}\right)\cdot T_{t-1}} \Biggr]\\
&\le \textcolor{Fuchsia}{2\Delta} \cdot \textcolor{Blue}{\mathcal B\left( \frac{1}{q}\right)} \cdot \Biggl[
\textcolor{RawSienna}{\mathlarger\sum\limits_{t'=1}^{t-1} T_{t'}} +
\textcolor{OliveGreen!80!black}{\mathlarger\sum\limits_{t'=1}^{t-1} T_{t'}\cdot
\left(1 + \mathcal B\left(\frac1\alpha\right) + \mathcal B\left(\frac1{\alpha^2}\right) + \cdots\right)}
\Biggr]
\end{align}
In the last step, we just grouped together all the terms corresponding to the same epoch
(note that we include additional terms since it's an upper bound).
Using Lemma~\ref{lem:color_reject_probability} and the fact that $\textcolor{Blue}{\mathbb P[\vec C^{t'}_x = \vec C^t_y]}$
is independent of all other events, we can write a recurrence for the expected number of probes.
\begin{align}
\mathbb E[T_t] \le \textcolor{Fuchsia}{2\Delta} \cdot \textcolor{Blue}{\frac{1}{3\alpha\Delta}}
\left[
\textcolor{RawSienna}{\mathlarger\sum\limits_{t'=1}^{t-1} T_{t'}} +
\textcolor{OliveGreen!80!black}{\mathlarger\sum\limits_{t'=1}^{t-1} T_{t'}\cdot \left(1 + \frac1\alpha + \frac1{\alpha^2} + \cdots\right)}
\right]
\le \frac{2}{3\alpha}\cdot \mathlarger\sum\limits_{t'=1}^{t-1} T_{t'}\cdot \left[1 + \frac{\alpha}{\alpha-1} \right]
\end{align}
Now, we make the assumption that $\mathbb E[T_{t'}]\le e^{k t/\alpha}$,
and show that this satisfies the expectation recurrence for the desired value of $k$.
First, we sum the geometric series:
\[
\mathlarger\sum\limits_{t'=1}^{t-1} \mathbb E[T_{t'}] = \mathlarger\sum\limits_{t'=1}^{t-1} e^{k t'/\alpha}
< \frac{e^{k t/\alpha}-1}{e^{k/\alpha}-1} < \frac{e^{k t/\alpha}}{e^{k/\alpha}-1}
\]
The expectation recurrence to be satisfied then becomes:
\[
\mathbb E[T_t]\le \frac 2{3\alpha}\cdot \frac{e^{k t/\alpha}}{e^{k/\alpha}-1}\cdot \left[ 1+ \frac{\alpha}{\alpha-1} \right]
= e^{k t/\alpha}\cdot \frac{2(2\alpha-1)}{3\alpha(\alpha-1)(e^{k/\alpha}-1)} = e^{k t/\alpha}\cdot f(\alpha, k)
\]
We notice that for $k=4/3$ and $\alpha > 3$, $f(\alpha) < 1$.
This can easily be verified by checking that $f(\alpha,4/3)$ decreases monotonically with $\alpha$ in the range $\alpha > 3$.
Thus, our recurrence is satisfied for $k=4/3$, and therefore the expected number of calls is $\mathcal O(e^{4t/3\alpha})$.
\end{proof}

\ColoringGrand*
\begin{proof}
Since $q\ge 9\Delta$, we can use Corollary~\ref{cor:modified_mixing_time} to obtain $\tau_{mix}(\epsilon) \le 3(\ln n + \ln 1/\epsilon)$.
Also, since $\alpha > 3$, we can invoke Lemma~\ref{lem:coloring_recurrence} to conclude that
the number of calls to $\func{Accepted}$ is $\mathcal O(n^{4/\alpha}\epsilon^{-4/\alpha})$.
Finally, we note that each call to \func{Accepted}$(v,t)$ potentially reads $\mathcal O(t\Delta)$ color proposals from the public random tape,
while iterating through all $\le \Delta$ neighbors of $v$ in all $t$ epochs.
Since $t \le 3\ln (n/\epsilon)$, this implies that the algorithm uses
$\mathcal O((n/\epsilon)^{4/\alpha}\Delta\log (n/\epsilon))$ time and random bits, which is sub-linear for $\alpha > 4$.
\end{proof}

\section{Open Problems}%
\label{sec:open_problems}
There are many interesting directions to pursue in this area.
Below, we provide a few examples of random objects that may admit local access implementations.

\subparagraph*{Small Description Size}
\label{par:small_description_size}
\begin{itemize}
    \item Provide a local access implementation of degree queries for undirected random graphs, even for $G(n,p)$.
    How about $i^{th}$ neighbor queries?
    \item For simple models such as $G(n,p)$, provide a local access implementation of a \func{Random-Triangle}$(v)$ query,
    that returns a uniformly random triangle containing vertex $v$.
    \item Provide a memory-less local access implementation of basic queries for undirected random graphs.
    \item Given an ordered graph such as a lattice, provide an implementation to locally access a random perfect matching.
    Interesting special cases of this problem include random domino and lozenge tilings.
\end{itemize}

\subparagraph*{Huge Description Size}
\label{par:huge_description_size}
\begin{itemize}
    \item Provide a faster local access implementation for sampling the color of a specified vertex
    in a random $q$-coloring of a bounded degree graph $G$.
    \item Improve the local access implementation for sampling the color of a specified vertex in a random $q$-coloring of $G$,
    by supporting smaller values of $q$ (smaller than $12\Delta$).
    We remark that this problem in particular should be feasible, by simulating a faster mixing Markov chain.
    The important question is whether you can you get down to $q = 2\Delta$?
    \item Given query access to an input graph $G$ and starting vertex $v$,
    provide a local access implementation for sampling the location of a random walk starting at $v$ after $t$ steps.
    This may be feasible in certain restricted classes of graphs.
    \item Given query access to an input DNF formula, provide an implementation to access the truth value of a single variable in a uniformly random satisfying assignment.
\end{itemize}

\bibliography{bob}

\clearpage
\appendix
\label{sec:appendix}

\section{Further Analysis and Extensions of Algorithm~\ref{alg:oblivious-coin-toss}: Sampling \func{Next-Neighbor} without Blocks}
\label{sec:reroll-cont}

\subsection{Performance Guarantee}
This section is devoted to showing the following lemma that bounds the required resources per query of Algorithm~\ref{alg:oblivious-coin-toss}. We note that we only require efficient computation of $\prod_{u \in [a,b]} (1-p_{vu})$ (and not $\sum_{u \in [a,b]} p_{vu}$), and that for the $G(n,p)$ model, the resources required for such computation is asymptotically negligible.

\begin{restatable}{theorem}{res:ER-rand-iterations}\label{thm:ER-rand-iterations}
Each execution of Algorithm~\ref{alg:oblivious-coin-toss} (the \func{Next-Neighbor} query), with high probability,
\begin{itemize}
\item terminates within $\bo(\log n)$ iterations (of its \textup{\textbf{repeat}} loop);
\item computes $\bo(\log^2 n)$ quantities of $\prod_{u \in [a,b]} (1-p_{vu})$;
\item aside from the above computations, uses $\bo(\log^2 n)$ time, $\bo(\log n)$ random $N$-bit words, and $\bo(\log n)$ additional space.
\end{itemize}
\end{restatable}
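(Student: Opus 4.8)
The plan is to prove the iteration bound first and then read off the time, randomness, and space bounds as routine consequences. For the iteration bound I would introduce two families of independent Bernoulli variables that together record every coin the algorithm could consult: $X_{v,u}$ (identified with $X_{u,v}$), the outcome of the \emph{first} coin-toss assigned to the cell $\ADJ[v][u]$, so that a completed run satisfies $\ADJ[v][u]=X_{v,u}$ everywhere; and $Y_{v,u}$, the outcome of a coin $C_{v,u}$ that Algorithm~\ref{alg:oblivious-coin-toss} generates during a \func{next-neighbor}$(v)$ call when $X_{v,u}$ was \emph{already} decided, i.e.\ a coin that should have been skipped. Both have success probability $p_{v,u}$, and $Y_{v,u},Y_{u,v}$ are independent. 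A \emph{failed iteration} inside a call \func{next-neighbor}$(v)$ that returns $u$ is one whose sampled index $u'\in(\LAST[v],u)$ satisfies $\ADJ[v][u']=\ZERO$ and $C_{v,u'}=\ONE$. Using Lemma~\ref{lem:cond-0} to say which cells are decided, and the observation that an adversary can force every $X_{v,u'}$ in the relevant range to be revealed as $\ZERO$ before querying $v$, the number of failed iterations of the call is stochastically dominated by the following quantity: over all $u$ with $X_{v,\LAST[v]+1}=\cdots=X_{v,u}=\ZERO$, the maximum number of indices $u'\in(\LAST[v],u)$ with $Y_{v,u'}=\ONE$.

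To bound that quantity for a fixed pair $(v,\LAST[v])$ I would renumber the relevant cells $1,\ldots,L$ with $L\le n$ and define, for each $i$, a variable $Z_i\in\{0,1,2\}$: $Z_i=0$ if $X_i=\ONE$ (probability $p_i$), $Z_i=1$ if $X_i=Y_i=\ZERO$ (probability $(1-p_i)^2$), and $Z_i=2$ if $X_i=\ZERO,Y_i=\ONE$ (probability $p_i(1-p_i)$). Setting $M_\ell=\prod_{i=1}^{\ell}Z_i$ and $M_0=1$, the number of failed iterations equals $\log\bigl(\max_{0\le\ell\le L}M_\ell\bigr)$: $M_\ell=0$ as soon as a neighbor appears, and otherwise $\log M_\ell$ counts the $Y_i=\ONE$ among the first $\ell$ non-neighbors. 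The key computation is $\ee[Z_i]=2p_i(1-p_i)+(1-p_i)^2=1-p_i^2\le 1$, valid for \emph{every} $p_i\in[0,1]$; since the $Z_i$ are independent, $\ee[M_\ell]=\prod_{i\le\ell}\ee[Z_i]\le 1$. Markov's inequality gives $\pp[M_\ell>2^r]<2^{-r}$ for every integer $r\ge 0$, and a union bound over the $\le n$ prefixes yields $\pp[\max_\ell M_\ell>2^r]<n\cdot 2^{-r}$.

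Then I would union-bound over the at most $\Theta(n^2)$ distinct pairs $(v,\LAST[v])$ that can ever arise — each vertex has at most $n$ values of $\LAST[v]$, and each such pair is consumed by at most one \func{next-neighbor} call — choosing $r=\Theta(\log n)$ large enough that $n^2\cdot n\cdot 2^{-r}\le n^{-c}$; this shows every \func{next-neighbor} query incurs $O(\log n)$ failed iterations, hence $O(\log n)$ loop iterations, with high probability. Conditioned on this event, the rest is mechanical: each iteration samples $F\sim\distr{F}(v,u,w_v)$, which by the CDF-sampling primitive (the "Sampling via a CDF" discussion in Section~\ref{sec:model}) costs $O(\log n)$ evaluations of the CDF $1-\prod_{i=a+1}^{f}(1-p_{v,i})$, i.e.\ $O(\log n)$ evaluations of the product quantity, plus $O(\log n)$ further time for the binary search, for locating $w_v$ in the ordered set $P_v$, and for the $O(1)$ insertions into $P_v$ and $P_u$, and one random $N$-bit word; summing over $O(\log n)$ iterations gives $O(\log^2 n)$ product computations, $O(\log^2 n)$ time, $O(\log n)$ random $N$-bit words, and $O(\log n)$ additional space, establishing the theorem.

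The main obstacle is the second paragraph: packaging $X_i$ together with the wasted coin $Y_i$ into one bounded variable $Z_i$ whose running product is a mean-$\le 1$ (super)martingale is exactly what makes a clean Markov-plus-union-bound argument work for \emph{arbitrary} edge probabilities — the identity $\ee[Z_i]=1-p_i^2$ is what fails if one instead tries to bound iterations per query directly, since a short block of cells may carry many neighbors. The other delicate point, addressed by the stochastic-dominance reduction to the "all $X$ revealed" worst case together with an explicit adversary realizing it, is ensuring the bound is genuinely worst-case over adaptive query sequences rather than merely in expectation for a passive querier.
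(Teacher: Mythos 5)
Your proposal is correct and follows essentially the same route as the paper's proof: the same $X_{v,u}$/$Y_{v,u}$ decomposition of the coin-tosses, the same worst-case reduction to all $X$'s revealed, the same $Z_i$-product $M_\ell$ with $\mathbb{E}[Z_i]=1-p_i^2\le 1$, Markov plus a union bound over prefixes and then over the $\Theta(n^2)$ pairs $(v,\LAST[v])$. The per-iteration accounting of time, random words, and space that you spell out is exactly what the paper dismisses as following trivially, so nothing is missing.
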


\begin{proof}
We focus on the number of iterations as the remaining results follow trivially. This proof is rather involved and thus is divided into several steps.

\paragraph*{Specifying random choices} The performance of the algorithm depends on not only the random variables $X_{vu}$'s, but also the unused coins $C_{vu}$'s. We characterize the two collections of Bernoulli variables $\{X_{vu}\}$ and $\{Y_{vu}\}$ that cover all random choices made by Algorithm~\ref{alg:oblivious-coin-toss} as follows.

\begin{itemize}
\item Each $X_{vu}$ (same as $X_{uv}$) represents the result for the \emph{first} coin-toss corresponding to cells $\ADJ[v][u]$ and $\ADJ[u][v]$, which is the coin-toss obtained when $X_{vu}$ becomes decided: either $C_{vu}$ during a \func{Next-Neighbor}$(v)$ call when $\ADJ[v][u] = \PHI$, or $C_{vu}$ during a \func{Next-Neighbor}$(u)$ call when $\ADJ[u][v] = \PHI$, whichever occurs first.
This description of $X_{vu}$ respects our invariant that, if the generation process is executed to completion, we will have $\ADJ[v][u]=X_{vu}$ in all entries.
\item Each $Y_{vu}$ represents the result for the \emph{second} coin-toss corresponding to cell $\ADJ[v][u]$, which is the coin-toss $C_{vu}$ obtained during a \func{Next-Neighbor}$(v)$ call when $X_{vu}$ is already decided. In other words, $\{Y_{vu}\}$'s are the coin-tosses that should have been skipped but still performed in Algorithm~\ref{alg:oblivious-coin-toss} (if they have indeed been generated). Unlike the previous case, $Y_{vu}$ and $Y_{uv}$ are two independent random variables: they may be generated during a \func{Next-Neighbor}$(v)$ call and a \func{Next-Neighbor}$(u)$ call, respectively.
\end{itemize}
As mentioned earlier, we allow any sequence of probabilities $p_{vu}$ in our proof. The success probabilities of these indicators are therefore given by $\mathbb P[X_{vu}=\ONE] = \mathbb P[Y_{vu}=\ONE] = p_{vu}$.

%\anak{Consider improving paragraph below.}
\paragraph*{Characterizing iterations}
Suppose that we compute \func{Next-Neighbor}$(v)$ and obtain an answer $u$. Then $X_{v,\LAST[v]+1} = \cdots = X_{v, u-1} = \ZERO$ as none of $u' \in (\LAST[v], u)$ is a neighbor of $v$. The vertices considered in the loop of Algorithm~\ref{alg:oblivious-coin-toss} that do not result in the answer $u$, are $u' \in (\LAST[v], u)$ satisfying $\ADJ[v][u'] = \ZERO$ and $Y_{v,u'} = \ONE$; we call the iteration corresponding to such a $u'$ a \emph{failed iteration}. Observe that if $X_{v,u'} = \ZERO$ but is undecided ($\ADJ[v][u'] = \PHI$), then the iteration is not failed, even if $Y_{v,u'} = \ONE$ (in which case, $X_{v,u'}$ takes the value of $C_{v,u'}$ while $Y_{v,u'}$ is never used). Thus we assume the worst-case scenario where all $X_{v,u'}$ are revealed: $\ADJ[v][u']=X_{v,u'}=\ZERO$ for all $u'\in(\LAST[v], u)$. The number of failed iterations in this case stochastically dominates those in all other cases.\footnote{There exists an adversary who can enforce this worst case. Namely, an adversary that first makes \func{Next-Neighbor} queries to learn all neighbors of every vertex except for $v$, thereby filling out the whole $\ADJ$ in the process. The claimed worst case then occurs as this adversary now repeatedly makes \func{Next-Neighbor} queries on $v$. In particular, a committee of $n$ adversaries, each of which is tasked to perform this series of calls corresponding to each $v$, can always expose this worst case.}

Then, the upper bound on the number of failed iterations of a call \func{Next-Neighbor}$(v)$ is given by the maximum number of cells $Y_{v, u'} = 1$ of $u' \in(\LAST[v], u)$, over any $u \in(\LAST[v], n]$ satisfying $X_{v,\LAST[v]+1} = \cdots = X_{vu} = \ZERO$. Informally, we are asking ''of all consecutive cells of $\ZERO$'s in a single row of $\{X_{vu}\}$-table, what is the largest number of cells of $\ONE$'s in the corresponding cells of $\{Y_{vu}\}$-table?''

\paragraph*{Bounding the number of iterations required for a fixed pair $(v, \LAST[v])$}
We now proceed to bounding the number of iterations required over a sampled pair of $\{X_{vu}\}$ and $\{Y_{vu}\}$, from any probability distribution. For simplicity we renumber our indices and drop the index $(v,\LAST[v])$ as follows. Let $p_1, \ldots, p_L \in [0, 1]$ denote the probabilities corresponding to the cells $\ADJ[v][\LAST[v]+1 \ldots n]$ (where $L = n-\LAST[v]$), then let $X_1, \ldots, X_L$ and $Y_1, \ldots, Y_L$ be the random variables corresponding to the same cells on $\ADJ$.

For $i=1, \ldots, L$, define the random variable $Z_i$ in terms of $X_i$ and $Y_i$ so that
\begin{itemize}
\item $Z_i = 2$ if $X_i = 0$ and $Y_i = 1$, which occurs with probability $p_i(1-p_i)$. \\
This represents the event where $i$ is not a neighbor, and the iteration fails.
\item $Z_i = 1$ if $X_i = Y_i = 0$, which occurs with probability $(1-p_i)^2$.\\
 This represents the event where $i$ is not a neighbor, and the iteration does not fail.
\item $Z_i = 0$ if $X_i = 1$, which occurs with probability $p_i$. \\
This represents the event where $i$ is a neighbor.
\end{itemize}

For $\ell \in [L]$, define the random variable $M_\ell := \prod_{i=1}^\ell Z_i$, and $M_0 = 1$ for convenience. If $X_i = 1$ for some $i \in [1, \ell]$, then $Z_i = 0$ and $M_\ell = 0$. Otherwise, $\log M_\ell$ counts the number of indices $i \in [\ell]$ with $Y_i = 1$, the number of failed iterations. Therefore, $\log(\max_{\ell \in \{0, \ldots, L\}} M_\ell)$ gives the number of failed iterations this \func{Next-Neighbor}$(v)$ call.

To bound $M_\ell$, observe that for any $\ell\in[L]$, $\mathbb{E}[Z_\ell] = 2p_\ell(1-p_\ell) + (1-p_\ell)^2 = 1 - p_\ell^2 \leq 1$ regardless of the probability $p_\ell \in [0, 1]$. Then, $\mathbb{E}[M_\ell] = \mathbb{E}[\prod_{i=1}^\ell Z_i] = \prod_{i=1}^\ell \mathbb{E}[Z_i] \leq 1$ because $Z_\ell$'s are all independent. By Markov's inequality, for any (integer) $r \geq 0$, $\Pr[\log M_\ell > r] = \Pr[M_\ell > 2^r] < 2^{-r}$. By the union bound, the probability that more than $r$ failed iterations are encountered is $\Pr[\log(\max_{\ell \in \{0, \ldots, L\}} M_\ell) > r] < L\cdot 2^{-r} \leq n\cdot 2^{-r}$.

\paragraph*{Establishing the overall performance guarantee}
So far we have deduced that, for each pair of a vertex $v$ and its $\LAST[v]$, the probability that the call \func{Next-Neighbor}$(v)$ encounters more than $r$ failed iterations is less that $n \cdot 2^{-r}$, which is at most $n^{-c-2}$ for any desired constant $c$ by choosing a sufficiently large $r = \Theta(\log n)$. As Algorithm~\ref{alg:oblivious-coin-toss} may need to support up to $\Theta(n^2)$ \func{Next-Neighbor} calls, one corresponding to each pair $(v, \LAST[v])$, the probability that it ever encounters more than $O(\log n)$  failed iterations to answer a single \func{Next-Neighbor} query is at most $n^{-c}$. That is, with high probability, $O(\log n)$ iterations are required per \func{Next-Neighbor} call, which concludes the proof of Theorem~\ref{thm:ER-rand-iterations}.
%For the $G(n,p)$ model, each iteration requires $O(\log n)$ time and random bits (for sampling), so this bound on the number of iterations implies that each \func{Next-Neighbor} call requires $O(\log^2 n)$ time, $O(1)$ additional space for the maintained data structure, and $O(\log^2 n)$ random bits with high probability. The former two offer an improvement over Algorithm~\ref{alg:exact-coin-toss}, but note also that the bounds of Algorithm~\ref{alg:exact-coin-toss} holds deterministically.
\end{proof}

\subsection{Supporting \func{Vertex-Pair} Queries} \label{sec:ER-pair}

We extend our implementation (Algorithm~\ref{alg:oblivious-coin-toss}) to support the \func{Vertex-Pair} queries: given a pair of vertices $(u, v)$, decide whether there exists an edge $\{u, v\}$ in the generated graph. To answer a \func{Vertex-Pair} query, we must first check whether the value $X_{uv}$ for $\{u, v\}$ has already been assigned, in which case we answer accordingly. Otherwise, we must make a coin-flip with the corresponding bias $p_{uv}$ to assign $X_{uv}$, deciding whether $\{u, v\}$ exists in the generated graph. If we maintained the full $\ADJ$, we would have been able to simply set $\ADJ[u][v]$ and $\ADJ[v][u]$ to this new value. However, our more efficient Algorithm~\ref{alg:oblivious-coin-toss} that represents $\ADJ$ compactly via $\LAST$ and $P_v$'s cannot record arbitrary modifications to $\ADJ$.

Observe that if we were to apply the trivial implementation of \func{Vertex-Pair}, then by Lemma~\ref{lem:cond-0}, $\LAST$ and $P_v$'s will only fail capture the state $\ADJ[v][u] = \ZERO$ when $u > \LAST[v]$ and $v > \LAST[u]$. Fortunately, unlike \func{Next-Neighbor} queries, a \func{Vertex-Pair} query can only set one cell $\ADJ[v][u]$ to $\ZERO$ per query, and thus we may afford to store these changes explicitly.\footnote{The disadvantage of this approach is that the implementation may allocate more than $\Theta(m)$ space over the entire graph generation process, if \func{Vertex-Pair} queries generate many of these $\ZERO$'s.} To this end, we define the set $Q = \{\{u,v\}: X_{uv}\textrm{ is assigned to }\ZERO \textrm{ during a \func{Vertex-Pair} query}\}$, maintained as a hash table. Updating $Q$ during \func{Vertex-Pair} queries is trivial: we simply add $\{u,v\}$ to $Q$ before we finish processing the query if we set $\ADJ[u][v]=\ZERO$. Conversely, we need to add $u$ to $P_v$ and add $v$ to $P_u$ if the \func{Vertex-Pair} query sets $\ADJ[u][v]=\ONE$ as usual, yielding the following observation. It is straightforward to verify that each \func{Vertex-Pair} query requires $O(\log n)$ time, $O(1)$ random $N$-bit word, and $O(1)$ additional space per query.

\begin{restatable}{lemma}{cond-q}\label{lem:cond-0-q}
The data structures $\LAST$, $P_v$'s and $Q$ together provide a succinct representation of $\ADJ$ when \func{Next-Neighbor} queries (modified Algorithm~\ref{alg:oblivious-coin-toss}) and \func{Vertex-Pair} queries are allowed. In particular, $\ADJ[v][u]=\ONE$ if and only if $u \in P_v$. Otherwise, $\ADJ[v][u]=\ZERO$ if $u <\LAST[v]$, $v < \LAST[u]$, or $\{v,u\} \in Q$. In all remaining cases, $\ADJ[v][u]=\PHI$.
\end{restatable}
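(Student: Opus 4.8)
The plan is to extend the proof of Lemma~\ref{lem:cond-0} — which already establishes the claim when only \func{next-neighbor} queries are made — by folding in the single new way a cell of $\ADJ$ can become decided, namely a \func{vertex-pair} query. Throughout I will use three monotonicity facts that hold for both modified algorithms: once $\ADJ[v][u]$ leaves the state $\PHI$ it is never altered; each $\LAST[v]$ is non-decreasing over the execution; and elements are only inserted into $P_v$ and $Q$, never deleted. These make the claimed characterization self-consistent — as soon as one of the listed conditions fires for a pair $\{v,u\}$, it remains true forever, matching the immutability of the decided cell.

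First I would handle the $\ONE$ case, showing $\ADJ[v][u]=\ONE$ iff $u\in P_v$. For the forward direction, every event that sets a cell to $\ONE$ — a \func{next-neighbor}$(v)$ call returning $u$ (in modified Algorithm~\ref{alg:oblivious-coin-toss}), a symmetric \func{next-neighbor}$(u)$ call returning $v$, or a \func{vertex-pair} query on $\{v,u\}$ assigning $X_{v,u}=\ONE$ — also inserts $u$ into $P_v$ and $v$ into $P_u$; for the converse, these are the only insertions into $P_v$, and by monotonicity they persist. Next, for pairs with $u\notin P_v$, I would prove $\ADJ[v][u]=\ZERO$ iff $u<\LAST[v]$, or $v<\LAST[u]$, or $\{v,u\}\in Q$. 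The ($\Leftarrow$) direction: if $u<\LAST[v]$, whichever call first advanced $\LAST[v]$ past $u$ could do so only by returning some value $\ge\LAST[v]>u$, thereby deciding every skipped cell including $\ADJ[v][u]$, which must be $\ZERO$ since $u\notin P_v$; the case $v<\LAST[u]$ is symmetric; and $\{v,u\}\in Q$ gives $\ADJ[v][u]=\ZERO$ by the definition of $Q$. The ($\Rightarrow$) direction: a cell becomes $\ZERO$ only (a) as one of the skipped cells of a \func{next-neighbor}$(v)$ call returning some $u'>u$, which sets $\LAST[v]=u'>u$; (b) symmetrically during a \func{next-neighbor}$(u)$ call, which sets $\LAST[u]>v$; or (c) during a \func{vertex-pair} query assigning $X_{v,u}=\ZERO$, which inserts $\{v,u\}$ into $Q$. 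Each possibility makes one of the three conditions true, and it persists. The $\PHI$ case is then immediate: if $u\notin P_v$ and none of the three $\ZERO$-conditions hold, the entry is neither $\ONE$ nor $\ZERO$.

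I expect the main obstacle to be verifying the case enumeration of the ($\Rightarrow$) direction once both query types can interleave — in particular, confirming that modified \func{next-neighbor} never re-decides (or skips past) a cell already fixed by a \func{vertex-pair} query: a \func{vertex-pair}$(v,u)=\ONE$ puts $u$ into $P_v$, so $w_v=\min\{(P_v\cap(\LAST[v],n])\cup\{n+1\}\}$ already bounds the sampling range of \func{next-neighbor}$(v)$ at or before $u$, while a \func{vertex-pair}$(v,u)=\ZERO$ is caught by the extra termination test $\{v,u\}\notin Q$ added to Algorithm~\ref{alg:oblivious-coin-toss}. Dually, one must check that a \func{vertex-pair} query never writes a $\ZERO$ into a region of $\ADJ$ that $\LAST$ alone would eventually cover, so that $Q$ records exactly — and only — the $\ZERO$'s invisible to $\LAST$ and the $P_v$'s. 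Once these interactions are checked, the argument is a routine adaptation of the proofs of Lemma~\ref{lem:cond-0} and Lemma~\ref{lem:cond-0-fill}.
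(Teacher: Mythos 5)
Your proof is correct and follows essentially the same route the paper intends: the paper presents this lemma as an observation immediately following the construction (a routine extension of Lemma~\ref{lem:cond-0}), and your enumeration of the three events that decide a cell, together with the monotonicity of $\LAST$, the $P_v$'s, and $Q$, is exactly that argument spelled out, including the two genuine interaction points (the $w_v$ bound from $P_v$ and the extra $\{v,u\}\notin Q$ termination test). One minor remark: your final ``dual'' check is not actually needed --- the $\ZERO$-characterization is a disjunction, so it is harmless that a pair recorded in $Q$ may later also become covered by $\LAST$ (exclusivity only matters for the counting structure $Q'_v$ of the deterministic variant).
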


We now explain other necessary changes to Algorithm~\ref{alg:oblivious-coin-toss}. In the implementation of \func{Next-Neighbor}, an iteration is not failed when the chosen $X_{vu}$ is still undecided: $\ADJ[v][u]$ must still be $\phi$. Since $X_{vu}$ may also be assigned to $\ZERO$ via a \func{Vertex-Pair}$(v,u)$ query, we must also consider an iteration where $\{v,u\} \in Q$ failed. That is, we now require one additional condition $\{v,u\} \notin Q$ for termination (which only takes $O(1)$ time to verify per iteration). As for the analysis, aside from handling the fact that $X_{vu}$ may also become decided during a \func{Vertex-Pair} call, and allowing the states of the algorithm to support \func{Vertex-Pair} queries, all of the remaining analysis for correctness and performance guarantee still holds.

Therefore, we have established that our augmentation to Algorithm~\ref{alg:oblivious-coin-toss} still maintains all of its (asymptotic) performance guarantees for \func{Next-Neighbor} queries, and supports \func{Vertex-Pair} queries with complexities as specified above, concluding the following corollary.
We remark that, as we do not aim to support \func{Random-Neighbor} queries, this simple algorithm here provides significant improvement over the performance of \func{Random-Neighbor} queries (given in Corollary~\ref{cor:random_neighbor_time}).

\begin{restatable}{corollary}{res:oblivious-thm}\label{cor:oblivious-alg}
Algorithm~\ref{alg:oblivious-coin-toss} can be modified to allow an implementation of \func{Vertex-Pair} query as explained above, such that the resource usages per query still asymptotically follow those of Theorem~\ref{thm:ER-rand-iterations}.
\end{restatable}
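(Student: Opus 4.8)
The plan is to verify the claim in two parts: first, that the augmented data structures $\LAST$, the $P_v$'s, and the new hash set $Q$ still faithfully encode the adjacency matrix $\ADJ$ (this is exactly Lemma~\ref{lem:cond-0-q}); and second, that adding \func{vertex-pair} support disturbs neither the sampling correctness nor the iteration bound of the \func{next-neighbor} subroutine of Algorithm~\ref{alg:oblivious-coin-toss} beyond the stated $\bo(\log n)$-time, $\bo(1)$-word, $\bo(1)$-space per-query overhead.

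First I would establish Lemma~\ref{lem:cond-0-q} by bootstrapping off Lemma~\ref{lem:cond-0}. The only new mechanism for deciding a cell $\ADJ[v][u]$ is a \func{vertex-pair}$(u,v)$ call: if it assigns $\ONE$ we insert $u$ into $P_v$ and $v$ into $P_u$, so the invariant ``$\ADJ[v][u]=\ONE$ iff $u\in P_v$'' is preserved exactly as before; if it assigns $\ZERO$ we record $\{u,v\}$ in $Q$. Since each \func{vertex-pair} query decides exactly one cell, $Q$ can be maintained explicitly at $\bo(1)$ additional space per query, and a case analysis mirroring the proof of Lemma~\ref{lem:cond-0} gives $\ADJ[v][u]=\ZERO$ precisely when $u<\LAST[v]$, $v<\LAST[u]$, or $\{u,v\}\in Q$, with $\PHI$ in all remaining cases.

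Next, for the \func{next-neighbor} subroutine I would re-examine the ``uncovering coins'' viewpoint of Sections~\ref{sec:nn-ds}--\ref{sec:nn-correctness}: a cell is now considered decided if it was decided either by a prior \func{next-neighbor} call (detected via $\LAST$) or by a \func{vertex-pair} call (detected via $Q$), so the \textbf{repeat}-loop termination test gains the extra condition $\{v,u\}\notin Q$, costing $\bo(1)$ per iteration by hashing. Sampling correctness is untouched, since we still never implicitly alter the set of undecided cells. For the iteration bound I would rerun the argument of Theorem~\ref{thm:ER-rand-iterations}: reintroduce the ``first coin'' variables $X_{v,u}$ (which may now first be exposed inside a \func{vertex-pair} call, but still with bias $p_{v,u}$) and the ``wasted second coin'' variables $Y_{v,u}$; a failed iteration at $u'$ still requires both $\ADJ[v][u']=\ZERO$ and $Y_{v,u'}=\ONE$, so the scenario in which all cells in the scanned range are revealed as $\ZERO$ still stochastically dominates, and the same product $M_\ell=\prod_{i\leq\ell}Z_i$ with $\ee[Z_i]=1-p_i^2\leq 1$, hence $\ee[M_\ell]\leq 1$, followed by Markov and a union bound over the $\bo(n^2)$ pairs $(v,\LAST[v])$, yields $\bo(\log n)$ iterations with high probability. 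Finally I would tally a single \func{vertex-pair}$(u,v)$ query: lookups in $\LAST$, in the ordered set $P_v$, and in $Q$, at most one biased coin toss, and an $\bo(\log n)$ ordered-set insertion into $P_v$ or $P_u$ if $\ONE$ is assigned, giving $\bo(\log n)$ time, $\bo(1)$ random $N$-bit words, and $\bo(1)$ additional space---within the bounds of Theorem~\ref{thm:ER-rand-iterations}.

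The step I expect to be the main obstacle is the stochastic-domination claim inside the performance analysis: one must rule out that an adversary interleaving \func{vertex-pair} and \func{next-neighbor} queries can produce more failed iterations than the ``reveal everything first'' adversary used for Theorem~\ref{thm:ER-rand-iterations}. The point to pin down is that a cell decided to $\ZERO$ by a \func{vertex-pair} query contributes to a failed iteration under exactly the condition $Y_{v,u}=\ONE$, identical to a cell decided by \func{next-neighbor}, whereas a still-$\PHI$ cell can only help---its coin becomes the genuine $X_{v,u}$ rather than a wasted $Y_{v,u}$---so revealing every scanned cell as $\ZERO$ maximizes the failed-iteration count, and neither the biases $p_{v,u}$ nor the law of each $Z_i$ depends on how or when a cell was decided.
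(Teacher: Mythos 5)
Your proposal matches the paper's own argument essentially step for step: the same hash set $Q$ of zero-assignments made by \func{vertex-pair} queries, the same succinct-representation lemma extending Lemma~\ref{lem:cond-0}, the same extra termination test $\{v,u\}\notin Q$ in the \textbf{repeat} loop, and the same observation that the $X/Y$-coin analysis and worst-case (all cells revealed as $\ZERO$) stochastic-domination argument of Theorem~\ref{thm:ER-rand-iterations} carry over, with per-query costs of $\bo(\log n)$ time, $\bo(1)$ random words and $\bo(1)$ space for a \func{vertex-pair} call. If anything, your explicit justification of why a \func{vertex-pair}-decided $\ZERO$ cell behaves identically to a \func{next-neighbor}-decided one in the domination argument is slightly more detailed than the paper, which simply asserts that the remaining analysis still holds.
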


\clearpage
\section{Omitted Details from Section~\ref{sec:undirected}: Undirected Random Graph Implementations}
\label{sec:undirected_omitted}

\subsection{Removing the Perfect-Precision Arithmetic Assumption}
\label{sec:remove-perfect}

In this section we remove the prefect-precision arithmetic assumption. Instead, we only assume that it is possible to compute $\prod_{u=a}^b (1-p_{vu})$ and $\sum_{u=a}^b p_{vu}$ to $N$-bit precision, as well as drawing a random $N$-bit word, using polylogarithmic resources. Here we will focus on proving that the family of the random graph we generate via our procedures is statistically close to that of the desired distribution. The main technicality of this lemma arises from the fact that, not only the implementation is randomized, but the agent interacting with the implementation may choose their queries arbitrarily (or adversarially): our proof must handle any sequence of random choices the implementation makes, and any sequence of queries the agent may make.

Observe that the distribution of the graphs constructed by our implementation is governed entirely by the samples $u$ drawn from $\mathsf{F}(v,a,b)$ in Algorithm~\ref{alg:fill}. By our assumption, the CDF of any $\mathsf{F}(v,a,b)$ can be efficiently computed from $\prod_{u=a}^{u'} (1-p_{vu})$, and thus sampling with $\frac{1}{\poly(n)}$ error in the $L_1$-distance requires a random $N$-bit word and a binary-search in $\bo(\log (b-a+1)) = \bo(\log n)$ iterations. Using this crucial fact, we prove our lemma that removes the perfect-precision arithmetic assumption.

%Note that throughout the proof, we refer to the pair $\mathsf{F}(v,a,b), \mathsf{F}'(v,a,b)$ generically 

\begin{restatable}{lemma}{transition}\label{lemma:transition}
If Algorithm~\ref{alg:fill} (the \func{Fill} operation) is repeatedly invoked to construct a graph $G$ by drawing the value $u$ for at most $S$ times in total, each of which comes from some distribution $\mathsf{F}'(v,a,b)$ that is $\epsilon$-close in $L_1$-distance to the correct distribution $\mathsf{F}(v,a,b)$ that perfectly generates the desired distribution $\mathsf{G}$ over all graphs, then the distribution $\mathsf{G}'$ of the generated graph $G$ is $(\epsilon S)$-close to $\mathsf{G}$ in the $L_1$-distance.
\end{restatable}
\begin{proof}
\label{proof:transition}
For simplicity, assume that the algorithm generates the graph to completion according to a sequence of up to $n^2$ distinct blocks $\mathcal{B} = \langle B^{(u_1)}_{v_1}, B^{(u_2)}_{v_2}, \ldots \rangle$, where each $B^{(u_i)}_{v_i}$ specifies the \unfilled~block in which any query instigates a \func{Fill} function call. Define an \emph{internal state} of our implementation as the triplet $s = (k, u, \ADJ)$, representing that the algorithm is currently processing the $k^\textrm{th}$ \func{Fill}, in the iteration (the \textbf{repeat} loop of Algorithm~\ref{alg:fill}) with value $u$, and have generated $\ADJ$ so far. Let $t_{\ADJ}$ denote the \emph{terminal state} after processing all queries and having generated the graph $G_\ADJ$ represented by $\ADJ$. We note that $\ADJ$ is used here in the analysis but not explicitly maintained; further, it reflects the changes in every iteration: as $u$ is updated during each iteration of \func{Fill}, the cells $\ADJ[v][u'] = \PHI$ for $u' < u$ (within that block) that has been skipped are also updated to $\ZERO$.

Let $\mathcal{S}$ denote the set of all (internal and terminal) states. For each state $s$, the implementation samples $u$ from the corresponding $\mathsf{F}'(v,a,b)$ where $\|\mathsf{F}(v,a,b)-\mathsf{F}'(v,a,b)\|_1 \leq \epsilon = \frac{1}{\poly(n)}$, then moves to a new state according to $u$. In other words, there is an induced pair of collection of distributions over the states: $(\mathcal{T},\mathcal{T}')$ where $\mathcal{T}=\{\mathsf{T}_s\}_{s\in\mathcal{S}}, \mathcal{T}'=\{\mathsf{T}'_s\}_{s\in\mathcal{S}}$, such that $\mathsf{T}_s(s')$ and $\mathsf{T}'_s(s')$ denote the probability that the algorithm advances from $s$ to $s'$ by using a sample from the correct $\mathsf{F}(v,a,b)$ and from the approximated $\mathsf{F}'(v,a,b)$, respectively. Consequently, $\|\mathsf{T}_s-\mathsf{T}'_s\|_1 \leq \epsilon$ for every $s\in\mathcal{S}$.

The implementation begins with the initial (internal) state $s_0 = (1, 0, \ADJ_\PHI)$ where all cells of $\ADJ_\PHI$ are $\PHI$'s, goes through at most $S=O(n^3)$ other states (as there are up to $n^2$ values of $k$ and $O(n)$ values of $u$), and reach some terminal state $t_\ADJ$, generating the entire graph in the process. Let $\pi = \langle s^\pi_0 = s_0, s^\pi_1, \ldots, s^\pi_{\ell(\pi)} = t_\ADJ \rangle$ for some $\ADJ$ denote a sequence (``path'') of up to $S+1$ states the algorithm proceeds through, where $\ell(\pi)$ denote the number of transitions it undergoes. For simplicity, let $T_{t_\ADJ}(t_\ADJ)=1$, and $T_{t_\ADJ}(s)=0$ for all state $s \neq t_\ADJ$, so that the terminal state can be repeated and we may assume $\ell(\pi) = S$ for every $\pi$. Then, for the correct transition probabilities described as $\mathcal{T}$, each $\pi$ occurs with probability $q(\pi) = \prod_{i=1}^{S} \mathsf{T}_{s_{i-1}}(s_i)$, and thus $\mathsf{G}(G_\ADJ) = \sum_{\pi:s^\pi_{S} = t_\ADJ} q(\pi)$.

Let $\mathcal{T}^{\min}=\{\mathsf{T}^{\min}_s\}_{s\in\mathcal{S}}$ where $\mathsf{T}^{\min}_s(s') = \min\{\mathsf{T}_s(s'),\mathsf{T}'_s(s')\}$, and note that each $\mathsf{T}^{\min}_s$ is not necessarily a probability distribution. Then, $\sum_{s'} \mathsf{T}^{\min}_s(s') = 1 - \|\mathsf{T}_s-\mathsf{T}'_s\|_1 \geq 1-\epsilon$. Define $q', q^{\min}, \mathsf{G}'(G_\ADJ),\mathsf{G}^{\min}(G_\ADJ)$ analogously, and observe that $q^{\min}(\pi) \leq \min\{q(\pi), q'(\pi)\}$ for every $\pi$, so $\mathsf{G}^{\min}(G_\ADJ) \leq \min\{\mathsf{G}(G_\ADJ),\mathsf{G}'(G_\ADJ)\}$ for every $G_\ADJ$ as well. In other words, $q^{\min}(\pi)$ lower bounds the probability that the algorithm, drawing samples from the correct distributions or the approximated distributions, proceeds through states of $\pi$; consequently, $\mathsf{G}^{\min}(G_\ADJ)$ lower bounds the probability that the algorithm generates the graph $G_\ADJ$.

Next, consider the probability that the algorithm proceeds through the prefix $\pi_i = \langle s^\pi_0, \ldots, s^\pi_{i}\rangle$ of $\pi$. Observe that for $i \geq 1$,
\begin{align*}\sum_{\pi} q^{\min}(\pi_i) &=\sum_{\pi} q^{\min}(\pi_{i-1})\cdot \mathsf{T}^{\min}_{s^\pi_{i-1}}(s^\pi_{i}) 
= \sum_{s,s'} \sum_{\pi:s^\pi_{i-1} = s,s^\pi_{i} = s'} q^{\min}(\pi_{i-1})\cdot \mathsf{T}^{\min}_{s}(s') \\
&= \sum_{s'} \mathsf{T}^{\min}_s(s')\cdot\sum_{s} \sum_{\pi:s^\pi_{i-1} = s} q^{\min}(\pi_{i-1})
\geq (1-\epsilon) \sum_{\pi} q^{\min}(\pi_{i-1}).\end{align*}
Roughly speaking, at least a factor of $1-\epsilon$ of the ``agreement'' between the distributions over states according to $\mathcal{T}$ and $\mathcal{T}'$ is necessarily conserved after a single sampling process. As $\sum_{\pi} q^{\min}(\pi_0)=1$ because the algorithm begins with $s_0 = (1, 0, \ADJ_\PHI)$, by an inductive argument we have $\sum_{\pi} q^{\min}(\pi)=\sum_{\pi} q^{\min}(\pi_S) \geq (1-\epsilon)^S \geq 1-\epsilon S$. Hence, $\sum_{G_\ADJ} \min\{\mathsf{G}(G_\ADJ),\mathsf{G}'(G_\ADJ)\} \geq \sum_{G_\ADJ} \mathsf{G}^{\min}(G_\ADJ) \geq 1-\epsilon S$, implying that $\|\mathsf{G}-\mathsf{G}'\|_1 \leq \epsilon S$, as desired. In particular,  by substituting $\epsilon = \frac{1}{\poly(n)}$ and $S = O(n^3)$, we have shown that Algorithm~\ref{alg:fill} only creates a $\frac{1}{\poly(n)}$ error in the $L_1$-distance. 
\end{proof}

We remark that \func{Random-Neighbor} queries also require that the returned edge is drawn from a distribution that is close to a uniform one, but this requirement applies only \emph{per query} rather then over the entire execution of the generator. Hence, the error due to the selection of a random neighbor may be handled separately from the error for generating the random graph; its guarantee follows straightforwardly from a similar analysis.

\subsection{Bounding Block Sizes}\label{sec:bounding_block_sizes}
\MaxBlockSize*
\begin{proof}
Fix a block $B_v^{(i)}$, and consider the Bernoulli RVs $\left\{ X_{vu}\right\}_{u\in B_v^{(i)}}$.
The expected number of neighbors in this block is
$ \textstyle\mathbb{E} \left[ |\Gamma^{(i)}(v)| \right] =\mathbb{E} \left[ \sum_{u\in B_v^{(i)}} X_{vu} \right] < L+1$.
Via the Chernoff bound,
\[
\mathbb{P} \left[ |\Gamma^{(i)}(v)|> (1+3c\log n)\cdot L \right]
\le e^{-\frac{3c\log n\cdot L}{3}} = n^{-\Theta(c)}
\]
for any constant $c > 0$.
\end{proof}

\EmptyBlock*
\begin{proof}
For $i < |B_v|$, since $ \mathbb{E} \left[ |\Gamma^{(i)}(v)| \right] =\mathbb{E} \left[ \sum_{u\in B_v^{(i)}} X_{vu} \right] > L-1$, we bound the probability that $B_v^{(i)}$ is empty:
\[
\mathbb P[B_v^{(i)}\textrm{ is empty}] = \prod_{u\in B_v^{(i)}} (1-p_{vu}) \leq e^{-\sum_{u\in B_v^{(i)}} p_{vu}} \leq e^{1-L}=c
\]
for any arbitrary small constant $c$ given sufficienty large constant $L$. Let $T_{i}$ be the indicator for the event that $B_v^{(i)}$ is \emph{not} empty, so $\mathbb E 1-c$. By the Chernoff bound, the probability that less than $|B_v|/3$ blocks are non-empty is 
\[
\textstyle
\mathbb P\left[\sum_{i\in[|B_v|]} T_i<\frac{|B_v|}{3}\right]<\mathbb P\left[\sum_{i\in[|B_v|-1]} T_i<\frac{|B_v-1|}{2}\right]\leq e^{-\Theta(|B_v|-1)} = n^{-\Omega(1)}
\] as $|B_v| = \Omega(\log n)$ by assumption.
\end{proof}

\clearpage
\section{\func{Next-Neighbor} Implementation with Deterministic Performance Guarantee}
\label{sec:ER-det}
In this section, we construct data structures that allow us to sample for the next neighbor directly by considering only the cells $\ADJ[v][u]=\PHI$ in the Erd\"{o}s-R\'{e}nyi model and the Stochastic Block model. This provides $\poly(\log n)$ \emph{worst-case} performance guarantee for implementations supporting only the \func{Next-Neighbor} queries. We may again extend this data structure to support \func{Vertex-Pair} queries, however, at the cost of providing $\poly(\log n)$ \emph{amortized} performance guarantee instead.

In what follows, we first focus on the $G(n,p)$ model, starting with \func{Next-Neighbor} queries (Section~\ref{sec:det-er}) then extend to \func{Vertex-Pair} queries (Section~\ref{sec:det-er-pair}. We then explain how this result may be generalized to support the Stochastic Block model with random community assignment in Section~\ref{sec:det-sbm}.

\subsection{Data structure for next-neighbor queries in the Erd\"{o}s-R\'{e}nyi model}\label{sec:det-er}

\begin{wrapfigure}[15]{r}{0.48\textwidth}
\vspace{-2.5em}
\begin{framed}
    \renewcommand\figurename{Algorithm}
    \caption{Alternate implementation}
    \label{alg:exact-coin-toss}
    \begin{algorithmic}
        \Procedure{Next-Neighbor}{$v$}
            \State{$w \gets \min K_v$, or $n+1$ if $K_v = \emptyset$}
            \State{$t \gets$ \func{count}$(v)$}
            \State{\textbf{sample} $F\sim\mathsf{ExactF}(p,t)$}
            \If{$F \leq t$}
                \State{$u \gets$ \func{pick}$(v,F)$}
                \State{$K_u \gets K_u \cup \{v\}$}
            \Else
                \State{$u \gets w$}
                \If{$u \neq n+1$}
                    \State{$K_v \gets K_v \setminus \{u\}$}
                \EndIf
            \EndIf
            \State{\func{update}$(v,u)$}
            \State{$\LAST[v] \gets u$}
            \State \Return $u$
        \EndProcedure
    \end{algorithmic}
\end{framed}
\end{wrapfigure}

Recall that \func{Next-Neighbor}$(v)$ is given by $\min\{u > \LAST[v]: X_{vu} = 1\}$ (or $n+1$ if no satisfying $u$ exists). To aid in computing this quantity, we define:
\begin{align*}
K_v &= \{u \in (\LAST[v], n]: \ADJ[v][u]=1\},\\
w_v &= \min K_v \textrm{, or $n+1$ if $K_v = \emptyset$,} \\
T_v &= \{u \in (\LAST[v], w_v): \ADJ[v][u] = \PHI\}.
\end{align*}
The ordered set $K_v$ is only defined for ease of presentation: it is equivalent to $(\LAST[v],n] \cap P_v$, recording the known neighbors of $v$ after $\LAST[v]$ (i.e., those that have not been returned as an answer by any \func{Next-Neighbor}$(v)$ query yet). The quantity $w_v$ remains unchanged but is simply restated in terms of $K_v$. $T_v$ specifies the list of candidates $u$ for \func{Next-Neighbor}$(v)$ with $\ADJ[v][u] = \PHI$; in particular, all candidates $u$'s, such that the corresponding RVs $X_{vu} = \ZERO$ are decided, are explicitly excluded from $T_v$.

Unlike the approach of Algorithm~\ref{alg:oblivious-coin-toss} that simulates coin-flips even for decided $X_{vu}$'s, here we only flip undecided coins for the indices in $T_v$: we have $|T_v|$ Bernoulli trials to simulate. Let $F$ be the random variable denoting the first index of a successful trial out of $|T_v|$ coin-flips, or $|T_v|+1$ if all fail; denote the distribution of $F$ by $\mathsf{ExactF}(p,|T|)$. The CDF of $F$ is given by $\mathbb P[F = f] = 1-(1-p)^f$ for $f \leq |T_v|$ (i.e., there is some success trial in the first $f$ trials), and $\mathbb P[F = |T_v|+1] = 1$. Thus, we must design a data structure that can compute $w_v$, compute $|T_v|$, find the $F^\textrm{th}$ minimum value in $T_v$, and update $\ADJ[v][u]$ for the $F$ lowest values $u \in T_v$ accordingly.

Let $k = \lceil \log n \rceil$. We create a range tree, where each node itself contains a balanced binary search tree (BBST), storing $\LAST$ values of its corresponding range. Formally, for $i \in [0, n/2^j)$ and $j \in [0, k]$, the $i^\textrm{th}$ node of the $j^\textrm{th}$ level of the range tree, stores $\LAST[v]$ for every $v \in (i \cdot 2^{k-j}, (i+1)\cdot 2^{k-j}]$. Denote the range tree by $\mathbf{R}$, and each BBST corresponding to the range $[a, b]$ by $\mathbf{B}_{[a,b]}$. We say that the range $[a,b]$  is \emph{canonical} if it corresponds to a range of some $\mathbf{B}_{[a,b]}$ in $\mathbf{R}$.

Again, to allow fast initialization, we make the following adjustments from the given formalization above: (1) values $\LAST[v] = 0$ are never stored in any $\mathbf{B}_{[a,b]}$, and (2) each $\mathbf{B}_{[a,b]}$ is created on-the-fly during the first occasion it becomes non-empty. Further, we augment each $\mathbf{B}_{[a,b]}$ so that each of its node maintains the size of the subtree rooted at that node: this allows us to count, in $O(\log n)$ time, the number of entries in $\mathbf{B}_{[a,b]}$ that is no smaller than a given threshold.

Observe that each $v$ is included in exactly one $\mathbf{B}_{[a,b]}$ per level in $\mathbf{R}$, so $k+1=O(\log n)$ copies of $\LAST[v]$ are stored throughout $\mathbf{R}$. Moreover, by the property of range trees, any interval can be decomposed into a disjoint union of $O(\log n)$ canonical ranges. From these properties we implement the data structure $\mathbf{R}$ to support the following operations. (Note that $\mathbf{R}$ is initially an empty tree, so initialization is trivial.)
\begin{itemize}
\item \func{count}$(v)$: compute $|T_v|$. \\
We break $(\LAST[v],w_v)$ into $O(\log n)$ disjoint canonical ranges $[a_i, b_i]$'s each corresponding to some $\mathbf{B}_{[a_i,b_i]}$, then compute $t_{[a_i,b_i]} =|\{u \in [a_i, b_i]: \LAST[u] < v\}|$, and return $\sum_i t_{[a_i,b_i]}$. The value $t_{[a_i, b_i]}$ is obtained by counting the entries of $\mathbf{B}_{[a_i, b_i]}$ that is at least $v$, then subtract it from $b_i-a_i+1$; we cannot count entries less than $v$ because $\LAST[u]=0$ are not stored.
\item \func{pick}$(v,F)$: find the $F^\textrm{th}$ minimum value in $T_v$ (assuming $F \leq |T_v|$). \\
We again break $(\LAST[v],w_v)$ into $O(\log n)$ canonical ranges $[a_i, b_i]$'s, compute $t_{[a_i, b_i]}$'s, and identify the canonical range $[a^*,b^*]$ containing the $i^\textrm{th}$ smallest element (i.e., $[a_i, b_i]$ with the smallest $b$ satisfying $\sum_{j \leq i} t_{[a_j,b_j]} \geq F$ assuming ranges are sorted). Binary-search in $[a^*,b^*]$ to find exactly the $i^\textrm{th}$ smallest element of $T$. This is accomplished by traversing $\mathbf{R}$ starting from the range $[a^*,b^*]$ down to a leaf, at each step computing the children's $T_{[a,b]}$'s and deciding which child's range contains the desired element.
\item \func{update}$(v,u)$: simulate coin-flips, assigning $X_{vu} \leftarrow 1$, and $X_{v,u'} \leftarrow 0$ for $u' \in (\LAST[v], u) \cap T_v$. \\
This is done implicitly by handling the change $\LAST[v] \leftarrow u$: for each BBST $\mathbf{B}_{[a,b]}$ where $v \in [a, b]$, remove the old value of $\LAST[v]$ and insert $u$ instead.
\end{itemize}
It is straightforward to verify that all operations require at most $O(\log^2 n)$ time and $O(\log n)$ additional space per call. The overall implementation is given in Algorithm~\ref{alg:exact-coin-toss}, using the same asymptotic time and additional space. Recall also that sampling $F\sim\mathsf{ExactF}(p,t)$ requires $O(\log n)$ time and one $N$-bit random word for the $G(n,p)$ model.

\subsection{Data structure for \func{Vertex-Pair} queries in the Erd\"{o}s-R\'{e}nyi model}\label{sec:det-er-pair}
Recall that we define $Q$ in Algorithm~\ref{alg:oblivious-coin-toss} as the  set of pairs $(u,v)$ where $X_{uv}$ is assigned to $\ZERO$ during a \func{Vertex-Pair} query, allowing us to check for modifications of $\ADJ$ not captured by $\LAST[v]$ and $K_v$. Here in Algorithm~\ref{alg:exact-coin-toss}, rather than checking, we need to be able to count such entries. Thus, we instead create a BBST $Q'_v$ for each $v$ defined as:
\[Q'_v = \{u: u > \LAST[v], v > \LAST[u], \textrm{ and } X_{uv}\textrm{ is assigned to }\ZERO \textrm{ during a \func{Vertex-Pair} query}\}.\]
This definition differs from that of $Q$ in Section~\ref{sec:ER-pair} in two aspects. First, we ensure that each $\ADJ[v][u] = \ZERO$ is recorded by either $\LAST$ (via Lemma~\ref{lem:cond-0}) or $Q'_v$ (explicitly), but \emph{not both}. In particular, if $u$ were to stay in $Q'_v$ when $\LAST[v]$ increases beyond $u$, we would have double-counted these entries $\ZERO$ not only recorded by $Q'_v$ but also implied by $\LAST[v]$ and $K_v$. By having a BBST for each $Q'_v$, we can compute the number of $\ZERO$'s that must be excluded from $T_v$, which cannot be determined via $\LAST[v]$ and $K_v$ alone: we subtract these from any counting process done in the data structure $\mathbf{R}$.

Second, we maintain $Q'_v$ separately for each $v$ as an ordered set, so that we may identify non-neighbors of $v$ within a specific range -- this allows us to remove non-neighbors in specific range, ensuring that the first aspect holds. More specifically, when we increase $\LAST[v]$, we must go through the data structure $Q'_v$ and remove all $u < \LAST[v]$, and for each such $u$, also remove $v$ from $Q'_u$. There can be as many as linear number of such $u$, but the number of removals is trivially bounded by the number of insertions, yielding an amortized time performance guarantee in the following theorem. Aside from the deterministic guarantee, unsurprisingly, the required amount of random words for this algorithm is lower than that of the algorithm from Section~\ref{sec:reroll-cont} (given in Theorem~\ref{thm:ER-rand-iterations} and Corollary~\ref{cor:oblivious-alg}).

\begin{theorem}
Consider the Erd\"{o}s-R\'{e}nyi $G(n,p)$ model. For \func{Next-Neighbor} queries only, Algorithm~\ref{alg:exact-coin-toss} is an implementation that answers each query using $O(\log^2 n)$ time, $O(\log n)$ additional space, and one $N$-bit random word. For \func{Next-Neighbor} and \func{vertex pair} queries,an extension of Algorithm~\ref{alg:exact-coin-toss} answers each query using $O(\log^2 n)$ amortized time, $O(\log n)$ additional space, and one $N$-bit random word.
\end{theorem}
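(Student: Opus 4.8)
The plan is to show that Algorithm~\ref{alg:exact-coin-toss} (and its \func{vertex-pair} extension) is a valid local-access generator — answers are \emph{consistent} with a single graph, that graph's law is $1/\poly(n)$-close to $G(n,p)$, and each query meets the stated resource bounds — handling the \func{next-neighbor}-only case first.

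\emph{Correctness of the \func{next-neighbor}-only generator.} I would start from the identity $\func{next-neighbor}(v)=\min\{u>\LAST[v]:X_{v,u}=\ONE\}$ and from Lemma~\ref{lem:cond-0}, which guarantees that $\LAST$ and $K_v=(\LAST[v],n]\cap P_v$ faithfully encode the decided portion of row $v$: the first decided $\ONE$ after $\LAST[v]$ is $w_v=\min K_v$, and the already-decided cells of $(\LAST[v],w_v)$ are exactly the $u$ with $\LAST[u]>v$ (all of them $\ZERO$), so the undecided cells there are precisely $T_v$. Hence the correct answer is obtained by performing $|T_v|$ fresh $\distr{Bern}(p)$ trials over the elements of $T_v$ in increasing order and returning the first success, or $w_v$ if all fail — which is exactly what sampling $F\sim\distr{ExactF}(p,|T_v|)$ (whose CDF $\pp[F\le f]=1-(1-p)^f$ is the geometric first-success law truncated at $|T_v|+1$) followed by $\func{pick}(v,F)$ computes. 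I would then check that $\func{update}(v,u)$ together with $\LAST[v]\gets u$ implicitly commits the skipped cells to $\ZERO$ and the chosen cell to $\ONE$, so the invariant of Lemma~\ref{lem:cond-0} is preserved and all future answers stay consistent; since every elementary coin is a genuine $\distr{Bern}(p)$ trial, the generated graph has the $G(n,p)$ law up to the $n^{-c}$ error from sampling $F$ through an $N$-bit CDF, which accumulates to $1/\poly(n)$ by the same state-evolution argument as in Lemma~\ref{lemma:transition}.

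\emph{Resources, \func{next-neighbor}-only.} Here I would invoke the structure of $\data{R}$ from Section~\ref{sec:det-er}: it is a range tree of depth $O(\log n)$ whose nodes are size-augmented balanced binary search trees on $\LAST$-values, each vertex lies in $O(\log n)$ of these trees, and every interval decomposes into $O(\log n)$ canonical ranges; consequently \func{count}, \func{pick}, and \func{update} each perform $O(\log n)$ tree operations of cost $O(\log n)$, for $O(\log^2 n)$ time and $O(\log n)$ fresh nodes per call. Sampling $F\sim\distr{ExactF}(p,t)$ costs $O(\log n)$ time and a single $N$-bit random word via the closed-form CDF. Since no re-sampling ever occurs, these bounds are worst-case, giving $O(\log^2 n)$ time, $O(\log n)$ additional space, and one random word per \func{next-neighbor} query.

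\emph{The \func{vertex-pair} extension.} I would add, for each $v$, an ordered set $Q'_v=\{u: u>\LAST[v],\ v>\LAST[u],\ X_{u,v}\text{ was set to }\ZERO\text{ by a \func{vertex-pair} query}\}$ and argue three points. (i) \emph{Answering a query}: whether $X_{u,v}$ is already decided is detected from $\LAST$, $P_v$, and $Q'_v$ (the per-vertex analogue of Lemma~\ref{lem:cond-0-q}), in which case we answer directly; otherwise we flip one $\distr{Bern}(p)$ coin and insert into $P_v,P_u$ or into $Q'_v,Q'_u$. (ii) \emph{Corrected counts}: since $Q'_v$ records precisely the \func{vertex-pair}-decided $\ZERO$'s of row $v$ not already implied by some $\LAST[u]>v$, we take $|T_v|$ to be the old $\data{R}$-count on $(\LAST[v],w_v)$ minus $|Q'_v\cap(\LAST[v],w_v)|$ and adjust \func{pick} symmetrically; this requires the disjointness invariant that a decided $\ZERO$ be recorded by $\LAST$ \emph{or} by $Q'$ but never both, which forces us, whenever $\LAST[v]$ grows past some $u\in Q'_v$ during a \func{next-neighbor}$(v)$ call, to delete $u$ from $Q'_v$ and $v$ from $Q'_u$. (iii) \emph{Amortization}: the total number of such deletions over the whole execution is at most the total number of $Q'$-insertions, i.e.\ at most the number of \func{vertex-pair} queries, and each deletion costs $O(\log n)$; charging this cleanup to the query that inserted the entry keeps the per-query cost at $O(\log^2 n)$ amortized over any sequence of queries, while the additional space ($O(\log n)$) and randomness (one $N$-bit word) per query are unchanged.

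\emph{Main obstacle.} I expect the crux to be the \func{vertex-pair} bookkeeping: getting the inclusion--exclusion between the $\data{R}$-counts (which implicitly record $\ZERO$'s via $\LAST$) and the explicit $Q'_v$'s exactly right — in particular ruling out the double-count that arises when $\LAST[v]$ overtakes a stale $Q'_v$ entry — and then arranging the charging argument so that a single \func{next-neighbor} call's worst-case-unbounded cleanup is absorbed into the amortized $O(\log^2 n)$ bound. By contrast, distributional correctness is essentially free, since the $G(n,p)$ CDF is exact in closed form, so that part reduces to the $X_{v,u}$-uncovering viewpoint and the accumulation lemma already in hand.
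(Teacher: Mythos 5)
Your proposal matches the paper's own argument essentially step for step: the same reduction of \func{next-neighbor}$(v)$ to sampling $F\sim\distr{ExactF}(p,|T_v|)$ over the undecided set $T_v$ determined via $\LAST$ and $K_v$, the same size-augmented range tree of BBSTs giving $O(\log^2 n)$ worst-case cost for \func{count}, \func{pick}, and \func{update} with one $N$-bit word per query, and the same per-vertex sets $Q'_v$ with the "recorded by $\LAST$ or by $Q'$ but not both" invariant and the deletions-charged-to-insertions amortization for \func{vertex-pair} queries. It is correct and takes the same route as the paper.
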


\subsection{Data structure for the Stochastic Block model}\label{sec:det-sbm}

We employ the data structure for generating and counting the number of vertices of each community in a specified range from Section~\ref{sec:application_sbm}. We create $r$ different copies of the data structure $\mathbf{R}$ and $Q'_v$, one for each community, so that we may implement the required operations separately for each color, including using the \func{count} subroutine to sample $F\sim\mathsf{ExactF}$ via the corresponding CDF, and picking the next neighbor according to $F$. Recall that since we do not store $\LAST[v] = 0$ in $\mathbf{R}$, and we only add an entry to $K_v$, $P_v$ or $Q'_v$ after drawing the corresponding $X_{uv}$, the communities of the endpoints, which cover all elements stored in these data structures, must have already been determined. Thus, we obtain the following corollary for the Stochastic Block model.

\begin{restatable}{corollary}{res:sbm-corol}
Consider the Stochastic Block model with randomly-assigned communities. For \func{Next-Neighbor} queries only, Algorithm~\ref{alg:exact-coin-toss} is an implementation that answers each query using $O(r\,\poly(\log n))$ time, random words, and additional space per query. For \func{Next-Neighbor} and \func{Vertex-Pair} queries, Algorithm~\ref{alg:exact-coin-toss} answers each query using $O(r\,\poly(\log n))$ amortized time, $O(r\,\poly(\log n))$ random words, and $O(r\,\poly(\log n))$ additional space per query additional space, and one $N$-bit random word.
\end{restatable}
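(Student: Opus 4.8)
The plan is to combine the deterministic generator of Algorithm~\ref{alg:exact-coin-toss} (built for $G(n,p)$ in Sections~\ref{sec:det-er}--\ref{sec:det-er-pair}) with the community sampling-and-counting data structure of Theorem~\ref{thm:sbm-data}. The crucial observation, already recorded in Section~\ref{sec:app_sbm}, is that for $v \in C_i$ the edge probabilities are piecewise constant on communities: $\prod_{u=a}^{b}(1-p_{v,u}) = \prod_{j=1}^{r}(1-p_{i,j})^{|[a,b]\cap C_j|}$. Hence every quantity that Algorithm~\ref{alg:exact-coin-toss} needs (the CDF of $\distr{ExactF}$, the number of undecided candidates, the location of the $F^{\text{th}}$ undecided candidate) can be obtained by splitting the relevant vertex range by community and treating each community as in the homogeneous $G(n,p)$ case. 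Concretely, I would keep $r$ independent copies $\data{R}^{(1)},\dots,\data{R}^{(r)}$ of the range tree of Section~\ref{sec:det-er} --- $\data{R}^{(j)}$ storing the $\LAST$-values of vertices of community $j$ --- together with $r$ copies of the auxiliary sets $Q'_v$, one per community, and a single copy of the community-count structure of Theorem~\ref{thm:sbm-data}.

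For a \func{next-neighbor}$(v)$ query with $v \in C_i$, I would first read $w_v = \min K_v$ as before, and then sample the index of the first ``heads'' among the undecided candidates $u \in (\LAST[v],w_v)$, where candidate $u$ now has bias $p_{i,c(u)}$ with $c(u)$ its community. This is done by binary search on a threshold $u^{*}$: decompose $(\LAST[v],u^{*}]$ into $\bo(\log n)$ canonical ranges, and for each canonical range and each community $j$ count the number of vertices of community $j$ in it (via the community-count structure) minus those whose edge to $v$ is already decided (the count of $\LAST[u]\ge v$ in the relevant node of $\data{R}^{(j)}$, plus the elements of $Q'_v\cap C_j$ lying in the range). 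Summing over canonical ranges yields, for each $j$, the number $t_j(u^{*})$ of undecided candidates of community $j$ up to $u^{*}$; then the skip probability is $\prod_{j}(1-p_{i,j})^{t_j(u^{*})}$, the CDF $\pp[F\le u^{*}] = 1-\prod_j (1-p_{i,j})^{t_j(u^{*})}$ can be evaluated to $N$-bit precision, and the binary search pins down the returned neighbor (or $w_v$, or $n+1$). The update is exactly Section~\ref{sec:det-er} applied to all $r$ trees: set $\LAST[v]\leftarrow u$, which removes $v$'s old entry and inserts $u$ in the $\bo(\log n)$ BBSTs of each $\data{R}^{(j)}$ that contain $v$, and prunes every $Q'_v$ of elements that have fallen below the new $\LAST[v]$ (an amortized cost, charged against insertions as in Section~\ref{sec:det-er-pair}). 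A \func{vertex-pair}$(u,v)$ query is handled as in Section~\ref{sec:det-er-pair}, except we first query the community-count structure for $c(u)$ and $c(v)$ --- legitimate because a \func{vertex-pair} query genuinely needs to decide that edge --- flip a coin with bias $p_{c(u),c(v)}$, and record the outcome in $P$ or in the community-$c(\cdot)$ copy of $Q'$.

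For correctness I would first restate the representation invariant: the triple $(\LAST,\{K_v\},\{Q'_v\})$, now refined by community, still captures $\ADJ$ --- the obvious community-indexed analogue of Lemmas~\ref{lem:cond-0} and \ref{lem:cond-0-q} --- and, as enforced by the pruning step, each decided $\ZERO$ entry is accounted for by \emph{exactly one} of these structures, so the counts $t_j(u^{*})$ are never over- or under-counted. Given this, the distribution of the returned neighbor matches $\distr{ExactF}$ with the community-dependent biases exactly under perfect precision; the $N$-bit precision and the lazy generation of community counts each contribute only a $\frac{1}{\poly(n)}$ error in $L_1$, which composes over the $\bo(n^{3})$ elementary samples exactly as in the proof of Lemma~\ref{lemma:transition} (embedding the state of the community-count structure into the generator state, as already done for Corollary~\ref{cor:sbm-construct}).

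The main obstacle I anticipate is bookkeeping rather than a conceptual difficulty: ensuring that the three counting mechanisms --- the community-count structure, the $\data{R}^{(j)}$'s, and the $Q'_v$'s --- agree on the number of undecided candidates of each community within an arbitrary range under an adversarial query sequence, and in particular that the $Q'_v$ pruning keeps the ``recorded by $\LAST$'' and ``recorded by $Q'_v$'' sets disjoint (the subtlety already flagged in Section~\ref{sec:det-er-pair}) and remains correct when a community count is generated on the fly mid-query. Once the invariant and disjointness are established, the resource bound is immediate: each of the $\bo(\log n)$ canonical ranges costs $\bo(\poly(\log n))$ per community (a community-count query plus a $\data{R}^{(j)}$/$Q'_v$ count), hence $\bo(r\,\poly(\log n))$ per query; generating a fresh node of the community-count tree costs $\bo(r\,\poly(\log n))$ random $N$-bit words by Theorem~\ref{thm:colors}; and $Q'_v$ maintenance is amortized, giving the stated $\bo(r\,\poly(\log n))$ bounds (worst case for \func{next-neighbor} alone, amortized once \func{vertex-pair} is allowed).
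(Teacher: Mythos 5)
Your proposal follows essentially the same route as the paper's proof: maintain $r$ per-community copies of the range tree $\data{R}$ and of $Q'_v$, combine them with the community-count data structure of Theorem~\ref{thm:sbm-data}, sample the next neighbor via the CDF built from per-community undecided-candidate counts, and note that communities of stored endpoints are already determined when needed. You simply spell out details (the threshold binary search with skip probability $\prod_j (1-p_{i,j})^{t_j(u^*)}$, the disjointness of $\LAST$- versus $Q'_v$-recorded zeros, and the error composition via Lemma~\ref{lemma:transition}) that the paper leaves implicit, so the argument is correct and matches the intended proof.
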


\clearpage
\section{Sampling from the Multivariate Hypergeometric Distribution}
\label{sec:multivariate_hypergeometric_sampling}

Consider the following random experiment. Suppose that we have an urn containing $B \leq n$ marbles (representing vertices), each occupies one of the $r$ possible colors (representing communities) represented by an integer from $[r]$. The number of marbles of each color in the urn is known: there are $C_k$ indistinguishable marbles of color $k \in [r]$, where $C_1 + \cdots + C_r = B$. Consider the process of drawing $\ell \leq B$ marbles from this urn \emph{without replacement}. We would like to sample how many marbles of each color we draw.

More formally, let $\vec{C} = \langle c_1, \ldots, c_r \rangle$, then we would like to (approximately) sample a vector $\mathbf{S}^\vec{C}_\ell$ of $r$ non-negative integers such that
\[\Pr[\mathbf{S}^\vec{C}_\ell = \langle s_1, \ldots, s_r \rangle]
= \frac{{C_1\choose s_1}\cdot{C_2\choose s_2}\cdots{C_r\choose s_r}}{{B \choose C_1+C_2+ \cdots +C_r}}\]

where the distribution is supported by all vectors satisfying $s_k \in \{0, \ldots, C_k\}$ for all $k \in [r]$ and $\sum_{k=1}^{r} s_k = \ell$. This distribution is referred to as the \emph{multivariate hypergeometric distribution}.

The sample $\mathbf{S}^\vec{C}_\ell$ above may be generated easily by simulating the drawing process, but this may take $\Omega(\ell)$ iterations, which have linear dependency in $n$ in the worst case: $\ell = \Theta(B) = \Theta(n)$. Instead, we aim to generate such a sample in $O(r\,\poly(\log n))$ time with high probability. We first make use of the following procedure from \cite{huge}.

\begin{restatable}{lemma}{res:ggn-marble}\label{lem:ggn_interval_summable}
Suppose that there are $T$ marbles of color $1$ and $B-T$ marbles of color $2$ in an urn,
where $B \leq n$ is even. There exists an algorithm that samples $\langle s_1, s_2 \rangle$,
the number of marbles of each color appearing when drawing $B/2$ marbles from the urn without replacement,
in $O(\poly(\log n))$ time and random words.
Specifically, the probability of sampling a specific pair $\langle s_1, s_2 \rangle$ where $s_1 + s_2 = T$
is approximately ${B/2 \choose s_1}{B/2 \choose T-s_1}/{B \choose T}$ with error of at most $n^{-c}$ for any constant $c>0$.
\end{restatable}

In other words, the claim here only applies to the two-color case,
where we sample the number of marbles when drawing exactly half of the marbles from the entire urn ($r=2$ and $\ell = B/2$).
First we generalize this claim to handle any desired number of drawn marbles $\ell$ (while keeping $r=2$).

\begin{restatable}{lemma}{res:new-marble2}\label{thm:sampling_two_colors}
Given $C_1$ marbles of color $1$ and $C_2 = B-C_1$ marbles of color $2$,
there exists an algorithm that samples $\langle s_1, s_2 \rangle$,
the number of marbles of each color appearing when drawing $\ell$ marbles from the urn without replacement,
in $O(\poly(\log B))$ time and random words.
\end{restatable}
\begin{proof}
For the base case where $B=1$, we trivially have $\mathbf{S}^\vec{C}_1=\vec{C}_1$ and $\mathbf{S}^\vec{C}_0=\vec{C}_2$.
Otherwise, for even $B$, we apply the following procedure.
\begin{itemize}
\item If $\ell \leq B/2$, generate $\vec{C}'=\mathbf{S}^\vec{C}_{B/2}$ using Lemma~\ref{lem:ggn_interval_summable}.
\begin{itemize}
\item If $\ell = B/2$ then we are done.
\item Else, for $\ell < B/2$ we recursively generate $\mathbf{S}^\vec{C'}_{\ell}$.
\end{itemize}
\item Else, for $\ell > B/2$, we generate $\mathbf{S}^\vec{C'}_{B-\ell}$ as above, then output $\vec{C}-\mathbf{S}^\vec{C'}_{B-\ell}$.
\end{itemize}
On the other hand, for odd $B$, we simply simulate drawing a single random marble
from the urn before applying the above procedure on the remaining $B-1$ marbles in the urn.
That is, this process halves the domain size $B$ in each step, requiring $\log B$ iterations to sample $\mathbf{S}^\vec{C}_\ell$.
\end{proof}

Lastly we generalize to support larger $r$.
\SamplingManyColors*
\begin{proof}
Observe that we may reduce $r>2$ to the two-color case by sampling the number of marbles of the first color,
collapsing the rest of the colors together.
Namely, define a pair $\vec D=\langle C_1, C_2+\cdots+C_r \rangle$,
then generate $\mathbf{S}^{\vec D}_{\ell}=\langle s_1, s_2+\ldots+s_r\rangle$ via the above procedure.
At this point we have obtained the first entry $s_1$ of the desired $\mathbf{S}^{\vec{C}}_{\ell}$.
So it remains to generate the number of marbles of each color from the remaining $r-1$ colors in $\ell-s_1$ remaining draws.
In total, we may generate $\mathbf{S}^{\vec{C}}_{\ell}$ by performing $r$ iterations of the two-colored case.
The error in the $L_1$-distance may be established similarly to the proof of Lemma~\ref{lemma:transition}.
\end{proof}
\begin{theorem}
\label{thm:sampling_many_colors_contiguous}
Given $B$ marbles of $r$ different colors in $[r]$, such that there are $C_i$ marbles of color $i$ and a parameter $k\le r$,
there exists an algorithm that samples $s_1 + s_2 +\cdots + s_k$,
the number of marbles among the first $k$ colors appearing when drawing $\ell$ marbles from the urn without replacement,
in $O(\poly(\log B))$ time and random words.
\end{theorem}
\begin{proof}
Since we don't have to find the individual counts, we can be more efficient by grouping half the colors together at each step.
Formally, we define a pair $\vec  D= \langle D_1,D_2 \rangle$ where $D_1=C_1 + C_2 +\cdots + C_{r/2}$ and $D_2=C_{r/2+1}+\cdots+C_{r-1}+C_r$.
We then generate $ \langle D_1', D_2'\rangle = \mathbf S^{\vec D}_\ell$.
\begin{itemize}
    \item If $k < r/2$, we recursively solve the problem with the first $r/2$ colors, $B\gets D_1'$, and the original value of $k$.
    \item If $k > r/2$, we recurse on the last $r/2$ colors, $B$ set to $D_2'$, and $k$ set to $k-r/2$.
    In this case, we add $D_1'$ to the returned value.
    \item Otherwise, $k=r/2$ and we can return $D_1'$.
\end{itemize}
The number of recursive calls is $\mathcal O(\log r) = \mathcal O(\log B)$ (since $r\le B$).
So, the overall runtime is $\mathcal O(\poly(\log B))$.
\end{proof}

\clearpage
\section{Local-Access Implementations for Random Directed Graphs}
\label{sec:small_world}

In this section, we consider Kleinberg's Small-World model \cite{kleinberg, klein}
where the probability that a \emph{directed} edge $(u,v)$ exists is $\min\{c/(\func{dist}(u,v))^2, 1\}$.
Here, $\func{dist}(u,v)$ is the Manhattan distance between $u$ and $v$ on a $\sqrt n\times\sqrt n$ grid.
We begin with the case where $c = 1$, then generalize to different values of $c = \log^{\pm\Theta(1)}(n)$. 
We aim to support $\func{All-Neighbors}$ queries using $\poly(\log n)$ resources. 
This returns the entire list of out-neighbors of $v$.

\subsection{Implementation for $c=1$}

%Let $X_{uv}$ denote the Bernoulli indicator variable for the event that that $(u,v)$ exists; all $X_{uv}$'s are independent, and unlike the undirected case, $X_{uv}$ and $X_{vu}$ are different random variables. Moreover, 
Observe that since the graphs we consider here are directed, the answers to the $\func{All-Neighbor}$ queries are all independent: each vertex may determine its out-neighbors independently.
Given a vertex $v$, we consider a partition of all the other vertices of the graph into sets $\{\Gamma^v_1, \Gamma^v_2,\ldots\}$ by distance: $\Gamma^v_k = \{u: \func{dist}(v,u) = k\}$ contains all vertices at a distance $k$ from vertex $v$. Observe that $|\Gamma^v_k|\leq 4k = O(k)$. Then, the expected number of edges from $v$ to vertices in $\Gamma^v_k$ is therefore $|\Gamma^v_k|\cdot 1/k^2 = O(1/k)$.
Hence, the expected degree of $v$ is at most $\sum_{k=1}^{2(\sqrt{n}-1)}O(1/k) = O(\log n)$.
It is straightforward to verify that this bound holds with high probability (use Hoeffding's inequality).
Since the degree of $v$ is small, in this model we can afford to perform \func{All-Neighbors} queries instead of \func{Next-Neighbor} queries using an additional $\poly(\log n)$ resources.

Nonetheless, internally in our implementation, we generate our neighbors one-by-one similarly to how we process \func{Next-Neighbor} queries.
We perform our sampling in two phases.
In the first phase, we compute a distance $d$, such that the next neighbor closest to $v$ is at distance $d$.
We maintain $\LAST[v]$ to be the last computed distance.
In the second phase, we generate all neighbors of $v$ at distance $d$, under the assumption that there must be at least one such neighbor.
For simplicity, we generate these neighbors as if there are \emph{full} $4d$ vertices at distance $d$ from $v$:
some generated neighbors may lie outside our $\sqrt n\times\sqrt n$ grid, which are simply discarded.
As the running time of our implementation is proportional to the number of implementation neighbors,
then by the bound on the number of neighbors, this assumption does not asymptotically worsen the performance of the implementation.

\subsubsection{Phase 1: Generate the distance $D$}
Let $a = \LAST[v] + 1$, and let $\mathsf{D}(a)$ to denote the probability distribution of the distance where the next closest neighbor of $v$ is located, or $\bot$ if there is no neighbor at distance at most $2(\sqrt{n}-1)$.
That is, if $D\sim\mathsf{D}(a)$ is drawn, then we proceed to Phase 2 to generate all neighbors at distance $D$.
We repeat the process by sampling the next distance from $\mathsf{D}(a+D)$ and so on until we obtain $\bot$, at which point we return our answers and terminate.

To generate the next distance, we perform a binary search: we must evaluate the CDF of $\mathsf{D}(a)$.
The CDF is given by $\mathbb P[D\leq d]$ where $D\sim\mathsf{D}(a)$, the probability that there is \emph{some} neighbor at distance at most $d$.
As usual, we compute the probability of the negation: there is \emph{no} neighbor at distance at most $d$.
Recall that each distance $i$ has exactly $|\Gamma_i^v| = 4i$ vertices, and the probability of a vertex $u \in \Gamma_i^v$ is not a neighbor is exactly $1-1/i^2$.
So, the probability that there is no neighbor at distance $i$ is $(1-1/i^2)^{4i}$.
Thus, for $D\sim\mathsf{D}(a)$ and $d \leq 2(\sqrt{n}-1)$,
\begin{align*}
\mathbb P[D\leq d] &= 1-\prod_{i=a}^{d} \left(1-\frac{1}{i^2}\right) = 1-\prod_{i=a}^{d} \left(\frac{(i-1)(i+1)}{i^2}\right)^{4i}
=1-\left(\frac{(a-1)^{a}}{a^{a-1}}\cdot\frac{(d+1)^{d}}{d^{d+1}}\right)^4
\end{align*}
where the product enjoys telescoping as the denominator $(i^2)^{4i}$ cancels with $(i^2)^{4(i-1)}$ and $(i^2)^{4(i+1)}$ in the numerators of the previous and the next term, respectively.
\iffalse
\begin{align}
F_a(b) = &\mathlarger\prod\limits_{d-a}^{b} \left[1-\frac{1}{d^2}\right] = \mathlarger\prod\limits_{d-a}^{b} \left[\frac{(d-1)(d+1)}{d^2}\right]^{4d} \\
= & \frac{(a-1)^{a}\cancel{(a+1)^{a}}}{a^{2a}}\cdot\frac{a^{a+1}(a+2)^{a+1}}{\cancel{(a+1)^{2(a+1)}}}
\cdot\frac{\cancel{(a+1)^{a+2}}(a+3)^{a+2}}{(a+2)^{2(a+2)}}\cdot\frac{(a+2)^{a+3}(a+4)^{a+3}}{(a+3)^{2(a+3)}}\cdot\\
&\bullet\bullet\bullet
\cdot\frac{(b-4)^{b-3}\cancel{(b-2)^{b-3}}}{(b-3)^{2(b-3)}}\cdot\frac{(b-3)^{b-2}(b-1)^{b-2}}{\cancel{(b-2)^{2(b-2)}}}
\cdot\frac{\cancel{(b-2)^{b-1}}b^{b-1}}{(b-1)^{2(b-1)}}\cdot\frac{(b-1)^{b}(b+1)^{b}}{b^{2b}} \\
= & \frac{(a-1)^{a}\cancel{(a+1)^{a}}}{a^{2a}}\cdot\frac{a^{a+1}\cancel{(a+2)^{a+1}}}{\cancel{(a+1)^{2(a+1)}}}
\cdot\frac{\cancel{(a+1)^{a+2}}\cancel{(a+3)^{a+2}}}{\cancel{(a+2)^{2(a+2)}}}
\cdot\frac{\cancel{(a+2)^{a+3}}\cancel{(a+4)^{a+3}}}{\cancel{(a+3)^{2(a+3)}}}\cdot\\
&\bullet\bullet\bullet
\cdot\frac{\cancel{(b-4)^{b-3}}\cancel{(b-2)^{b-3}}}{\cancel{(b-3)^{2(b-3)}}}
\cdot\frac{\cancel{(b-3)^{b-2}}\cancel{(b-1)^{b-2}}}{\cancel{(b-2)^{2(b-2)}}}
\cdot\frac{\cancel{(b-2)^{b-1}}b^{b-1}}{\cancel{(b-1)^{2(b-1)}}}\cdot\frac{\cancel{(b-1)^{b}}(b+1)^{b}}{b^{2b}} \\
= & \frac{(a-1)^{a}}{a^{a-1}}\cdot\frac{(b+1)^{b}}{b^{b+1}} \\
\implies & F_a(b) = 1 - \frac{(a-1)^{a}}{a^{a-1}}\cdot\frac{(b+1)^{b}}{b^{b+1}}
\end{align}
\fi
This gives us a closed form for the CDF, which we can compute with $2^{-N}$ additive error in constant time (by our computation model assumption).
Thus, we may generate the distance $D\sim\mathsf{D}(a)$ using $O(\log n)$ time and one random $N$-bit word.

\subsubsection{Phase 2: Sampling neighbors at distance $D$}
After sampling a distance $D$, we now have to generate all the neighbors at distance $D$.
We label the vertices in $\Gamma_D^v$ with unique indices in $\{1, \ldots, 4D\}$.
Note that now each of the $4D$ vertices in $\Gamma_D^v$ is a neighbor with probability $1/D^2$.
However, by Phase 1, this is conditioned on the fact that there is at least one neighbor among the vertices in $\Gamma_D^v$,
which may be difficult to generate when $1/D^2$ is very small.
We can emulate this na\"{i}vely by repeatedly sampling a ``block'', composing of the $4D$ vertices in $\Gamma_D^v$, by deciding whether each vertex is a neighbor of $v$ with uniform probability $1/D^2$ (i.e., $4D$ identical independent Bernoulli trials), and then discarding the entire block if it contains no neighbor. We repeat this process until we finally generate one block that contains at least one neighbor, and use this block as our output.

For the purpose of making the sampling process more efficient, we view this process differently. Let us imagine that we are given an infinite sequence of independent Bernoulli variables, each with bias $1/D^2$.
We then divide the sequence into contiguous blocks of length $4D$ each.
Our task is to find the \emph{first} occurrence of success (a neighbor), then report the whole block hosting this variable.

This first occurrence of a successful Bernoulli trial is given by sampling from the geometric distribution, $X\sim\mathsf{Geo}(1/D^2)$.
Since the vertices in each block are labeled by $1, \ldots, 4D$, then this first occurrence has label $X' = {X\mathrm{\,mod\,}4D}$.
By sampling $X\sim\mathsf{Geo}(1/D^2)$, the first $X'$ Bernoulli variables of this block is also implicitly determined. Namely, the vertices of labels $1, \ldots, X'-1$ are non-neighbors, and that of label $X'$ is a neighbor.
The sampling for the remaining $4D-X'$ vertices can then be performed in the same fashion we generate next neighbors in the $G(n,p)$ case: 
repeatedly find the next neighbor by sampling from $\mathsf{Geo}(1/D^2)$, until the index of the next neighbor falls beyond this block.

Thus at this point, we have generated all neighbors in $\Gamma_D^v$. We can then update $\LAST[v] \leftarrow D$ and continue the process of larger distances.
Sampling each neighbor takes $O(\log n)$ time and one random $N$-bit word; the resources spent sampling the distances is also bounded by that of the neighbors.
As there are $O(\log n)$ neighbors with high probability, we obtain the following theorem.

\begin{theorem}\label{thm-swm}
There exists an algorithm that generates a random graph from Kleinberg's Small World model,
where probability of including each directed edge $(u,v)$ in the graph is $1/(\func{dist}(u,v))^2$ where $\func{dist}$ denote the Manhattan distance,
using $O(\log^2 n)$ time and random $\log n$-bit words per \func{All-Neighbors} query with high probability.
\end{theorem}

\subsection{Implementation for $c \neq 1$}

Observe that to support different values of $c$ in the probability function $c/(\func{dist}(u,v))^2$, we do not have a closed-form formula for computing the CDF for Phase 1, whereas the process for Phase 2 remains unchanged. To handle the change in the probability distribution Phase 1, we consider the following, more general problem. Suppose that we have a process $P$ that, one-by-one, provide occurrences of successes from the sequence of independent Bernoulli trials with success probabilities $\langle p_1, p_2, \ldots \rangle$. We show how to construct a process $\mathcal{P}^c$ that provide occurrences of successes from Bernoulli trials with success probabilities $\langle c\cdot p_1, c\cdot p_2, \ldots\rangle$ (truncated down to $1$ as needed). For our application, we assume that $c$ is given in $N$-bit precision, there are $O(n)$ Bernoulli trials, and we aim for an error of $\frac{1}{\poly(n)}$ in the $L_1$-distance.

\subsubsection{Case $c < 1$}
We use rejection sampling in order to construct a new Bernoulli process.

\begin{restatable}{lemma}{res:small-c}
Given a process $\mathcal{P}$ outputting the indices of successful Bernoulli trials with bias $\langle p_i\rangle$, there exists a process $\mathcal{P}^c$ outputting the indices of successful Bernoulli trials with bias $\langle c\cdot p_i\rangle$ where $c<1$,
using one additional $N$-bit word overhead for each answer of $\mathcal{P}$.
\end{restatable}
\begin{proof}
Consider the following rejection sampling process to simulate the Bernoulli trials.
In addition to each Bernoulli variable $X_i$ with bias $p_i$, we generate another coin-flip $C_i$ with bias $c$.
Set $Y_i = X_i \cdot C_i$, then $\mathbb P[Y_i = 1] = \mathbb P[X_i = 1]\cdot\mathbb P[C_i] = c\cdot p_i$, as desired.
That is, we keep a success of a Bernoulli trial with probability $c$, or reject it with probability $1-c$.

Now, we are already given the process $\mathcal{P}$ that ``handles'' $X_i$'s, generating a sequence of indices $i$ with $X_i = 1$.
The new process $\mathcal{P}^c$ then only needs to handle the $C_i$'s. Namely, for each $i$ reported as success by $\mathcal{P}$, $\mathcal{P}^c$ flips a coin $C_i$ to see if it should also report $i$, or discard it.
As a result, $\mathcal{P}^c$ can generate the indices of successful Bernoulli trials using only one random $N$-bit word overhead for each answer from $\mathcal{P}$.
\end{proof}
Applying this reduction to the distance sampling in Phase 1, we obtain the following corollary.
\begin{restatable}{corollary}{res:small-c-corol}
There exists an algorithm that generates a random graph from Kleinberg's Small World model with edge probabilities $c/(\func{dist}(u,v))^2$ where $c<1$,
using $O(\log^2 n)$ time and random $\log n$-bit words per \func{All-Neighbors} query with high probability.
\end{restatable}

\subsubsection{Case $c > 1$}

Since we aim to sample with larger probabilities, we instead consider making $k\cdot c$ independent copies of each process $\mathcal{P}$, where $k>1$ is a positive integer.
Intuitively, we hope that the probability that one of these process returns an index $i$ will be at least $c\cdot p_i$, so that we may perform rejection sampling to decide whether to keep $i$ or not.
Unfortunately such a process cannot handle the case where $c\cdot p_i$ is large, notably when $c\cdot p_i > 1$ is truncated down to $1$, while there is always a possibility that none of the processes return $i$.

\begin{restatable}{lemma}{rem:large-c}
Let $k>1$ be a constant integer. Given a process $\mathcal{P}$ outputting the indices of successful Bernoulli trials with bias $\langle p_i\rangle$, there exists a process $\mathcal{P}^c$ outputting the indices of successful Bernoulli trials with bias $\langle \min\{c\cdot p_i, 1\}\rangle$ where $c>1$ and $c \cdot p_i \leq 1-\frac{1}{k}$ for every $i$,
using one additional $N$-bit word overhead for each answer of $k\cdot c$ independent copies of $\mathcal{P}$.
\end{restatable}
\begin{proof}
By applying the following form of Bernoulli's inequality, we have
\begin{align*}
(1-p_{i})^{k\cdot c} \leq 1-\frac{k\cdot c\cdot p_{i}}{1+(k\cdot c-1)\cdot p_{i}}
= 1-\frac{k\cdot c\cdot p_{i}}{1+k\cdot c\cdot p_{i}-p_{i}}
\leq 1-\frac{k\cdot c\cdot p_{i}}{1+(k-1)} = 1-c\cdot p_{i}
\end{align*}
That is, the probability that at least one of the implementations report an index $i$ is $1-(1-p_{i})^{k\cdot c} \geq c\cdot p_i$, as required.
Then, the process $\mathcal{P}^c$ simply reports $i$ with probability $(c\cdot p_i) / (1-(1-p_{i})^{k\cdot c})$ or discard $i$ otherwise.
Again, we only require $N$-bit of precision for each computation, and thus one random $N$-bit word suffices.
\end{proof}

In Phase 1, we may apply this reduction only when the condition $c \cdot p_i \leq 1-\frac{1}{k}$ is satisfied.
For lower value of $p_i = 1/D^2$, namely for distance $D < \sqrt{c/(1-1/k)} = O(\sqrt{c})$, we may afford to generate the Bernoulli trials one-by-one as $c$ is $\poly(\log n)$.
We also note that the degree of each vertex is clearly bounded by $O(\log n)$ with high probability, as its expectation is scaled up by at most a factor of $c$.
Thus, we obtain the following corollary.
\begin{restatable}{corollary}{res:large-c-corol}
There exists an algorithm that generates a random graph from Kleinberg's Small World model with edge probabilities $c/(\func{dist}(u,v))^2$ where $c = \poly(\log n)$,
using $O(\log^2 n)$ time and random words per \func{All-Neighbors} query with high probability.
\end{restatable}

\clearpage
\section{Omitted Proofs for the Dyck Path Implementation}%
\label{sec:dyck_appendix}

\begin{theorem}
\label{thm:number_of_dyck_paths}
There are $\frac{1}{n+1}\binom{2n}{n}$ Dyck paths for length $2n$ (construction from \cite{catalan_book}).
\end{theorem}
\begin{proof}[Proof from \cite{catalan_book}]
Consider all possible sequences containing $n+1$ up-steps and $n$ down-steps with the restriction that the first step is an up-step.
We say that two sequences belong to the same \emph{class} if they are cyclic shifts of each other.
Because of the restriction, the total number of sequences is $\binom{2n}{n}$ and each class is of size $n+1$.
Now, within each class, exactly one of the sequences is such that the prefix sums are \emph{strictly greater} than zero.
From such a sequence, we can obtain a Dyck sequence by deleting the first up-step.
Similarly, we can start with a Dyck sequence, add an initial up-step and consider all $n+1$ cyclic shifts to obtain a \emph{class}.
This bijection shows that the number of Dyck paths is $ \frac{1}{n+1} \binom{2n}{n}$.
\end{proof}

\subsection{Approximating Close-to-Central Binomial Coefficients}%
\label{sec:approximating_close_to_central_binomial_coefficients}
We start with Stirling's approximation which states that
\[
m! = \sqrt{2\pi m}\left( \frac me\right)^m\left( 1 + \mathcal O\left( \frac 1m\right)\right)\\
\]
We will also use the logarithm approximation when a better approximation is required:
\begin{align}
    \label{eq:log_factorial_approximation}
\log (m!) = m\log m -m + \frac 12 \log(2\pi m) + \frac{1}{12 m} - \frac{1}{360 m^3} + \frac{1}{1260 m^5} - \cdots
\end{align}

This immediately gives us an asymptotic formula for the central binomial coefficient as:
\begin{lemma}
\label{lem:central_binomial_coefficient}
The central binomial coefficient can be approximated as:
\[
\binom{n}{n/2} = \sqrt{\frac{2}{\pi n}}2^n\left( 1 + \mathcal O\left( \frac 1n\right)\right)
\]
\end{lemma}

Now, we consider a ``off-center'' Binomial coefficient $\binom{n}{k}$ where $k = \frac{n+c\sqrt n}{2}$.
\begin{lemma}
\label{lem:close_to_central_binomial_coefficient}
\[
\binom{n}{k} = \binom{n}{n/2} e^{-c^2/2}exp\left( \mathcal O(c^3/\sqrt n)\right)
\]
\end{lemma}
\begin{proof}[Proof from \cite{asymptopia}]
We consider the ratio: $R = \binom{n}{k}/\binom{n}{n/2}$:
\begin{align}
R &= \frac{\binom{n}{k}}{\binom{n}{n/2}}
= \frac{(n/2)!(n/2)!}{k!(n-k)!} = \mathlarger\prod\limits_{i=1}^{c\sqrt n/2}\frac{n/2-i+1}{n/2+i}\\
\implies \log R &= \mathlarger\sum\limits_{i=1}^{c\sqrt n/2}\log\left(\frac{n/2-i+1}{n/2+i} \right)\\
&= \mathlarger\sum\limits_{i=1}^{c\sqrt n/2} - \frac{4i}{n} + \mathcal O\left( \frac{i^2}{n^2}\right)
= - \frac{c^2n}{2n} + \mathcal O\left( \frac{(c\sqrt n)^3}{n^2}\right)
= - \frac{c^2}{2} + \mathcal O\left( \frac{c^3}{\sqrt n}\right)\\
\implies \binom{n}{k} &= \binom{n}{n/2} e^{-c^2/2}exp\left( \mathcal O(c^3/\sqrt n)\right)
\end{align}
\end{proof}

\subsection{Dyck Path Boundaries and Deviations}%
\label{sec:dyck_path_boundaries_and_deviations}
\begin{lemma}
\label{lem:random_walk_deviation_bound}
Given a random walk of length $2n$ with exactly $n$ up and down steps,
consider a contiguous \emph{sub-path} of length $2B$ that comprises of $U$ up-steps and $D$ down-steps i.e. $U + D = 2B$.
Both $|B-U|$ and $|B-D|$ are $\mathcal O(\sqrt{B\log n})$ with probability at least $1-1/n^4$.
\end{lemma}
\begin{proof}
We consider the random walk as a sequence of unbiased random variables $\{X_i\}_{i=1}^{2n}\in \{0,1\}^{2n}$
with the constraint $\sum\limits_{i=1}^{2n}X_i = n$.
Here, $1$ corresponds to an up-step and $0$ corresponds to a down step.
Because of the constraint, $X_i, X_j$ are negatively correlated for $i \not= j$ which allows us to apply Chernoff bounds.
Now we consider a sub-path of length $2B$ and let $U$ denote the sum of the $X_i$s associated with this subpath.
Using Chernoff bound with $\mathbb E[X] = B$, we get:
\[
\mathbb P\left[ |U-B| < 3\sqrt{B \log n}\right]
= \mathbb P\left[ |U-B| < 3\frac{\sqrt{\log n}}{\sqrt B}B\right] < e^{-\frac{9\log n}{3}} = \frac{1}{n^3}
\]
Since $U$ and $D$ are symmetric, the same argument applies.
\end{proof}

\begin{corollary}
\label{cor:random_walk_deviation_bound}
With high probability, every contiguous sub-path of length $2B$ (with $U$ up and $D$ down steps such that $U+D=2B$) in the random walk
satisfies the property that $|B-U|$ and $|B-D|$ are upper bounded by $c\sqrt{B\log n}$ w.h.p. $1-1/n^2$ (for some constant $c$).
\end{corollary}
\begin{proof}
We can simply apply Lemma~\ref{lem:random_walk_deviation_bound} and union bound over all $n^2$ possible contiguous sub-paths.
\end{proof}

\DyckPathDeviationBound*
\begin{proof}
As a consequence of Theorem~\ref{thm:number_of_dyck_paths}, we can sample a Dyck path
by first sampling a \emph{balanced} random walk with $n$ up steps and $n$ down steps and adding an initial up step.
We can then find the corresponding Dyck path by taking the unique cyclic shift that satisfies the Dyck constraint (after removing the initial up-step).
Any interval in a cyclic shift is the union of at most two intervals in the original sequence.
This affects the bound only by a constant factor.
So, we can simply use Corollary~\ref{cor:random_walk_deviation_bound} to finish the proof.
Notice that since $|U-D| \le |B-U|+|B-D|$, $|U-D| = \mathcal O(\sqrt{B\log n})$ comes for free.
\end{proof}

\DyckPathIrrelevantBoundary*
\begin{proof}
We use $\mathcal D$ and $\mathcal R$ to denote the set of all valid Dyck paths and all random sequences respectively.
Clearly, $\mathcal D\subseteq \mathcal R$.
Since the random walk/sequence distribution is uniform on $\mathcal R$, and by Corollary~\ref{cor:random_walk_deviation_bound}
we see that at least $1-1/n^2$ fraction of the elements of $\mathcal R$ do not violate the boundary constraint.
Therefore, $|\mathcal D|\ge (1-1/n^2)|\mathcal R|$,
and so the $L_1$ distance between the uniform distributions on $\mathcal D$ and $\mathcal R$ is $\mathcal O(1/n^2)$.
\end{proof}

\subsection{Estimating the Sampling Probabilities}%
\label{sec:computing_probabilities}
\begin{lemma}
\label{lem:probability_approximation_oracle}
Given a Dyck sub-path problem within a global Dyck path of size $2n$ and a probability expression of the form
$p_d = \frac{S_{left}\cdot S_{right}}{S_{total}}$, there exists a $\poly(\log n)$ time oracle that returns a
$\left( 1\pm 1/n^2\right)$ multiplicative approximation to $p_d$ if $p_d = \Omega(1/n^2)$ and returns $0$ otherwise.
\end{lemma}
\begin{proof}
We first compute a $1+1/n^3$ multiplicative approximation to $\ln p_d$.
Using $\mathcal O(\log n)$ terms of the series in Equation~\ref{eq:log_factorial_approximation},
it is possible to estimate the logarithm of a factorial up to $1/n^c$ additive error.
So, we can use the series expansion from Equation~\ref{eq:log_factorial_approximation} up to $\mathcal O(\log n)$ terms.
The additive error can also be cast as multiplicative since factorials are large positive integers.

The probability $p_d$ can be written as an arithmetic expression involving sums and products of a constant number of factorial terms.
Given a $1\pm1/n^c$ multiplicative approximation to $l_a = \ln a$ and $l_a = \ln b$, we wish to approximate $\ln(ab)$ and $\ln(a+b)$.
The former is trivial since $\ln(ab) = l_a + l_b$.
For the latter, we assume $a>b$ and use the identity $\ln(a+b) = \ln a + \ln(1+b/a)$ to note that it suffices to approximate $\ln(1+b/a)$.
We define $\widetilde l_a = l_a\cdot(1\pm \mathcal O(1/n^c))$ and $\widetilde l_b = l_b\cdot(1\pm \mathcal O(1/n^c))$.
In case $\widetilde l_b-\widetilde l_a < -c\ln n\implies b/a < 1/n^c$,
we approximate $\ln(a+b)$ by $\ln a$ since $\ln(1+b/a) = \mathcal O(1/n^c)$ in this case.

Otherwise, we notice that $\max(a,b) = \mathcal O\left((n!)^2\right)\implies \max(l_a, l_b) = o(n^3)\implies l_a-l_b = o(n^3)$.
This is true because if we write out the expression for $p_d = S_{left}\cdot S_{right}/S_{total}$ in terms of sums and products of factorials,
the largest possible value will be a product of at most two terms that are present in the numerator/denominator of a binomial term
(see the expression for generalized Catalan numbers in Equation~\ref{eq:catalan_trapezoid}).
Hence, the claim $\max(a,b) = \mathcal O\left((n!)^2\right)$ follows from the fact that
the numerators/denominators of the relevant binomial coefficient in Equation~\ref{eq:catalan_trapezoid} are at most $n!$.
Using this fact, we obtain the following:
\[
1+e^{\widetilde l_b-\widetilde l_a} = 1+\frac ba\cdot \mathlarger e^{\mathcal O\left( \frac{l_b-l_a}{n^c}\right)}
= 1+\frac ba\cdot\left( 1 \pm \mathcal O\left( \frac{1}{n^{c-3}} \right)\right)
= \left( 1+\frac ba\right)\cdot\left( 1 \pm \mathcal O\left( \frac{1}{n^{c-3}} \right)\right)
\]
In other words, the value of $c$ decreases every time we have a sum operation.
Since there are only a constant number of such arithmetic operations in the expression for $p_d$,
we can set $c$ to be a high enough constant when approximating the factorials,
cewhich allows us to obtain the desired $1\pm1/n^3$ multiplicative approximation to $\ln p_d$.
If $\ln p_d < -3\ln n$, we approximate $p_d = 0$.
Otherwise, we can exponentiate the approximation to obtain $\widetilde p_d=p_d\cdot e^{-\mathcal O(\ln n/n^3)} = p_d\left( 1\pm\mathcal O(1/n^2)\right)$.
\end{proof}

\subsection{Omitted Proofs from Section~\ref{sec:implementing_height_queries}: Sampling the Height}%
\label{sec:appendix_implementing_height_queries}
\begin{lemma}
\label{lem:taylor_bound}
For $x < 1$ and $k\ge 1$, we claim that $1-kx < (1-x)^k < 1 - kx + \frac{k(k-1)}{2}x^2$
\end{lemma}

\DLeftBound*
\begin{proof}
This involves some simple manipulations.
\begin{align}
S_{left} &= {{B}\choose{D-d}}-{{B}\choose{D-d-k}}\\
&= {{B}\choose{D-d}}\cdot \left[1-\frac{(D-d)(D-d-1)\cdots(D-d-k+1)}{(B-D-d+k)(B-D-d+k-1)\cdots(B-D-d+1)}\right]\\
&\le {{B}\choose{D-d}}\cdot \left[1-\left(\frac{D-d-k+1}{B-D+d+k}\right)^k\right]\\
&\le {{B}\choose{D-d}}\cdot \left[1-\left(\frac{U+d+k-(U-D+d+k-1)}{U+d+k}\right)^k\right]\\
&\le {{B}\choose{D-d}}\cdot \left[1-\left(\frac{U+d+k-\Bo(\sqrt{B\log n})}{U+d+k}\right)^k\right]\\
&\le \Theta \left( \frac{k\sqrt{\log n}}{\sqrt{B}} \right) \cdot{{B}\choose{D-d}}
\end{align}
\end{proof}

\DRightBound*
\begin{proof}
\begin{align}
S_{right} &= {{B}\choose{U-d}}-{{B}\choose{U-d-k'}}\\
&= {{B}\choose{U-d}}\cdot \left[1-\frac{(U-d)(U-d-1)\cdots(U-d-k'+1)}{(B-U-d+k')(B-U-d+k'-1)\cdots(B-U-d+1)}\right]\\
&\le {{B}\choose{U-d}}\cdot \left[1-\left(\frac{U-d-k'+1}{B-U+d+k'}\right)^{k'}\right]\\
&\le {{B}\choose{U-d}}\cdot \left[1-\left(\frac{2D-U-d-k+1}{2U-D+k+d}\right)^{k'}\right]\\
&\le {{B}\choose{U-d}}\cdot \left[1-\left(\frac{U+k+d - (2U-2D+2d+2k-1)}{U+k+d}\right)^{k'}\right]\\
&\le {{B}\choose{U-d}}\cdot \left[1-\left(\frac{U+k+d - \Bo(\sqrt{B\log n})}{U+k+d}\right)^{k'}\right]\\
&\le \Theta\left(\frac{k'\sqrt{\log n}}{\sqrt{B}}\right)\cdot{{B}\choose{U-d}}
\end{align}
\end{proof}

\begin{restatable}{lemma}{DTotalBasicBound}
\label{lem:DTotalBasicBound}
$S_{tot} \ge {{2B}\choose{2D}}\cdot \left[1-\left(1 - \frac{k'}{2U+1}\right)^k\right]$.
\end{restatable}
\begin{proof}
\begin{align}
S_{tot} &= {{2B}\choose{2D}}-{{2B}\choose{2D-k}}\\
&= {{2B}\choose{2D}}\cdot \left[1-\frac{(2D)(2D-1)\cdots(2D-k+1)}{(2B-2D+k)(2B-2D+k-1)\cdots(2B-2D+1)}\right]\\
&\ge {{2B}\choose{2D}}\cdot \left[1-\left(\frac{2D-k+1}{2B-2D+1}\right)^k\right]\\
&\ge {{2B}\choose{2D}}\cdot \left[1-\left(\frac{2U-(2U-2D+k-1)}{2U+1}\right)^k\right]\\
&\ge {{2B}\choose{2D}}\cdot \left[1-\left(\frac{(2U+1)-k'}{2U+1}\right)^k\right]\\
&\ge {{2B}\choose{2D}}\cdot \left[1-\left(1 - \frac{k'}{2U+1}\right)^k\right]\\
\end{align}
\end{proof}

\DTotalFarBoundary*
\begin{proof}
When $kk' > 2U + 1 \implies k > \frac{2U+1}{k'}$,
we will show that the above expression is greater than $\frac 12 \binom{2B}{2D}$.
Defining $\nu = \frac{2U+1}{k'} > 1$, we see that $(1-\frac 1\nu)^k \le (1-\frac 1\nu)^\nu$.
Since this is an increasing function of $\nu$ and since the limit of this function is $\frac 1e$,
we conclude that
\[
1-\left(1 - \frac{k'}{2U+1}\right)^k > \frac 12
\]
\end{proof}

\DTotalNearBoundary*
\begin{proof}
Now we bound the term $1-\left(1-\frac{k'}{2U+1}\right)^k$, given that $kk'\le 2U+1\implies\frac{kk'}{2U+1} \le 1$.
Using Bernoulli's inequality, we know that $(1 - x)^n \le 1/(1+nx)$ for $x\in [0, 1]$ and $n\in \mathbb N$.
Since the term $k'/(2U+1)$ is positive and $\le 1$ (since $kk' \le 2U+1$), we can apply this as follows:
\begin{align}
&1 - \left(1 - \frac{k'}{2U+1}\right)^k\\
&\ge 1 - \frac{1}{1 + \frac{kk'}{2U+1} } = \frac{\frac{kk'}{2U+1}}{1 + \frac{kk'}{2U+1} } = \frac{kk'}{2U+1}\cdot \frac{1}{1 + \frac{kk'}{2U+1}} \\
&\ge \frac{kk'}{2U+1}\cdot \frac{1}{2} &\left( \textrm{Since }\frac{kk'}{2U+1}\le 1\right) \\
&\ge \frac{kk'}{\Theta(B)}
\end{align}
\end{proof}

\subsection{Omitted Proofs from Section~\ref{sec:supporting_first_return_queries}: \func{First-Return} Queries}
\label{sec:omitted_supporting_first_return_queries}
%\ReturnProbabilityBoundNotNormalized*
%\begin{proof}
%This follows from the fact that both $r_{left}(d)$ and $r_{right}(d)$ are $\mathcal O(\log^2 n)$.
%\end{proof}

\ReturnDLeftBound*
\begin{proof}
In what follows, we will drop constant factors:
Refer to Figure~\ref{fig:dyck_mandatory_boundary_sampling} for the setup.
The left section of the path reaches one unit above the boundary (the next step would make it touch the boundary).
The number of up-steps on the left side is $d$ and therefore the number of down steps must be $d + k - 2$.
This inclues $d$ down steps to cancel out the upwards movement, and $k-2$ more to get to one unit above the boundary.
The boundary for this section is $k' = k-1$. This gives us:
\begin{align}
S_{left}(d) &= \binom{2d+k-2}{d} - \binom{2d+k-2}{d-1}\\
&= \binom{2d+k-2}{d}\left[ 1-\frac{d}{d+k-1}\right] = \binom{2d+k-2}{d}\frac{k-1}{d+k-1}
\end{align}
Now, letting $z = 2d+k-2$,  we can write $d = \frac{z-(k-2)}{2} = \frac{z-\frac{k-2}{\sqrt z}\sqrt z}{2}$.
Using Lemma~\ref{lem:DyckPathDeviationBound}, we see that $\frac{k-2}{\sqrt z}$ should be $\mathcal O(\sqrt{\log n})$.
If this is not the case, we can simply return $0$ because the probability associated with this value of $d$ is negligible.
Since $z > \log^4 n$, we can apply Lemma~\ref{lem:close_to_central_binomial_coefficient} to get:
\[
S_{left}(d) = \Theta\left( \binom{z}{z/2} e^{\frac{(k-2)^2}{2z}} \frac{k-1}{d+k-1} \right)
= \Theta\left( \frac{2^{2d+k}}{\sqrt d} e^{\frac{(k-2)^2}{2(2d+k-2)}} \frac{k-1}{d+k-1} \right)
\]
\end{proof}

\ReturnDRightBound*
\begin{proof}
The right section of the path starts from the original boundary.
Consequently, the boundary for this section is at $k' = 1$.
The number of up-steps on the right side is $U-d$ and the number of down steps is $D-d-k+1$.
This gives us:
\begin{align}
S_{right}(d) &= \binom{U+D-2d-k+1}{U-d} - \binom{U+D-2d-k+1}{U-d+1}\\
&= \binom{U+D-2d-k+1}{U-d}\left[ 1-\frac{D-d-k-1}{U-d+1}\right]\\
&= \binom{U+D-2d-k+1}{U-d}\frac{U-D+k}{U-d+1}
\end{align}
Now, letting $z = U+D-2d-k+1$,  we can write $U-d = \frac{z+(U-D+k-1)}{2} = \frac{z+\frac{U-D+k-1}{\sqrt z}\sqrt z}{2}$.
Using Lemma~\ref{lem:DyckPathDeviationBound}, we see that $\frac{k-2}{\sqrt z}$ should be $\mathcal O(\sqrt{\log n})$.
If this is not the case, we can simply return $0$ because the probability associated with this value of $d$ is negligible.
Since $z > \log^4 n$, we can apply Lemma~\ref{lem:close_to_central_binomial_coefficient} to get:
\begin{align}
S_{right}(d) &= \Theta\left( \binom{z}{z/2} e^{\frac{(U-D+k-1)^2}{2z}} \frac{U-D+k}{U-d+1} \right)\\
&= \Theta\left( \frac{2^{U+D-2d-k}}{\sqrt{U+D-2d-k}} e^{\frac{(U-D+k-1)^2}{2(U+D-2d-k+1)}} \frac{U-D+k}{U-d+1} \right)
\end{align}
\end{proof}

\clearpage
\section{Additional related work}
\label{sec:additional_related_work}

\paragraph*{Random graph models}
The Erd\"{o}s-R\'{e}nyi model, given in \cite{er}, is one of the most simple theoretical random graph model,
yet more specialized models are required to capture properties of real-world data.
The Stochastic Block model (or the planted partition model) was proposed in \cite{holland} originally for modeling social networks;
nonetheless, it has proven to be an useful general statistical model in numerous fields,
including recommender systems \cite{rec0,rec1}, medicine \cite{med0}, social networks \cite{social0,social1},
molecular biology \cite{bio0,bio1}, genetics \cite{gene0,gene1,gene2}, and image segmentation \cite{img0}.
Canonical problems for this model are the community detection and community recovery problems:
some recent works include \cite{chin2015stochastic,mossel2015reconstruction,abbe2015community,abbe2016exact};
see e.g., \cite{abbe} for survey of recent results.
The study of Small-World networks is originated in \cite{watts1998collective} has frequently been observed,
and proven to be important for the modeling of many real world graphs such as social networks \cite{small0, small1},
brain neurons \cite{bassett2006small}, among many others.
Kleinberg's model on the simple lattice topology (as considered in this paper) imposes a geographical that allows navigations,
yielding important results such as routing algorithms (decentralized search) \cite{kleinberg, klein}.
See also e.g., \cite{newman2000models} and Chapter 20 of \cite{easley2010networks}.

\paragraph*{Generation of random graphs}
The problem of local-access implementation of random graphs has been considered in the aforementioned work \cite{huge_old,sparse,reut},
as well as in \cite{mansour} that locally generates out-going edges on bipartite graphs while minimizing the maximum in-degree.
The problem of generating full graph instances for random graph models have been frequently considered in many models of computations,
such as sequential algorithms \cite{milo2003uniform,er_gen,nobari2011fast,miller2011efficient},
and the parallel computation model \cite{alam2017parallel}.

\paragraph*{Query models}
In the study of sub-linear time graph algorithms where reading the entire input is infeasible,
it is necessary to specify how the algorithm may access the input graph,
normally by defining the type of queries that the algorithm may ask about the input graph;
the allowed types of queries 
can greatly affect the performance of the algorithms.
While \func{Next-Neighbor} query is only recently considered in \cite{reut},
there are other query models providing a neighbor of a vertex,
such as asking for an entry in the adjacency-list representation \cite{goldreich1997property},
or traversing to a random neighbor. On the other hand,
the
\func{Vertex-Pair} query is common in the study of dense graphs as accessing the adjacency matrix representation \cite{goldreich1998property}.
The \func{All-Neighbors} query has recently been explicitly considered in local algorithms \cite{feige2017probe}.

Other constructions of huge pseudorandom functions that are permutations or random hash functions were given in \cite{luby_rackoff, naor, mansour}.

\end{document}